\documentclass[journal, onecolumn]{IEEEtran}
\usepackage{ amsfonts, amsthm, MnSymbol, algorithm, algpseudocode, graphicx, enumerate, mathtools, bm, nicematrix, bbm}
\usepackage{xcolor}
\usepackage[normalem]{ulem} % for strike through

\theoremstyle{definition}
\newtheorem{definition}{Definition}%[section]
\newtheorem{remark}{Remark}%[definition]
\newtheorem{theorem}{Theorem}%[section]
\newtheorem{corollary}{Corollary}%[theorem]
\newtheorem{lemma}{Lemma}%[theorem]
\newtheorem{prop}{Proposition}%[theorem]
\newtheorem{example}{Example}%[section]
\usepackage{graphicx}
\usepackage[colorlinks=true, allcolors=blue]{hyperref}

\newcommand{\tgoppa}{\Gamma(\mathcal{L},g,\bm{\mathbbm{t}},\bm{\mathbbm{h}},\bm{\mathbbm{\eta}})}

\begin{document}
\title{Cryptographic Applications of Twisted Goppa Codes}
\author{Harshdeep Singh$^{1,2}$, Anuj Kumar Bhagat$^{1}$, Ritumoni Sarma$^1$ and Indivar Gupta$^2$% <-this % stops a space

\thanks{*This work was partially supported by DRDO, Government of India and CSIR-HRDG}% <-this % stops a space
\thanks{$^{1}$Department of Mathematics, Indian Institute of Technology Delhi, Hauz Khas, Delhi-110016, India
       % {\tt\small harshdeep.sag@gov.in}
       }%
\thanks{$^{2}$Scientific Analysis Group, Defence R\&D Organisation, Metcalfe House, Delhi-110054, India
       % {\tt\small p.misra at ieee.org}
       }%
\thanks{Dr. Ritumoni Sarma is Professor at Department of Mathematics, Indian Institute of Technology Delhi}
}

\markboth{Journal of ...}%
{Shell \MakeLowercase{\textit{et al.}}: \title}

\maketitle
%\thispagestyle{empty}
%\pagestyle{empty}

%%%%%%%%%%%%%%%%%%%%%%%%%%%%%%%%%%%%%%%%%%%%%%%%%%%%%%%%%%%%%%%%%%%%%%%%%%%%%%%%
\begin{abstract}
This article defines multi-twisted Goppa (MTG) codes as subfield subcodes of duals of multi-twisted Reed-Solomon (MTRS) codes and examines their properties. 
We show that if $t$ is the degree of the MTG polynomial defining an MTG code, its minimum distance is at least $t + 1$ under certain conditions. 
Extending earlier methods limited to single twist at last position, we use the extended Euclidean algorithm to efficiently decode MTG codes with a single twist at any position, correcting up to $\left\lfloor \tfrac{t}{2} \right\rfloor$ errors.
This decoding method highlights the practical potential of these codes within the Niederreiter public key cryptosystem (PKC). 
Furthermore, we establish that the  Niederreiter PKC based on MTG codes is secure against partial key recovery attacks. 
Additionally, we also reduce the public key size by constructing quasi-cyclic MTG codes using a non-trivial automorphism group.
\end{abstract}
% \tableofcontents

%%%%%%%%%%%%%%%%%%%%%%%%%%%%%%%%%%%%%%%%%%%%%%%%%%%%%%%%%%%%%%%%%%%%%%%%%%%%%%%%
\section{INTRODUCTION}
Reed-Solomon (RS) codes are derived from the evaluation of certain polynomials at distinct points of a finite field. 
Twisted Reed-Solomon codes (TRS) are the generalizations of RS codes by adding a single twisted term, first introduced by Beelen et al. in \cite{Beelen2017}. 
Subsequent contributions by Beelen et al. in \cite{Beelen2022} further expanded this framework by introducing TRS with more than one twist; 
such TRS codes are now referred to as multi-twisted Reed-Solomon (MTRS) codes in the literature. 
The authors in \cite{Beelen2022} also produced classes of MTRS codes, which are MDS and are not equivalent to a generalized Reed-Solomon code. 
Additionally, Beelen et al. in \cite{BeelenStructural2018} showed that TRS codes have good structural properties to be used in cryptography and in \cite{Beelen2022} gave a description of a decoding algorithm for decoding MTRS codes. 
Hence, MTRS codes can replace Goppa codes in the McEliece PKC. 
However, Lavauzelle et al. in \cite{Lavauzelle2020} gave an efficient key recovery algorithm for a McEliece PKC based on a TRS code using subfield subcodes of the TRS code.

Twisted Goppa codes were recently introduced by Sui and Yue as subfield subcodes of a TRS code~\cite{Sui2023}, who also showed that these codes exhibit properties similar to those of classical Goppa codes. 
Additionally, they proved that under certain conditions, the minimum distance of a twisted Goppa code is at least $t+1$ and gave an efficient decoding algorithm for twisted Goppa codes that corrects $\lfloor \frac{t-1}{2}\rfloor$ errors, where $t$ denotes the degree of Goppa polynomial. 
Consequently, twisted Goppa codes can also serve as suitable replacements for classical Goppa codes in the McEliece public-key cryptosystem. 
With a code having minimum distance $d \ge t + 1$, up to $\lfloor \tfrac{t}{2} \rfloor$ errors can be corrected. 
In \cite{Sun2025}, Sun et al.\ proposed an efficient decoding algorithm that achieves this bound for twisted Goppa codes. 
Both \cite{Sui2023} and \cite{Sun2025} adopt a particular formulation for twisted Goppa codes, in which the twist is introduced at the last position considering parity check matrix of the code, inherently limiting to particular class of codes. 
Motivated by this, we extend the construction to the subfield subcodes of duals of multi-twisted Reed-Solomon (MTRS) codes, referred to as \emph{multi-twisted Goppa codes} in this work.
These codes generalize the notion of twisted Goppa codes with no restriction on the number of twists and their positions.

%Twisted Goppa codes were recently introduced by Sui and Yue \cite{c1}, in which they provide an efficient decoding algorithm. This enables one to study the cryptographic applications of these codes. The concise definition used by authors involve a single twist in Goppa codes at a fixed position, this provides bound on error correction capacity utilizing the work of Peter Beelen \cite{p-beelen-2018}. However, the definition is now more generalised in our work, with no restriction on number of twists and their positions. Utilizing the foundations laid in \cite{h-k-2023}, we provide necessary and sufficient conditions on the error correction capacity of Goppa codes. 
%Additionally, we provide the error correction algorithm for the new family of codes. 

%The multivariate Goppa codes were also recently introduced by L\`{o}pez et al. in \cite{lopez-2023}, by considering them as subfield subcodes of Augmented Cartesian codes. They studied the properties of duality and dimension of their hulls, enabling these codes to be used in entanglement-assisted quantum error-correction.

%The concept of extending single variable Goppa codes to multiple variables paves the way for extending our constructed codes to multivariate. Thus, our work improves and generalises the previous work by \cite{}.
In this article, we define a multi-twisted Goppa (MTG) code and prove that it is a subfield subcode of dual of an MTRS code. 
We show that the properties of an MTG code are similar to the properties of a single twist Goppa code and hence similar to the properties of a classical Goppa code. 
Utilizing the foundations laid in \cite{h-k-2023}, we provide necessary and sufficient conditions on the error correction capacity of Goppa codes. 
We provide an efficient decoding algorithm for codes whose dual is an MDS MTRS code and consequently provide an efficient algorithm to decode twisted Goppa codes (with no restriction on twist position). 
As a result, MTG codes can be employed in Niederreiter/ McEliece PKC.

The remainder of this article is organized as follows. The next section presents all the necessary preliminaries, including the definitions and properties of multi-twisted Goppa codes. 
In Section~\ref{Section: Twisted Goppa Codes}, we develop a decoding algorithm based on the Extended Euclidean Algorithm for twisted Goppa codes with a twist at an arbitrary position and analyze its computational complexity. 
Section~\ref{Section: Cryptosystem based on Twisted Goppa codes} introduces a cryptosystem based on twisted Goppa codes. 
Furthermore, in the same section, we demonstrate that the cryptosystem constructed from multi-twisted Goppa codes is likely to be secure against best-known partial key-recovery attacks and present quasi-cyclic twisted Goppa codes to reduce the size of the public key. 

\section{PRELIMINARIES}
Let $\mathbb{F}_q$ be the finite field of size $q$. Let $k<q$ be a positive integer.
Considering $\bm{\mathbbm{t}} = (t_1,t_2,\dots,t_{\ell})$ with $1\leq t_i \leq q-k$, $\bm{\mathbbm{h}}=(h_1,h_2,\dots,h_{\ell})$ with $0\leq h_i<h_{i+1}<k$ for each $i$, and $\bm\eta= (\eta_1,\eta_2,\dots, \eta_{\ell}) \in (\mathbb{F}_q^{\ast})^\ell$, set of $(k,\bm{\mathbbm{t}}, \bm{\mathbbm{h}}, \bm{\eta})$-twisted polynomials over $\mathbb{F}_q$, is defined as
\begin{equation*}
	\mathcal{V}_{k, \bm{\mathbbm{t}}, \bm{\mathbbm{h}}, \bm{\eta}}:= \left \{\sum\limits_{i=0}^{k-1}a_i x^i + \sum\limits_{j=1}^\ell \eta_j a_{h_j} x^{k-1+t_j}: a_i \in \mathbb{F}_q \text{ for each }i \right \}.
\end{equation*}
\begin{definition}
        [Multi-twisted Reed Solomon code]\label{def::mtrs}
        \cite{Beelen2022}
    	Let $\alpha_1, \alpha_2, \ldots, \alpha_n$ denote distinct elements of the finite field $\mathbb{F}_q$. Let the parameters $k, \bm{\mathbbm{t}}, \bm{\mathbbm{h}}, \bm{\eta}$ be defined as above, with the additional constraint that $k < n\le q$.
    The \emph{multi-twisted Reed-Solomon (MTRS) code} is defined as
    \begin{equation}\label{eqn3}
        \mathcal{C}_k(\bm{\alpha}, \bm{\mathbbm{t}}, \bm{\mathbbm{h}}, \bm{\eta})
        := \{(f(\alpha_1), f(\alpha_2), \dots, f(\alpha_n)) : f(x) \in \mathcal{V}_{k, \bm{\mathbbm{t}}, \bm{\mathbbm{h}}, \bm{\eta}}\},
    \end{equation}
    where $\deg(f(x)) \leq k - 1 + t_\ell < n$. 
    This construction generalizes the classical Reed-Solomon code by incorporating multiple twists determined by the tuples 
    $\bm{\mathbbm{t}}$, $\bm{\mathbbm{h}}$, and $\bm{\eta}$.
    Furthermore, by introducing a nonzero \emph{multiplier vector} $\bm{v} = (v_1, v_2, \dots, v_n) \in (\mathbb{F}_q^*)^n$, 
    we obtain the \emph{generalized multi-twisted Reed-Solomon (MT-GRS) code}, defined as
    \begin{equation}\label{eqn::MT GRS}
        \mathcal{C}_k(\bm{\alpha}, \bm{\mathbbm{t}}, \bm{\mathbbm{h}}, \bm{\eta}, \bm{v})
        := \{(v_1 f(\alpha_1), v_2 f(\alpha_2), \dots, v_n f(\alpha_n)) : f(x) \in \mathcal{V}_{k, \bm{\mathbbm{t}}, \bm{\mathbbm{h}}, \bm{\eta}}\}.
    \end{equation}
\end{definition}
The inclusion of $\bm{v}$ allows a scaling of codeword coordinates, thereby extending the MTRS family to a broader class of codes that preserve linearity while enhancing structural flexibility. The next result provides a necessary and sufficient conditions for these codes to be MDS.

\begin{theorem}\cite{h-k-2023}\label{lemma}
	Let $\alpha_1, \alpha_2, \dots, \alpha_n \in \mathbb{F}_q$ be distinct and $k<n<q$ and $\bm{\mathbbm{t}}=(t_1,t_2,\dots,t_\ell)$, $\bm{\mathbbm{h}}=(h_1,h_2,\dots,h_\ell)$ and $\bm \eta=(\eta_1,\eta_2,\dots, \eta_\ell)$ be as per Definition \ref{def::mtrs}.
	Then the multi-twisted RS code $\mathcal{C}_{k}(\bm\alpha, \bm{\mathbbm{t}},\bm{\mathbbm{h}},\bm \eta)$ is MDS if and only if 
	for each $\mathcal{I} \subset \{1,2,\dots ,n\}$ with cardinality $k$ correspondingly the polynomial $\prod\limits_{i\in \mathcal{I}}(x-\alpha_i)= \sum\limits_{i} \sigma_i x^i$ with $\sigma_i =0$ for $i<0$,	
	the matrix
	\begin{equation*}
		\text{diag }\Big( \underset{\underset{t_{\ell}^{th}}{\uparrow} }{\eta_{\ell}^{-1}}, 1,\dots, 1, \underset{\underset{t_{\ell-1}^{th}}{\uparrow} }{\eta_{\ell-1}^{-1}}, 1,\dots, 1, \underset{\underset{t_{1}^{th}}{\uparrow} }{\eta_{1}^{-1}}, 1,\dots, \underset{\underset{1^{st}}{\uparrow} }{1} \Big)
		\cdot A_{\mathcal{I}} +B_\mathcal{I}
	\end{equation*}
	is non-singular. 
	Here, $A_\mathcal{I}$ and $B_\mathcal{I}$ are lower and upper-triangular $t_\ell\times t_\ell$ matrices, respectively, given by: 
	\begin{equation*}
		A_\mathcal{I} = \left(\begin{array}{lllll}
			1& 0 &0 & \cdots & 0\\
			\sigma_{k-1}& 1 & 0 &\cdots &0\\
			\sigma_{k-2}& \sigma_{k-1} & 1 &\cdots &0\\
			~~\vdots&~\vdots&~\vdots&~~\ddots&~\vdots\\
			\sigma_{k - t_\ell +1} & \cdots &\cdots & \sigma_{k-1}&1\\
		\end{array}\right) \text{ and}
	\end{equation*}
		% \scalebox{0.6}{\begin{equation*}
		% 	B_\mathcal{I} = \left(\begin{array}{ccccccccc}
		% 		-\sigma_{h_\ell-t_\ell+1}& -\sigma_{h_\ell-t_\ell+2} &\cdots & \cdots & \cdots & \cdots&\cdots&\cdots&\cdots\\
		% 		0&\cdots&\cdots&\cdots&\cdots&\cdots&\cdots&\cdots&0\\
				
		% 		\vdots&\vdots&\vdots&\vdots&\vdots&\vdots&\vdots&\vdots&\vdots\\
				
		% 		0&\cdots&\cdots&\cdots&\cdots&\cdots&\cdots&\cdots&0\\
		% 		0& \cdots&0&-\sigma_{h_{\ell-1}-t_{\ell-1}+1} & -\sigma_{h_{\ell-1}-t_{\ell-1}+2} &\cdots&\cdots&\cdots&\cdots \\
				
		% 		0&\cdots&\cdots&0&\cdots&\cdots&\cdots&\cdots&0\\
				
		% 		\vdots&\vdots&\vdots&\vdots&\vdots&\vdots&\vdots&\vdots&\vdots\\
				
		% 		0&\cdots&\cdots&\cdots&\cdots&\cdots&\cdots&\cdots&0\\
		% 		0& \cdots&\cdots&\cdots& 0 &-\sigma_{h_{1}-t_{1}+1} & -\sigma_{h_{1}-t_{1}+2} &\cdots&\cdots \\
		% 		0&\cdots&\cdots&\cdots&\cdots&0&\cdots&\cdots&0\\
				
		% 		\vdots&\vdots&\vdots&\vdots&\vdots&\vdots&\vdots&\vdots&\vdots\\
				
		% 		0&\cdots&\cdots&0&\cdots&0&\cdots&\cdots&0\\
		% 	\end{array}\right)
		% 	\begin{array}{l}
		% 		\leftarrow t_\ell^{th}\\
		% 		\\
		% 		\\
		% 		\\
		% 		\leftarrow t_{\ell-1}^{th}
		% 		\\
		% 		\\
		% 		~~~\vdots\\
		% 		\\
		% 		\leftarrow t_1^{th}\\
		% 		\\
		% 		\\
		% 		\leftarrow 1^{st}\\
		% 	\end{array}
		% \end{equation*}
		% \begin{equation*}
		% 	\begin{array}{ccccccccc}
		% 		\uparrow~~~~~~~~~~~&&&~~~~~~~~~~~\uparrow~~~~~~~~~~~~~&&~~~~~~~~~~~\uparrow~~~~~~~~~~&&&~~~~~~~\uparrow\\
		% 		t_\ell^{th}~~~~~~~~~~~~&&&~~~~~~~~~~~t_{\ell-1}^{th}~~~~~~~~~~~~~&\cdots&~~~~~~~~~~~t_1^{th}~~~~~~~~~~&&&~~~~~~~1^{st}\\
		% 	\end{array}
		% \end{equation*}}
        \begin{equation*}
\displaystyle
B_\mathcal{I} = \begin{pNiceMatrix}[last-row, last-col, nullify-dots]
	-\sigma_{h_\ell-t_\ell+1}& -\sigma_{h_\ell-t_\ell+2} &\cdots & \cdots & \cdots & \cdots&\cdots&\cdots&\cdots&\leftarrow t_\ell^{\text{th}}\\
	0&\cdots&\cdots&\cdots&\cdots&\cdots&\cdots&\cdots&0\\
	\vdots&\vdots&\vdots&\vdots&\vdots&\vdots&\vdots&\vdots&\vdots\\
	0&\cdots&\cdots&\cdots&\cdots&\cdots&\cdots&\cdots&0\\
	0& \cdots&0&-\sigma_{h_{\ell-1}-t_{\ell-1}+1} & -\sigma_{h_{\ell-1}-t_{\ell-1}+2} &\cdots&\cdots&\cdots&\cdots \leftarrow t_{\ell-1}^{\text{th}}\\
	0&\cdots&\cdots&0&\cdots&\cdots&\cdots&\cdots&0\\
	\vdots&\vdots&\vdots&\vdots&\vdots&\vdots&\vdots&\vdots&\vdots\\
	0&\cdots&\cdots&\cdots&\cdots&\cdots&\cdots&\cdots&0\\
	0& \cdots&\cdots&\cdots& 0 &-\sigma_{h_{1}-t_{1}+1} & -\sigma_{h_{1}-t_{1}+2} &\cdots&\cdots&\leftarrow t_1^{\text{th}} \\
	0&\cdots&\cdots&\cdots&\cdots&0&\cdots&\cdots&0\\
	\vdots&\vdots&\vdots&\vdots&\vdots&\vdots&\vdots&\vdots&\vdots\\
	0&\cdots&\cdots&0&\cdots&0&\cdots&\cdots&0&\leftarrow 1^{\text{st}}\\
     \uparrow&&&\uparrow&&\uparrow&&&\uparrow&\\
        t_\ell^{\text{th}}&&&t_{\ell-1}^{\text{th}}&\cdots&t_1^{\text{th}}&&&1^{\text{st}}\\
\end{pNiceMatrix}
% \begin{array}{l}
% 	\leftarrow t_\ell^{\text{th}}\\
% 	\\
% 	\\
% 	\\
% 	\leftarrow t_{\ell-1}^{\text{th}}\\
% 	\\
% 	~~~\vdots\\
% 	\\
% 	\leftarrow t_1^{\text{th}}\\
% 	\\
% 	\\
% 	\leftarrow 1^{\text{st}}\\
% \end{array}
\end{equation*}
% \begin{equation*}\resizebox{0.5\linewidth}{!}{%
% $\displaystyle
% \begin{array}{ccccccccc}
% 	~~~~~~~~~~~~~~~\uparrow~~~~~~~~~~~~~~~~&&&~~~~~~\uparrow~~~~~~~~~~~~~~&&~\uparrow~~~~~~~~&&&~~~~\uparrow~~~~~~~\\
% 	~~~~~~~~~~~~~t_\ell^{\text{th}}~~~~~~~~~~~~&&&~~~~t_{\ell-1}^{\text{th}}~~~~~~~~~~~~~&\cdots&~~~~~~~t_1^{\text{th}}~~~~~~~~~~&&&~~~1^{\text{st}}\\
% \end{array}
% $}
% \end{equation*}

\end{theorem}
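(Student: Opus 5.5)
The plan is to use the standard characterization: a linear $[n,k]$ code is MDS if and only if no nonzero codeword has $k$ or more zeros. For the MTRS code this means: for every $\mathcal{I}\subset\{1,\dots,n\}$ with $|\mathcal{I}|=k$, the only $f\in\mathcal{V}_{k,\bm{\mathbbm{t}},\bm{\mathbbm{h}},\bm\eta}$ vanishing on $\{\alpha_i: i\in\mathcal{I}\}$ is $f=0$. The evaluation map is injective on $\mathcal{V}$ because $\deg f\le k-1+t_\ell<n$, so the code has dimension $k$ and this reformulation is legitimate. I will therefore fix $\mathcal{I}$, put $g_{\mathcal{I}}(x)=\prod_{i\in\mathcal{I}}(x-\alpha_i)=\sum_i\sigma_i x^i$ (monic of degree $k$, with $\sigma_k=1$), and show that the existence of a nonzero twisted polynomial divisible by $g_{\mathcal{I}}$ is equivalent to the singularity of $\mathrm{diag}(\dots)A_{\mathcal{I}}+B_{\mathcal{I}}$.

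\textbf{Parametrizing the vanishing polynomials.} Any $f$ vanishing on these points with $\deg f\le k-1+t_\ell$ can be written uniquely as $f=g_{\mathcal{I}}\cdot q$, with $q(x)=\sum_{s=0}^{t_\ell-1}c_s x^s$ ranging over a $t_\ell$-dimensional space. The coefficient of $x^m$ in $f$ is $\sum_s c_s\sigma_{m-s}$, using $\sigma_j=0$ for $j<0$ and $j>k$. Then $f\in\mathcal{V}$ exactly when the following hold.
\begin{enumerate}[(i)]
\item For each $m=k-1+t$ with $1\le t\le t_\ell$ and $t\notin\{t_1,\dots,t_\ell\}$, the coefficient vanishes: $\sum_s c_s\sigma_{k-1+t-s}=0$.
\item For $t=t_j$, the coefficient equals $\eta_j$ times the coefficient at degree $h_j$: $\sum_s c_s\sigma_{k-1+t_j-s}-\eta_j\sum_s c_s\sigma_{h_j-s}=0$.
\end{enumerate}
This gives $t_\ell$ homogeneous linear equations in the $t_\ell$ unknowns $c_0,\dots,c_{t_\ell-1}$. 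The code is MDS if and only if, for every $\mathcal{I}$, this system has only the trivial solution, i.e.\ its coefficient matrix is nonsingular.

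\textbf{Matching with the stated matrix.} I will order the rows by $t=t_\ell,t_\ell-1,\dots,1$ and the unknowns correspondingly from top to bottom. The "high-degree" part $\sum_s c_s\sigma_{k-1+t-s}$ then has entries $\sigma_{k-1+t-s}$. These are $1$ on the diagonal (where $s=t-1$), zero on one side (since $\sigma_j=0$ for $j>k$), and $\sigma_{k-1},\sigma_{k-2},\dots$ moving away from the diagonal. This is exactly the unipotent triangular $A_{\mathcal{I}}$, up to the reversed indexing convention the paper uses, with labels $t_\ell^{\text{th}},\dots,1^{\text{st}}$.

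In row $t_j$ the extra term is $-\eta_j\sigma_{h_j-s}$. Dividing that row by $\eta_j$, which is allowed since $\eta_j\neq0$ and does not affect singularity, turns it into $\eta_j^{-1}(\text{row of }A_{\mathcal{I}})+(-\sigma_{h_j-s})_s$. Since $h_j<k$, the entries $\sigma_{h_j-s}$ vanish once $s>h_j$. The nonzero ones sit in row $t_j$, in columns from $t_j$ onward, with leading entry $-\sigma_{h_j-t_j+1}$, matching the displayed row of $B_{\mathcal{I}}$. Hence the row-scaled coefficient matrix is exactly $\mathrm{diag}(\eta_\ell^{-1},1,\dots,\eta_1^{-1},\dots,1)A_{\mathcal{I}}+B_{\mathcal{I}}$, and singularity is preserved in both directions. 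That proves both implications at once.

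\textbf{Main obstacle.} The content of the argument is elementary, and the real risk is in the index bookkeeping: reconciling the reversed row and column labelling of $A_{\mathcal{I}},B_{\mathcal{I}}$ with the natural indexing of $c_s$. In particular I need to check that the offset in $\sigma_{h_j-t_j+1}$ comes out correctly and that $B_{\mathcal{I}}$ is genuinely upper triangular in the paper's orientation. I would verify this first on the single-twist case $\ell=1$ with small $t_1$, then substitute $s\mapsto t_\ell-1-s$ systematically to align the two conventions.
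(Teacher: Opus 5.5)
Your reduction is the same coefficient-comparison argument the paper uses. Theorem~\ref{lemma} itself is only cited from \cite{h-k-2023}, but the proof of its Goppa analogue, Theorem~\ref{thm::dist_t}, writes the vanishing twisted polynomial as $\prod_{i}(x-\alpha_i)\cdot\gamma(x)$ with $\deg\gamma<t_\ell$ and reads off a $t_\ell\times t_\ell$ homogeneous system. Everything in your proposal up to and including the identification of the untwisted part with $A_{\mathcal I}$ is correct.

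The problem is the step you flagged as the risk: the claim that, after dividing row $t_j$ by $\eta_j$, the entries $-\sigma_{h_j-s}$ lie in the columns from $t_j$ onward. The column labelled $u$ carries $c_{u-1}$, so row $t_j$ receives $-\sigma_{h_j-u+1}$ in \emph{every} column $u\le h_j+1$. These entries stay on or to the right of the diagonal only if $h_j<t_j$. Definition~\ref{def::mtrs} imposes no relation between $\bm{\mathbbm{h}}$ and $\bm{\mathbbm{t}}$, so $h_j\ge t_j$ is allowed. For $j<\ell$ with $h_j\ge t_j$, there are additional entries $-\sigma_{h_j-u+1}$ with $t_j<u\le\min(t_\ell,h_j+1)$, to the left of the diagonal. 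The displayed upper-triangular $B_{\mathcal I}$ does not contain them. For $j=\ell$ nothing goes wrong, because row $t_\ell$ is the top row, which is why the single-twist case is unaffected.

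For instance, take $k=5$, $\bm{\mathbbm{t}}=(1,2)$, $\bm{\mathbbm{h}}=(3,4)$ and $f=\prod_{i\in\mathcal I}(x-\alpha_i)\,(c_1x+c_0)$. The $x^5$-condition is $c_0+\sigma_4c_1=\eta_1(\sigma_3c_0+\sigma_2c_1)$, so the scaled bottom row is $(\eta_1^{-1}\sigma_4-\sigma_2,\ \eta_1^{-1}-\sigma_3)$. By contrast, $\mathrm{diag}(\eta_2^{-1},\eta_1^{-1})A_{\mathcal I}+B_{\mathcal I}$ has $(\eta_1^{-1}\sigma_4,\ \eta_1^{-1}-\sigma_3)$ in that row. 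The two determinants differ by $\sigma_2\sigma_4$, which is generically nonzero, so the two singularity conditions genuinely differ.

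Hence your concluding claim, that the scaled coefficient matrix is \emph{exactly} $\mathrm{diag}(\dots)A_{\mathcal I}+B_{\mathcal I}$, holds only under the extra hypothesis $h_j<t_j$ for all $j<\ell$. The display of $B_{\mathcal I}$ tacitly assumes this. To complete the proof, you can do either of the following:
\begin{itemize}
\item state the hypothesis $h_j<t_j$ for all $j<\ell$ explicitly; or
\item take the row of $B_{\mathcal I}$ labelled $t_j$ to be the full row $(-\sigma_{h_j-u+1})_{u=t_\ell,\dots,1}$ that your computation actually produces. This is how Theorem~\ref{thm::dist_t} encodes the twisted rows, via $\bm e_{t_\ell-1-h_j}^{\mathsf T}B_T$.
\end{itemize}
With either repair, your argument gives both implications at once.
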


Let $\mathbb{F}_{q^m}$ be the finite field of size $q^m$, for some prime power $q$ and $m$ a positive integer.
Let $g(z)=g_0+g_1z+\cdots+g_tz^t\in\mathbb{F}_{q^m}[z]$ with deg $g(z)=t$. % for $t$ being some positive integer.
We define an $\mathbb{F}_{q^m}[z]$-module transformation $\operatorname{U}:{\mathbb{F}_{q^m}[z]}^{t+1}\longrightarrow {\mathbb{F}_{q^m}[z]}^{t+1}$ as $\operatorname{U}(\bm{e}_i)=\bm{e}_{i+1}$, for $1\leq i\leq t$ and $\operatorname{U}(\bm{e}_{t+1})=\bm{0}$, where $\bm{e}_i$ denotes the unit vector with only non-zero coordinate at the $i^{th}$ place being $1$.
Throughout this paper, we denote $\bm{z}=(1,z,z^2,\dots,z^t)$ and $\bm{g}=(g_0,g_1,g_2,\dots,g_t)\in {\mathbb{F}_{q^m}[z]}^{t+1}$. Notice that $\bm{z}\cdot \bm{g}=g(z)$, additionally, $\operatorname{U}^j(\bm{z})\cdot \bm{g}$ can be easily computed for any $j\in\mathbb{N}$.

\begin{definition}
	[Multi-twisted Goppa codes]\label{def::MTGC}
	Let $g(z)=g_0+g_1z +\cdots +g_tz^t\in\mathbb{F}_{q^m}[z]$ with $g_t\neq0$ and $\mathcal{L}=\{\alpha_1,\alpha_2,\dots,\alpha_n\}\subseteq \mathbb{F}_{q^m}$ be such that $g(\alpha_i)\neq 0$ for each $i$. Let $\ell\geq0$ be an integer, and $\bm{\mathbbm{t}}=(t_1,t_2,\dots,t_\ell)\in\mathbb{Z}^\ell$, ${\bm{\mathbbm{h}}}=(h_1,h_2,\dots,h_\ell)$ be such that $1\leq t_i< t_{i+1} <q^m -t$ and $0\leq h_i\leq h_{i+1}< t$ for each $i$ and $\bm{\eta}= (\eta_1,\eta_2,\dots,\eta_\ell)\in \mathbb{F}_{q^m}^\ell$. Then a multi-twisted Goppa code $\tgoppa$ is defined as
    %\begin{multline}
        		%\bigg\{\bm{c}\in \mathbb{F}_q^n :\sum\limits_{i=1}^n c_i \left( \frac{1}{z-\alpha_i} - \sum\limits_{j=1}^\ell \frac{\eta_j \alpha_i^{t-1+t_j}}{g(\alpha_i)} \left(\operatorname{U}^{h_j+1}(\bm{z})\cdot \bm{g}\right) \right)\\
                %\equiv 0\bmod g(z) \bigg\},
    %\end{multline}
    $\bigg\{\bm{c}=(c_1, c_2, \dots, c_n)\in \mathbb{F}_q^n :\sum\limits_{i=1}^n c_i \left( \frac{1}{z-\alpha_i} - \sum\limits_{j=1}^\ell \frac{\eta_j \alpha_i^{t-1+t_j}}{g(\alpha_i)} \left(\operatorname{U}^{h_j+1}(\bm{z})\cdot \bm{g}\right) \right)\equiv 0\bmod g(z) \bigg\}$.
\end{definition}
\begin{prop}\label{prop::parity_mat_mtgoppa}
	Considering the above definition, $\tgoppa$ is Null$(H)\cap \mathbb{F}_q^n$, where $H\in\mathbb{F}_{q^m}^{t\times n}$ is given by 
	
        \begin{equation}\label{pc-mat-MTG}
		%\resizebox{0.5\linewidth}{!}{%
        %$
                \left(\begin{array}{cccc}
			g(\alpha_1)^{-1} & g(\alpha_2)^{-1} & \cdots & g(\alpha_n)^{-1}\\
			\alpha_1g(\alpha_1)^{-1} & \alpha_2g(\alpha_2)^{-1} & \cdots & \alpha_ng(\alpha_n)^{-1} \\
			\vdots & \vdots & \cdots & \vdots \\
			(\alpha_1^{h_1}+ \eta_1 \alpha_1^{t-1+t_1})g(\alpha_1)^{-1} & (\alpha_2^{h_1}+ \eta_1 \alpha_2^{t-1+t_1})g(\alpha_2)^{-1} & \cdots & (\alpha_n^{h_1}+ \eta_1 \alpha_n^{t-1+t_1})g(\alpha_n)^{-1} \\
			\alpha_1^{h_1+1}g(\alpha_1)^{-1}&\alpha_2^{h_1 +1}g(\alpha_2)^{-1}&\cdots&\alpha_n^{h_1 +1}g(\alpha_n)^{-1}\\
			\vdots &\vdots &\vdots &\vdots \\
			(\alpha_1^{h_2}+ \eta_2 \alpha_1^{t-1+t_2})g(\alpha_1)^{-1} & (\alpha_2^{h_2}+ \eta_2 \alpha_2^{t-1+t_2})g(\alpha_2)^{-1} & \cdots & (\alpha_n^{h_2}+ \eta_2 \alpha_n^{t-1+t_2})g(\alpha_n)^{-1}\\
			\alpha_1^{h_2+1}g(\alpha_1)^{-1}&\alpha_2^{h_2 +1}g(\alpha_2)^{-1}&\cdots&\alpha_n^{h_2 +1}g(\alpha_n)^{-1}\\
			\vdots &\vdots &\vdots &\vdots \\
			(\alpha_1^{h_{\ell}}+ \eta_{\ell} \alpha_1^{t-1+t_{\ell}})g(\alpha_1)^{-1} & (\alpha_2^{h_{\ell}}+ \eta_{\ell} \alpha_2^{t-1+t_{\ell}})g(\alpha_2)^{-1} & \cdots & (\alpha_n^{h_{\ell}}+ \eta_{\ell} \alpha_n^{t-1+t_{\ell}})g(\alpha_n)^{-1}\\
			\alpha_1^{h_{\ell}+1}g(\alpha_1)^{-1}&\alpha_2^{h_{\ell} +1}g(\alpha_2)^{-1}&\cdots&\alpha_n^{h_{\ell} +1}g(\alpha_n)^{-1}\\
			\vdots &\vdots &\vdots &\vdots \\
			\alpha_1^{t-1}g(\alpha_1)^{-1} & \alpha_2^{t-1}g(\alpha_2)^{-1} & \cdots &  \alpha_n^{t-1}g(\alpha_n)^{-1} \\
		\end{array}\right)_{t \times n}\begin{array}{c}
			\textcolor{white}{0}\\
			\textcolor{white}{0}\\
			\leftarrow (h_1+1)\text{-th row} \\			
			\textcolor{white}{\vdots} \\
			\textcolor{white}{0}\\
			\leftarrow (h_2+1)\text{-th row} \\
			\textcolor{white}{\vdots}\\
			\textcolor{white}{0}\\
			\leftarrow (h_{\ell}+1)\text{-th row} \\
			\textcolor{white}{\vdots}\\
			\textcolor{white}{0}\\
		\end{array}
        %$}
	\end{equation}
\begin{proof}
Consider $(c_1,c_2,\dots,c_n)\in\tgoppa$. 
Since $\gcd(z-\alpha_i,g(z))=1$, $(z-\alpha_i)^{-1}= -g(\alpha_i)^{-1}\left(\frac{g(z)-g(\alpha_i)}{z-\alpha_i}\right)$ under $\bmod\; g(z).$ Now $\frac{g(z)-g(\alpha_i)}{z-\alpha_i}= \sum\limits_{j=1}^{t} g_j \sum\limits_{k=0}^{j-1}\alpha_i^{k} z^{j-1-k}$, and 
	thus $(c_1,c_2,\dots,c_n)\in\tgoppa$ implies 
        % \begin{multline}
        \begin{equation}
        \label{paritycheck derivation}
		\sum_{i=1}^n{ c_i g(\alpha_i)^{-1}\bigg( \sum_{j=1}^{t} g_j \left(\sum_{k=0}^{j-1}\alpha_i^k z^{j-1-k}  \right) + \sum_{j=1}^\ell \eta_j \alpha_i^{t-1+t_j}%\\ 
        \operatorname{U}^{h_j+1}(\bm{z})\cdot \bm{g}\bigg)} \equiv0 \bmod g(z).
	\end{equation}
        % \end{multline}
	Since the polynomial on the left side of \eqref{paritycheck derivation} has degree strictly less than $t$, i.e. degree of $g(z)$, it is indeed the zero polynomial. 
	Thus, the coefficients of $z^i$ are all zero for each $i$. 
	In particular, we obtain $L(H_1+H_2)D(c_1,c_2,\dots,c_n)^T=\bm 0$, where
	\begin{equation*}
        \resizebox{0.5\linewidth}{!}{
		$L = \left( \begin{array}{cccc}
			g_t & 0  & \cdots & 0   \\
			g_{t-1} & g_t & \cdots & 0 \\
			\vdots & \vdots & \vdots & \vdots \\
			g_1 & g_2 & \cdots & g_t  \\
		\end{array} \right),\quad$
	$H_1 = \left( \begin{array}{cccc}
		1 & 1  & \cdots & 1   \\
		\alpha_1 & \alpha_2 & \cdots & \alpha_n \\
		\vdots & \vdots & \vdots & \vdots \\
		\alpha_1^{t-1} & \alpha_2^{t-1}  & \cdots & \alpha_n^{t-1} \\
	\end{array} \right),$}
        \end{equation*}
	\begin{eqnarray}
		H_2 = \left( \begin{array}{cccc}
			\vdots & \vdots & \vdots & \vdots \\
			\eta_1\alpha_1^{t-1+t_1} & \eta_1\alpha_2^{t-1+t_1} & \cdots & \eta_1\alpha_n^{t-1+t_1}\\
			\vdots & \vdots & \vdots & \vdots \\
			
			\eta_2\alpha_1^{t-1+t_2} & \eta_2\alpha_2^{t-1+t_2} & \cdots & \eta_2\alpha_n^{t-1+t_2}\\
			\vdots & \vdots & \vdots & \vdots \\
			
			\eta_\ell\alpha_1^{t-1+t_\ell} & \eta_\ell\alpha_2^{t-1+t_\ell} & \cdots & \eta_\ell\alpha_n^{t-1+t_\ell}\\
			\vdots & \vdots & \vdots & \vdots 
		\end{array} \right), \nonumber \quad
		D = \left( \begin{array}{cccc}
			g(\alpha_1)^{-1} & 0  & \cdots & 0   \\
			0 & g(\alpha_2)^{-1} & \cdots & 0 \\
			\vdots & \vdots & \vdots & \vdots \\
			0 & 0 & \cdots & g(\alpha_n)^{-1}  
		\end{array} \right).
	\end{eqnarray}
    Note that the only non-zero rows of $H_2$ are indexed by $h_i+1$ for $1\leq i\leq \ell$.
	Since $L$ is invertible as $g_t\neq0$, $(H_1+H_2)D$ is the required matrix $H$. This completes the proof.
\end{proof}
	
\end{prop}
Taking into account a particular case where $\ell = 1$, $t_1 = 1$, $h_1 = t - 1$, $\eta_1 = \eta$ and $g_t=1$, it is straightforward to verify that the above definition and the corresponding result coincide with those presented in \cite{Sui2023} and \cite{Sun2025}.

\begin{example}\label{example1}
    Let the base field be $\mathbb{F}_{4}\cong \frac{\mathbb{F}_2[x]}{\langle x^2+x+1\rangle}=\mathbb{F}_2(a)$ and let $\mathbb{F}_{4^2}\cong\frac{\mathbb{F}_4[x]}{\langle x^2+x+a\rangle}=\mathbb{F}_2(a,b)$.
    Consider $g(z)=z^{3} + \left(a b + a + 1\right) z^{2} + \left(\left(a + 1\right) b + a + 1\right) z + a + 1$, a $3$-degree polynomial over $\mathbb{F}_{4^2}$. 
    Consider $\mathcal{L}=\{b + 1, \left(a + 1\right) b + 1, \left(a + 1\right) b, a b + 1, a, a + 1, \left(a + 1\right) b + a + 1, a b, 0, b, \left(a + 1\right) b + a, b + a, b + a + 1, 1\}$.
    Let $\ell=2$ and $\bm{\mathbbm{t}}=(1,2)$, $\bm{\mathbbm{h}} = (1,2)$ and $\bm \eta =(b,ab+1)\in \mathbb{F}_{16}^2$.\\
    Then the matrix $H$, as described in Proposition \ref{prop::parity_mat_mtgoppa}, is given as: \\
    %shown explicitly in Appendix \ref{appendix1}.
    \resizebox{\textwidth}{!}{
    $\left(
    \begin{array}{cccccccccccccc}
    1 & a & 1 & a b & 1 & \left(a + 1\right) b & \left(a + 1\right) b & a b & a & a b + 1 & a b + 1 & a & \left(a + 1\right) b & a b + 1 \\
    \left(a + 1\right) b & a b + a + 1 & a b + 1 & 1 & b + a & b + 1 & a b + 1 & \left(a + 1\right) b + a & 0 & a b + 1 & a + 1 & b + a & \left(a + 1\right) b + 1 & b + a \\
    a b & b + a + 1 & b + a & a b + a & \left(a + 1\right) b + 1 & 1 & a b & b + a + 1 & 0 & 0 & a & b + a & 1 & b + 1
    \end{array}
    \right)$.}
\end{example}

% \begin{prop}
% 	The dimension of multi-twisted Goppa code $\Gamma (\mathcal{L},g,\bm{\bm{t}},\bm{\mathbbm{h}},\bm \eta)$ as a vector space over $\mathbb{F}_q$ is at least $n-mt$, and its Hamming distance is at least $t-\ell+1$.
% 	\begin{proof}
% 		It can be observed that a parity-check matrix of $\Gamma (\mathcal{L},g,\bm{\bm{t}},\bm{\mathbbm{h}},\bm \eta)$ can be obtained by expanding each entry of $H$ as $m$-tuple over $\mathbb{F}_q$ and then considering only the linearly independent rows. The matrix thus obtained will have at most $mt$ rows, which implies $n-\dim \Gamma (\mathcal{L},g,\bm{\bm{t}},\bm{\mathbbm{h}},\bm \eta) \leq mt$.
% 		Additionally, since every $t-\ell$ columns of $H$ are linearly independent (considering them as rows of a non singular Vandermonde matrix), the Hamming distance is at least $t-\ell+1$.
% 	\end{proof}
% \end{prop}

\begin{remark}
    It can be observed that a parity-check matrix of $\tgoppa$ can be obtained by expanding each entry of $H$ as $m$-tuple over $\mathbb{F}_q$ and then considering only the linearly independent rows. The matrix thus obtained will have at most $mt$ rows, which implies $\dim \tgoppa \ge n-mt$.
\end{remark}
Next, we study the sufficient conditions so that $\tgoppa$ has a Hamming distance at least $t+1$.
\begin{theorem}\label{thm::dist_t}
	The multi-twisted Goppa code $\tgoppa$ has minimum distance at least $t+1$ if for each $t$ sized set $T \subset \{1,2,\dots ,n\}$  correspondingly the polynomial $\prod\limits_{i\in T }(x-\alpha_i)= \sum\limits_{i} \sigma_i x^i$ with $\sigma_i =0$ for $i<0$ and $i>t$,	
	the $t_\ell \times t_\ell$ sized matrix
	\begin{equation*}
		\text{diag }\Big( \underset{\underset{t_{\ell}^{th}}{\uparrow} }{\eta_{\ell}^{-1}}, 1,\dots, 1, \underset{\underset{t_{\ell-1}^{th}}{\uparrow} }{\eta_{\ell-1}^{-1}}, 1,\dots, 1, \underset{\underset{t_{1}^{th}}{\uparrow} }{\eta_{1}^{-1}}, 1,\dots, \underset{\underset{1^{st}}{\uparrow} }{1} \Big)
        \cdot A_{T } - 
        \Big( \underset{\underset{t_{\ell}^{th}}{\uparrow} }{{\bm e}_{t_\ell -1-h_\ell}}, \bm 0,\dots, \bm 0, \underset{\underset{t_{\ell-1}^{th}}{\uparrow} }{{\bm e}_{t_\ell -1-h_{\ell-1}}}, \bm 0,\dots,\bm 0, \underset{\underset{t_{1}^{th}}{\uparrow} }{{\bm e}_{t_\ell -1-h_1}},\bm 0,\dots, \underset{\underset{1^{st}}{\uparrow} }{\bm 0} \Big)^\mathsf{T}\cdot B_{T}
	\end{equation*}
	is non-singular, where $\bm e_j\in \mathbb{F}_2^{t_\ell\times 1}$ is the unit vector with $1$ at $j$-th coordinate. 
	Here, $A_{T }$ and $B_{T }$ are lower and upper-triangular $t_\ell\times t_\ell$ matrices, respectively, given by: 
	\begin{equation*}
		A_{T } = \left(\begin{array}{lllll}
			1& 0 &0 & \cdots & 0\\
			\sigma_{t-1}& 1 & 0 &\cdots &0\\
			\sigma_{t-2}& \sigma_{t-1} & 1 &\cdots &0\\
			~~\vdots&~\vdots&~\vdots&~~\ddots&~\vdots\\
			\sigma_{t - t_\ell +1} & \cdots &\cdots & \sigma_{t-1}&1\\
		\end{array}\right) \text{ and }
        B_{T } = 
            \left(\begin{array}{lllll}
			0& \sigma_0 & \sigma_1 & \cdots & \sigma_{t_\ell-2}\\
			0& 0 & \sigma_0 &\cdots &\sigma_{t_\ell-3}\\
			~~\vdots&~\vdots&~\vdots&~~\ddots&~\vdots\\
			0 & \cdots &\cdots & 0&\sigma_0\\
                0 & \cdots &\cdots & 0&0\\
		\end{array}\right).
        \end{equation*}
\end{theorem}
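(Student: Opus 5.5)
Since a codeword of $\tgoppa$ is a vector of $\mathbb{F}_q^n$ in the null space of the matrix $H$ of Proposition \ref{prop::parity_mat_mtgoppa}, the minimum distance is at least $t+1$ as soon as every $t$ columns of $H$ are linearly independent over $\mathbb{F}_{q^m}$. So we fix a $t$-subset $T\subset\{1,\dots,n\}$ and must show that the $t\times t$ submatrix $H_T$ is nonsingular. The diagonal factor $D$ only rescales columns by the nonzero values $g(\alpha_i)^{-1}$, so it suffices to consider $(H_1+H_2)_T$. We argue by contradiction: suppose a nonzero vector $\bm u\in\mathbb{F}_{q^m}^t$ satisfies $\bm u^{\mathsf T}(H_1+H_2)_T=\bm 0$. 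Reading the rows of $H_1+H_2$ as the functions $x^r$ for $r\neq h_j$ and $x^{h_j}+\eta_j x^{t-1+t_j}$, this says that the polynomial
\begin{equation*}
F(x)=\sum_{r=0}^{t-1}u_r x^r+\sum_{j=1}^{\ell}\eta_j u_{h_j}x^{t-1+t_j}
\end{equation*}
vanishes at all $\alpha_i$, $i\in T$. This is exactly the twisted-polynomial setting of Theorem \ref{lemma} with $k=t$, and we could in principle invoke that theorem directly. We prefer, however, to redo its argument so that it yields precisely the matrix in the statement.

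Let $\sigma(x)=\prod_{i\in T}(x-\alpha_i)=\sum_i\sigma_i x^i$, which is monic of degree $t$. The vanishing condition means $\sigma\mid F$. Since $\deg F\le t-1+t_\ell$, we may write $F=\sigma\cdot Q$ with $Q=\sum_{s=0}^{t_\ell-1}q_s x^s$. Comparing coefficients of $x^{t},\dots,x^{t-1+t_\ell}$ determines the top $t_\ell$ coefficients of $F$. The coefficient of $x^{t-1+s}$ in $\sigma Q$ is $\sum_{r}\sigma_{t-1+s-r}q_r$. For $s\ge1$ it must equal $\eta_j u_{h_j}$ when $s=t_j$ and $0$ otherwise. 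The lower coefficients give $u_{h_j}$ as the coefficient of $x^{h_j}$ in $\sigma Q$, namely $\sum_r\sigma_{h_j-r}q_r$, and these are entries of $B_T$-type rows.

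Substituting, each equation $s=t_j$ becomes
\begin{equation*}
\eta_j^{-1}\sum_r\sigma_{t-1+t_j-r}q_r-\sum_r\sigma_{h_j-r}q_r=0,
\end{equation*}
and for $s\notin\{t_j\}$ we get $\sum_r\sigma_{t-1+s-r}q_r=0$. Ordering the unknowns $q_{t_\ell-1},\dots,q_0$ and the equations $s=t_\ell,\dots,1$ as in the statement, the first family assembles into the rows of $A_T$ scaled by the diagonal matrix. The $h_j$-terms are the row of $B_T$ indexed by $t_\ell-1-h_j$, selected by the matrix of unit vectors $\bm e_{t_\ell-1-h_j}$ placed at position $t_j$. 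Thus $\bm q=(q_{t_\ell-1},\dots,q_0)$ lies in the kernel of the displayed matrix. By the hypothesis this forces $\bm q=0$, hence $Q=0$, hence $F=0$. Then $u_r=0$ for all $r$, since the monomials $x^r$ with $r<t$ appear in $F$ only with coefficient $u_r$, while the twisted terms have degree at least $t$. This contradicts $\bm u\neq 0$.

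Two points need checking. First, $\bm q\neq 0$ follows from $F\neq 0$: if $\bm u\ne 0$ then some $u_r\neq 0$ and $F$ has a nonzero low coefficient, so this direction is consistent. Second, and this is the main obstacle, the index bookkeeping must match exactly. We must verify that $\sum_r\sigma_{h_j-r}q_r$ is precisely the row $t_\ell-1-h_j$ of the shifted Toeplitz matrix $B_T$ in the chosen reversed coordinate order. This includes the cases $h_j-r<0$ and $h_j\ge t_\ell$, where the convention $\sigma_i=0$ outside $[0,t]$ is needed (with row indices that fall out of range contributing nothing), and the cases where equal $h_j$ share a row. I would first check the single-twist case $\ell=1$, $t_1=1$, $h_1=t-1$ against \cite{Sui2023}, and then carry out the general index matching.
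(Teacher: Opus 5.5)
Your proof follows the same route as the paper's. You reduce to nonsingularity of each $t\times t$ submatrix $H_T$ and read a row dependency as a twisted polynomial $F$ vanishing on $\{\alpha_i\}_{i\in T}$. You then write $F=\sigma Q$ with $\deg Q\le t_\ell-1$, compare the coefficients of $x^{t},\dots,x^{t-1+t_\ell}$, and substitute $u_{h_j}=\sum_k\sigma_{h_j-k}q_k$. The linear system you derive is correct, and the paper likewise stops at ``merely rearranging the terms''.

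The bookkeeping you deferred does close in the generic range. Let column $c$ carry $q_{t_\ell-c}$, and let row $i$ be the equation for $x^{t-1+s}$ with $s=t_\ell+1-i$. Then the entries are $\sigma_{t-(i-c)}$, which is exactly $A_T$, with $\eta_j^{-1}$ landing in position $t_j$ from the right. Row $r$ of $B_T$ applied to $(q_{t_\ell-1},\dots,q_0)^{\mathsf T}$ equals $\sum_k\sigma_{t_\ell-1-r-k}q_k$, so $r=t_\ell-1-h_j$ gives exactly $u_{h_j}$.

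The gap is your parenthetical convention for the remaining case. The identification above needs $r=t_\ell-1-h_j\ge 1$, i.e.\ $h_j\le t_\ell-2$. If $h_j\ge t_\ell-1$, then $u_{h_j}=\sum_{k=0}^{t_\ell-1}\sigma_{h_j-k}q_k$ involves only indices in $[h_j-t_\ell+1,h_j]\subseteq[0,t-1]$. So the convention $\sigma_i=0$ outside $[0,t]$ kills nothing, and you cannot declare the out-of-range row to ``contribute nothing'' without changing the coefficient matrix.

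The test case you propose exposes this. For $\ell=1$, $t_1=1$, $h_1=t-1$ one has $t_\ell=1$ and $B_T=(0)$. The true system is $(\eta^{-1}-\sigma_{t-1})q_0=0$, i.e.\ the condition $1+\eta\sum_{i\in T}\alpha_i\neq 0$. Your convention gives $\eta^{-1}q_0=0$ and would yield $d\ge t+1$ unconditionally. That is false already for $m=1$: pick $T$ with $\sum_{i\in T}\alpha_i\neq0$ and set $\eta=-(\sum_{i\in T}\alpha_i)^{-1}$.

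So your argument, like the paper's, proves the statement only when every $h_j\le t_\ell-2$. In general the term $\bm e_{t_\ell-1-h_j}^{\mathsf T}B_T$ must be replaced by the row $(\sigma_{h_j-t_\ell+1},\sigma_{h_j-t_\ell+2},\dots,\sigma_{h_j})$, with $\sigma_i=0$ for $i<0$. This row coincides with row $t_\ell-1-h_j$ of $B_T$ precisely when $h_j\le t_\ell-2$.
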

% }

\begin{proof}
	It suffices to analyse the conditions so that every $t$ columns of $H$, described in (\ref{pc-mat-MTG}), are linearly independent over $\mathbb{F}_q$.
	Suppose $T \subset\{1,2,\dots,n\}$ denotes the arbitrary $t$ column indices and consider $H_{T }$ as the corresponding $t\times t$ sized submatrix.
	We require the conditions so that this matrix becomes non-singular, which ensures Hamming distance of the code to be at least $t+1$.
	Using the fact that the $t\times t$ matrix $H_{T }$ is non-singular if and only if all its rows are linearly independent, we
	consider 
	\begin{multline*}
		f_0 + f_1\alpha_{w} + f_2\alpha_{w}^2 +\cdots + f_{h_1}(\alpha_{w}^{h_1} +\eta_1 \alpha_w^{t-1+t_1})+ f_{h_1+1}\alpha_{w}^{h_1+1} +\cdots %\\
        + f_{h_\ell}(\alpha_{w}^{h_\ell} +\eta_\ell \alpha_w^{t-1+t_\ell})+ f_{h_\ell+1}\alpha_{w}^{h_\ell+1} +\cdots + f_{t-1}\alpha_{w}^{t-1}=0
	\end{multline*} 
	for $w\in T $. In other words, for $f(x)=f_0+f_1x+\cdots+f_{t-1}x^{t-1}+f_{h_1}\eta_1x^{t-1+t_1}+\cdots+f_{h_\ell}\eta_\ell x^{t-1+t_\ell}$, we consider $f(\alpha_w)=0$ for $w\in T $. 
	The non-singularity of $H_{T }$ follows if $f\equiv 0$.
	In this direction, let $f(x)=\sigma(x)\gamma(x),$ where $\sigma(x)=\prod_{i\in T } (x-\alpha_i)$ and $\gamma(x)=\sum_{i=0}^{t_\ell-1} \gamma_i x^i$. 
	Thus all the coefficients of $\{x^t,x^{t+1},\dots, x^{t+t_\ell-1}\}\setminus\{x^{t-1+t_1},x^{t-1+t_2},\dots, x^{t-1+t_\ell}\}$ in $f(x)$ are zero, 
	whereas, the remaining  $\ell$ coefficients equals $\eta_if_{h_i}$ for $1\leq i \leq \ell$.
	This generates a system of $t_\ell$ equations in $t_\ell$ variables $\gamma_0,\gamma_1,\dots,\gamma_{t_\ell-1}$ as 
		$\sum_{j=0}^i \sigma_{i-j}\gamma_j=0,\text{ for } i \in \{t,t+1,\dots,t-1+t_\ell\}\setminus 
                                                \{t-1+t_1,\dots,t-1+t_\ell\}$ and
            $\sum_{j=0}^{t-1+t_s} \sigma_{t-1+t_s-j}\gamma_j=\eta_s f_{h_s},\text{ for } s \in \{1,2,\dots,\ell\},
                                                \text{ where }f_{h_s}=\sum_{i=0}^{h_s} \sigma_{h_s-i}\gamma_{i}$.
	This means we arrive at a homogeneous system with variables $(\gamma_{t_\ell-1},\gamma_{t_\ell-2},\dots,\gamma_0)$; and in order to have $f\equiv 0$, we must have $\gamma(x)\equiv 0$, i.e. the homogeneous system must have a non-singular scalar matrix. Now, merely rearranging the terms completes the proof.

For instance, consider a multi-twisted Goppa code $\tgoppa$ with $t=2$, $\ell=2$, $\bm{\mathbbm{t}}=(2,4)$, $\bm{\mathbbm{h}}=(0,1)$ and $\mathcal{L}=\{\alpha_i\}_{i=1}^n$.
Corresponding to some $2$ sized subset $T\subset \{1,2,\dots,n\}$, consider $f(x)=f_0+f_1x+\eta_1f_0x^3+\eta_2f_1x^5$ with $f(\alpha_w)=0$ for $\alpha_w\in T$. 
Here, $f(x)=\sigma(x)\gamma(x)$ for $\sigma(x)=\prod_{i\in T}(x-\alpha_i)$ and $\gamma(x)\in\mathbb{F}_{q^m}[z]$ with $\deg \gamma(x)=3$. 
On comparing coefficients of $x^2$, $x^3$, $x^4$ and $x^5$ both sides, we obtain the following equalities:
    \begin{align*}
        0 &= \sigma_2\gamma_0+ \sigma_1\gamma_1+ \sigma_0\gamma_2\\
        \eta_1f_0 &= \sigma_2\gamma_1+ \sigma_1\gamma_2+ \sigma_0\gamma_3\\
        0 &= \sigma_2\gamma_2+ \sigma_1\gamma_3\\
        \eta_2f_1 &= \sigma_2\gamma_3
    \end{align*}
    Again, substituting $f_0=\gamma_0\sigma_0$ and $f_1=\sigma_1\gamma_0+\sigma_0\gamma_1$, we obtain the system of linear equations as follows:
    \begin{equation*}
        \left( \begin{array}{cccc}
        \eta_2^{-1}\sigma_2 & 0 & -\sigma_0 & -\sigma_1\\
        \sigma_1 & \sigma_2 & 0 & 0\\
        \eta_1^{-1}\sigma_0 & \eta_1^{-1}\sigma_1 & \eta_1^{-1}\sigma_2 & -\sigma_0\\
        0 & \sigma_0 & \sigma_1 & \sigma_2\\
        \end{array}\right)
        \left(\begin{array}{c}
         \gamma_3\\ \gamma_2 \\ \gamma_1 \\ \gamma_0
         \end{array}\right)=
         \left(\begin{array}{c}
         0\\ 0 \\ 0 \\ 0
         \end{array}\right)
    \end{equation*}
    In this, the left side matrix can be seen as diag$(\eta_2^{-1},1,\eta_1^{-1},1)A_T- (\bm e_2, \bm 0,\bm e_3,\bm 0)^\mathsf{T} B_T$, where 
    \begin{equation*}
        A_T=\left( \begin{array}{cccc}
        \sigma_2 & 0 & 0 & 0\\
        \sigma_1 & \sigma_2 & 0 & 0\\
        \sigma_0 & \sigma_1 & \sigma_2 & 0\\
        0 & \sigma_0 & \sigma_1 & \sigma_2\\
        \end{array}\right) \quad \textnormal{ and } \quad
        B_T=\left( \begin{array}{cccc}
        0 & \sigma_0 & \sigma_1 & \sigma_2\\
        0 & 0 & \sigma_0 & \sigma_1\\
        0 & 0 & 0 & \sigma_0\\
        0 & 0 & 0 & 0\\
        \end{array}\right).
    \end{equation*}
    Thus, the code has minimum Hamming distance at least $2$ if and only if the above matrix is non-singular corresponding to each  $2$ sized subset $T\subset \{1,2,\dots,n\}$.
\end{proof}
Referring to Theorem~1 in \cite{BeelenStructural2018}, multi-twisted Goppa codes with a Hamming distance of at least $t + 1$ can be constructed, as stated in the following theorem.

\begin{theorem}\label{thm::mtg_dist}
    Let $\mathcal{L}=\{\alpha_1,\dots,\alpha_n\}\subseteq \mathbb{F}_{s_0}$, and $\mathbb{F}_q\subseteq\mathbb{F}_{s_0}\subsetneq \mathbb{F}_{s_1}\subsetneq \cdots \subsetneq \mathbb{F}_{s_\ell}=\mathbb{F}_{q^m}$ be such that $\bm{\eta}=(\eta_1,\eta_2,\dots,\eta_\ell)$, where $\eta_i\in \mathbb{F}_{s_i}\setminus \mathbb{F}_{s_{i-1}}$ for  $1\leq i\leq \ell$, and $g(z)\in\mathbb{F}_{s_0}[z]$, with $\deg g(z)=t$, and $\bm{\mathbbm{t}}$,  $\bm{\mathbbm{h}}$ be as per Definition \ref{def::MTGC}, then $d(\tgoppa)\geq t+1$.
\end{theorem}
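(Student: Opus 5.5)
The plan is to show directly that every $t$ columns of the matrix $H$ in Proposition~\ref{prop::parity_mat_mtgoppa} are linearly independent over $\mathbb{F}_{q^m}$. As in the proof of Theorem~\ref{thm::dist_t}, this reduces to a statement about twisted polynomials vanishing on a $t$-set. So fix a $t$-subset $T\subset\{1,\dots,n\}$. The factor $D$ is invertible diagonal, so it suffices to show that no nonzero
\[
f(x)=\sum_{i=0}^{t-1} f_i x^i+\sum_{j=1}^{\ell}\eta_j f_{h_j} x^{t-1+t_j}
\]
vanishes at all $\alpha_w$, $w\in T$. We argue by contradiction: suppose such an $f$ exists.

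The idea, following Theorem~1 of \cite{BeelenStructural2018}, is to peel off the twists one field at a time by induction on $\ell$. Write $f=f^{(\ell-1)}(x)+\eta_\ell f_{h_\ell}x^{t-1+t_\ell}$, where $f^{(\ell-1)}$ has coefficients in $\mathbb{F}_{s_{\ell-1}}(\text{coefficients }f_i)$. Since the linear system ``$f(\alpha_w)=0$ for $w\in T$'' has coefficients only in $\mathbb{F}_{s_\ell}$, it is cleaner to phrase the argument via determinants.

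Consider the $t\times t$ matrix $H_T$, with $D$ removed. Each row is either $\alpha^{i}$, or $\alpha^{h_j}+\eta_j\alpha^{t-1+t_j}$. Expand $\det H_T$ by multilinearity in the $\eta_\ell$-row, the only row involving $\eta_\ell$ (we assume the $h_j$ are distinct here, as in the paper's matrix). This gives
\[
\det H_T=\Delta_0+\eta_\ell\Delta_1 .
\]
Here $\Delta_0$ is the determinant of the matrix with $\eta_\ell$ removed from its row, and $\Delta_1$ is the determinant with that row replaced by $\alpha^{t-1+t_\ell}$. Both $\Delta_0$ and $\Delta_1$ lie in $\mathbb{F}_{s_{\ell-1}}$, because $\alpha_i,\eta_1,\dots,\eta_{\ell-1}\in\mathbb{F}_{s_{\ell-1}}$.

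The key step is to show $\Delta_0\neq 0$. Then $\det H_T=0$ would force $\eta_\ell=-\Delta_0/\Delta_1\in\mathbb{F}_{s_{\ell-1}}$ (or $\Delta_0=0$ if $\Delta_1=0$), contradicting $\eta_\ell\notin\mathbb{F}_{s_{\ell-1}}$. Now $\Delta_0$ is exactly $\det H_T$ for the multi-twisted Goppa code with the last twist deleted. By induction on $\ell$, with $\mathbb{F}_{s_{\ell-1}}$ playing the role of the top field, $\Delta_0\neq0$. The base case $\ell=0$ is the Vandermonde determinant $\prod_{i<j\in T}(\alpha_j-\alpha_i)\neq 0$, which holds because the $\alpha_i$ are distinct. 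Hence $\det H_T\neq0$ for every $T$, every $t$ columns of $H$ are independent over $\mathbb{F}_{q^m}\supseteq\mathbb{F}_q$, and $d\ge t+1$.

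The main obstacle is the multilinear expansion when several $h_j$ coincide. Definition~\ref{def::MTGC} allows $h_j=h_{j+1}$, and then one row carries several twists $\eta_j\alpha^{t-1+t_j}$. In that case I would treat the row as $\alpha^{h}+\sum_j\eta_j\alpha^{t-1+t_j}$. Multilinearity still isolates $\eta_\ell$ linearly, since the row is affine in $\eta_\ell$ and no other row contains it. The remainder $\Delta_0$ is then the determinant of the same construction with fewer twists, so the induction goes through unchanged. I would also check that $g\in\mathbb{F}_{s_0}[z]$ is only needed for $D$ to be invertible (it is, since $g(\alpha_i)\ne0$), and not for the field argument, because $D$ is removed before taking determinants.
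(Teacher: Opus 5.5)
Your argument is correct and is the paper's own: both write $\det H_T$ as an $\mathbb{F}_{s_{\ell-1}}$-linear combination of $1$ and $\eta_\ell$, then descend the tower one twist at a time. Your induction follows only the $\eta_\ell$-free part $\Delta_0$ down to the Vandermonde determinant, which supplies the nonvanishing that the paper's last step (``which is not possible'') leaves implicit; your remarks that coincident $h_j$ and the field of definition of $g$ cause no trouble are also right.
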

\begin{proof}
    The parity-check matrix $H$ of $\tgoppa$ over $\mathbb{F}_{q^m}$ is described in Proposition \ref{prop::parity_mat_mtgoppa}.
    % , and it can be seen as generator matrix of multi-twisted Reed-Solomon code
    The determinant of an arbitrary $t\times t$ submatrix of $H$ can be expressed as $a(u_\ell \eta_\ell + v_\ell)$, where $a\in \mathbb{F}_{q^m}^*$, $u_\ell,v_\ell\in\mathbb{F}_{s_{\ell-1}}$. Thus, it is zero if and only if both $u_\ell=v_\ell=0$. Proceeding recursively, both $u_\ell$ and $v_{\ell}$ can be expressed as linear combination of $\eta_{\ell-1}$ and $1$ over $\mathbb{F}_{s_{\ell-2}}$, which vanish if and only if corresponding coefficients vanish. In conclusion, the determinant of an arbitrary $t\times t$ sized submatrix of $H$ is zero if and only if a linear combination of $\eta_1$ and $1$ over $\mathbb{F}_{s_0}$ is zero, which is not possible (by assumptions). Therefore, every $t$ columns of $H$ are linearly independent over $\mathbb{F}_{q^m}$, in particular over $\mathbb{F}_q$. This completes the proof.
\end{proof}

\begin{definition}
	[Expurgated Multi-twisted Goppa codes]
	These are subcodes family of multi-twisted Goppa code, denoted by $\widetilde{\Gamma}(\mathcal{L},g, \bm{\mathbbm{t}},\bm{\mathbbm{h}}, \bm{\eta})$, defined as 
	\begin{equation}\resizebox{0.5\linewidth}{!}{$
		\bigg\{(c_1,c_2,\dots,c_n)\in \mathbb{F}_q^n : (c_1,c_2,\dots,c_n)\in \tgoppa \text{ and } \sum_{i=1}^n c_i=0\bigg\}$.}
	\end{equation}
\end{definition}
A parity-check matrix of $\widetilde{\Gamma}(\mathcal{L},g, \bm{\mathbbm{t}},\bm{\mathbbm{h}}, \bm\eta)$, denoted by $\widetilde{H}$, is given by 
\begin{equation}\label{pc-mat-e-goppa}
		\resizebox{0.5\linewidth}{!}{%
        $
                \widetilde{H}:=\left(\begin{array}{cccc}
			g(\alpha_1)^{-1} & g(\alpha_2)^{-1} & \cdots & g(\alpha_n)^{-1}\\
			\alpha_1g(\alpha_1)^{-1} & \alpha_2g(\alpha_2)^{-1} & \cdots & \alpha_ng(\alpha_n)^{-1} \\
			\vdots & \vdots & \cdots & \vdots \\
			(\alpha_1^{h_1}+ \eta_1 \alpha_1^{t-1+t_1})g(\alpha_1)^{-1} & (\alpha_2^{h_1}+ \eta_1 \alpha_2^{t-1+t_1})g(\alpha_2)^{-1} & \cdots & (\alpha_n^{h_1}+ \eta_1 \alpha_n^{t-1+t_1})g(\alpha_n)^{-1} \\
			\alpha_1^{h_1+1}g(\alpha_1)^{-1}&\alpha_2^{h_1 +1}g(\alpha_2)^{-1}&\cdots&\alpha_n^{h_1 +1}g(\alpha_n)^{-1}\\
			\vdots &\vdots &\vdots &\vdots \\
			(\alpha_1^{h_2}+ \eta_2 \alpha_1^{t-1+t_2})g(\alpha_1)^{-1} & (\alpha_2^{h_2}+ \eta_2 \alpha_2^{t-1+t_2})g(\alpha_2)^{-1} & \cdots & (\alpha_n^{h_2}+ \eta_2 \alpha_n^{t-1+t_2})g(\alpha_n)^{-1}\\
			\alpha_1^{h_2+1}g(\alpha_1)^{-1}&\alpha_2^{h_2 +1}g(\alpha_2)^{-1}&\cdots&\alpha_n^{h_2 +1}g(\alpha_n)^{-1}\\
			\vdots &\vdots &\vdots &\vdots \\
			(\alpha_1^{h_{\ell}}+ \eta_{\ell} \alpha_1^{t-1+t_{\ell}})g(\alpha_1)^{-1} & (\alpha_2^{h_{\ell}}+ \eta_{\ell} \alpha_2^{t-1+t_{\ell}})g(\alpha_2)^{-1} & \cdots & (\alpha_n^{h_{\ell}}+ \eta_{\ell} \alpha_n^{t-1+t_{\ell}})g(\alpha_n)^{-1}\\
			\alpha_1^{h_{\ell}+1}g(\alpha_1)^{-1}&\alpha_2^{h_{\ell} +1}g(\alpha_2)^{-1}&\cdots&\alpha_n^{h_{\ell} +1}g(\alpha_n)^{-1}\\
			\vdots &\vdots &\vdots &\vdots \\
			\alpha_1^{t-1}g(\alpha_1)^{-1} & \alpha_2^{t-1}g(\alpha_2)^{-1} & \cdots &  \alpha_n^{t-1}g(\alpha_n)^{-1} \\
                1&1&\cdots&1\\
		\end{array}\right)$}.\end{equation}

\begin{prop}
    If $\sum_{j=1}^\ell g_{h_j} \eta_j \alpha_k^{t-1+t_j} = g_t \alpha_k^t$ for $1 \leq k \leq n$, then 
    $\widetilde{\Gamma}(\mathcal{L}, g, \bm{\mathbbm{t}},\bm{\mathbbm{h}}, \bm\eta) = \tgoppa$.
\end{prop}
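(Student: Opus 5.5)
The plan is to show that the extra row $(1,1,\dots,1)$ of $\widetilde{H}$ in \eqref{pc-mat-e-goppa} already lies in the $\mathbb{F}_{q^m}$-row span of the matrix $H$ of Proposition~\ref{prop::parity_mat_mtgoppa}. Once this holds, $\mathrm{Null}(\widetilde{H})=\mathrm{Null}(H)$. Intersecting with $\mathbb{F}_q^n$ then gives $\widetilde{\Gamma}(\mathcal{L},g,\bm{\mathbbm{t}},\bm{\mathbbm{h}},\bm\eta)=\tgoppa$. The inclusion $\widetilde{\Gamma}\subseteq\Gamma$ is immediate from the definition, so only the reverse inclusion needs this argument.

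Index the rows of $H$ as $R_0,R_1,\dots,R_{t-1}$, so that $R_i$ is the $(i+1)$-th row. Its $k$-th entry is $\alpha_k^{i}g(\alpha_k)^{-1}$, plus the twisted term $\sum_{j:\,h_j=i}\eta_j\alpha_k^{t-1+t_j}g(\alpha_k)^{-1}$. Since the definition allows $h_j=h_{j+1}$, one row may carry several twists; the sum over all $j$ with $h_j=i$ takes care of this. Mirroring the classical Goppa argument, we take the combination $\sum_{i=0}^{t-1}g_iR_i$ with the coefficients of $g$. Its $k$-th entry is
\begin{equation*}
g(\alpha_k)^{-1}\Big(\sum_{i=0}^{t-1}g_i\alpha_k^{i}+\sum_{j=1}^{\ell}g_{h_j}\eta_j\alpha_k^{t-1+t_j}\Big).
\end{equation*}
Each twist $j$ appears exactly once here, namely in the row indexed $h_j$, and with coefficient $g_{h_j}$.

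Now apply the hypothesis $\sum_{j}g_{h_j}\eta_j\alpha_k^{t-1+t_j}=g_t\alpha_k^t$. The bracket becomes $\sum_{i=0}^{t-1}g_i\alpha_k^i+g_t\alpha_k^t=g(\alpha_k)$. Since $g(\alpha_k)\neq 0$, the entry equals $1$ for every $k$, so $\sum_i g_iR_i=(1,\dots,1)$.

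The only step needing real care is the bookkeeping of twisted rows when several $h_j$ coincide, so that every twist term is counted once with coefficient $g_{h_j}$. After that, the conclusion follows: any $\bm c\in\tgoppa$ satisfies $H\bm c^{T}=\bm 0$, hence $\sum_k c_k=0$, hence $\bm c\in\widetilde{\Gamma}$.
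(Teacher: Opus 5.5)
Your proof is correct and matches the paper's argument. The paper subtracts $\sum_{i=0}^{t-1} g_i R_i$ from the all-ones row and notes that the $k$-th entry of the result, $g(\alpha_k)^{-1}\bigl(g_t\alpha_k^t-\sum_{j} g_{h_j}\eta_j\alpha_k^{t-1+t_j}\bigr)$, vanishes under the hypothesis; this is exactly your identity $\sum_i g_iR_i=(1,\dots,1)$, and your explicit treatment of coinciding $h_j$ is a bookkeeping refinement the paper leaves implicit.
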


\begin{proof}
    The last row of $\widetilde{H}$ is $(1, 1, \dots, 1)$, where for $1\le k\le n$, $k$th entry is 
    $1 = g(\alpha_k)^{-1} g(\alpha_k) = g(\alpha_k)^{-1} \left( \sum_{i=0}^t g_i \alpha_k^i \right)$.
    Performing elementary row operations, we obtain the last row as $(z_1, z_2, \dots, z_n)$, where $     z_k = g(\alpha_k)^{-1} \left( g_t \alpha_k^t - \sum_{j=1}^\ell g_{h_j} \eta_j \alpha_k^{t-1+t_j} \right)$, for each $k$. 
    Using the given hypothesis, the last row becomes a linear combination of the preceding rows, 
    proving the desired result.
\end{proof}

By substituting the specific values $\ell = 1$, $t_1 = 1$, $h_1 = t - 1$, and $g_t = 1$, the above result reduces to Proposition~2.6~(i) in \cite{Sui2023}.

\begin{prop}
    Considering $\ell=1$, $t_1=1$ and $g_t=1$ in Definition of $\widetilde{\Gamma}(\mathcal{L},g, \bm{\mathbbm{t}},\bm{\mathbbm{h}}, \bm\eta)$, if $g_{h_1}\eta_1\neq 1$, then $\widetilde{\Gamma}(\mathcal{L},g, \bm{\mathbbm{t}},\bm{\mathbbm{h}}, \bm\eta)=\widetilde{\Gamma}(\mathcal{L},g)$.
\end{prop}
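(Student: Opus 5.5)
The plan is to show that, over $\mathbb{F}_{q^m}$, the parity-check matrix $\widetilde{H}$ of \eqref{pc-mat-e-goppa} (with $\ell=1$, $t_1=1$, $g_t=1$) and the parity-check matrix of the classical expurgated Goppa code $\widetilde{\Gamma}(\mathcal{L},g)$ have the same row space. Equal row spaces give equal null spaces in $\mathbb{F}_{q^m}^n$. Intersecting with $\mathbb{F}_q^n$ then gives equality of the two codes.

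For $0\le i\le t$, let $R_i=(\alpha_1^i g(\alpha_1)^{-1},\dots,\alpha_n^i g(\alpha_n)^{-1})$ and write $h=h_1$, $\eta=\eta_1$. The classical expurgated code has parity-check rows $R_0,\dots,R_{t-1}$ together with the all-ones row $\bm 1$. With $\ell=1$ and $t_1=1$, the matrix $\widetilde{H}$ has rows $R_i$ for $i\neq h$, $i\le t-1$, the twisted row $R_h+\eta R_t$ (since $\alpha^{t-1+t_1}=\alpha^t$), and $\bm 1$.

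The key identity, exactly as in the proof of the previous proposition, is
\[
\bm 1=\sum_{i=0}^{t} g_i R_i = R_t+\sum_{i=0}^{t-1} g_i R_i ,
\]
coordinatewise, because $g(\alpha_k)^{-1}g(\alpha_k)=1$ and $g_t=1$. From this identity the classical row space is $V:=\operatorname{span}(R_0,\dots,R_t)$: it contains $R_0,\dots,R_{t-1}$ and $\bm 1$, and conversely $R_t=\bm 1-\sum_{i<t}g_iR_i$.

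For the twisted side, the row space $W$ is clearly contained in $V$. For the reverse inclusion, take $\bm 1$ and subtract $g_h(R_h+\eta R_t)$ and $g_iR_i$ for all $i\ne h$, $i<t$. This leaves $(1-g_h\eta)R_t\in W$. The hypothesis $g_{h}\eta\neq 1$ makes the scalar $1-g_h\eta$ invertible, so $R_t\in W$. Then $R_h=(R_h+\eta R_t)-\eta R_t\in W$, so every $R_i$ lies in $W$ and $W=V$.

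The only real point is isolating $R_t$ using $g_h\eta\ne1$, which is where the hypothesis enters. Everything else is bookkeeping of row operations. Note that $g_t=1$ is what makes the coefficient of $R_t$ in $\bm 1$ equal to $1$, so that it cancels against $g_h\eta$.
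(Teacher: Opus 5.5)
Your proposal is correct and follows essentially the same route as the paper. In both, you subtract suitable multiples of the other rows from the all-ones row to obtain $(1-g_{h_1}\eta_1)R_t$, then use $g_{h_1}\eta_1\neq 1$ to isolate $R_t$; your write-up also makes explicit the ``further reductions'' the paper leaves implicit, namely recovering $R_{h_1}$ from the twisted row and noting that $g_t=1$ makes the classical expurgated row space equal to $\operatorname{span}(R_0,\dots,R_t)$.
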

\begin{proof}
    The last row of $\widetilde{H}$, upon row transformations, reduces to $(z_1,z_2,\dots,z_n)$, where $z_k=g(\alpha_k)^{-1}(1-g_{h_1}\eta_1)\alpha_k^t$ for each $k$. 
    If $1-g_{h_1}\eta_1$ is a non-zero constant, then further reductions yield $\widetilde{H}$ to parity-check matrix of $\widetilde{\Gamma}(\mathcal{L},g)$, which completes the proof.
\end{proof}
By choosing the specific hook position $h_1 = t - 1$, we recover Proposition~2.6~(ii)  from \cite{Sui2023} as a special case of our construction.  
% This structural relationship highlights the significance of the proposed framework: by appropriately expurgating redundant codewords, the resulting Expurgated Multi-Twisted Goppa (EMTG) codes can achieve a higher Hamming distance, potentially exceeding $t + 1$, thereby offering improved error-correcting capability compared to classical constructions.

% Towards this, the following result provides sufficient conditions on an Expurgated single twisted Goppa code so that the distance becomes $t+2$.

% {\color{red}
% \begin{prop}
% 	In addition to the conditions of Theorem \ref{thm::dist_t}, if $\sum_{s=1}^\ell g_{h_s} \eta_s \alpha_i^{t-1+t_s}=\alpha_i^t$ for $1\leq i\leq n$, the Hamming distance of $\widetilde{\Gamma}(\mathcal{L},g, \bm{\bm{t}},\bm{\mathbbm{h}}, \bm\eta)$ is $t$ (i.e. $(t+1)$-th row of $H$ is linearly dependent to first $t$ rows). 
% 	Moreover, in case of $\ell=1$ and $t_1=1$, if $\eta_1 g_{h_1}\neq 1$, then $d\left( \widetilde{\Gamma}(\mathcal{L},g, \bm{\bm{t}},\bm{\mathbbm{h}}, \bm\eta)\right)\geq t+2$.
% \end{prop}}

% \subsection{Dual codes of MTRS codes}
\begin{remark}
    Multi-twisted Goppa codes can be viewed as subfield subcodes of the dual of multi-twisted Reed-Solomon codes.
    Moreover, if we consider the underlying finite field $\mathbb{F}_{q^m}=\mathbb{F}_q$ in the Definition \ref{def::MTGC} of multi-twisted Goppa codes, then their (Euclidean) dual corresponds to multi-twisted Reed Solomon codes.
\end{remark}

\begin{remark}
    In Definition~\ref{def::MTGC}, consider the case $m = 1$ and $\deg g(z) = k$. 
    Let $\bm{v} = (g(\alpha_1)^{-1}, g(\alpha_2)^{-1}, \dots, g(\alpha_n)^{-1})$. 
    Under these settings, the dual codes of $\tgoppa$ correspond to the 
    \emph{generalized multi-twisted  Reed-Solomon (MT-GRS) codes} $\mathcal{C}_k(\bm{\alpha}, \bm{\mathbbm{t}}, \bm{\mathbbm{h}}, \bm{\eta}, \bm{v})$, as defined in (\ref{eqn::MT GRS}).
    This construction establishes a direct algebraic correspondence between MTG codes and the family of MT-GRS codes, thereby providing a unified framework that connects Goppa-type and Reed-Solomon-type code structures through the twisting parameters 
$\bm{\mathbbm{t}}$, $\bm{\mathbbm{h}}$, and $\bm{\eta}$.
\end{remark}

\section{Twisted Goppa codes and their Decoding}\label{Section: Twisted Goppa Codes}
Let $\mathcal{L} = \{\alpha_1, \alpha_2, \dots, \alpha_n\} \subseteq \mathbb{F}_{s_0}^*$, where 
$\mathbb{F}_q \subseteq \mathbb{F}_{s_0} \subsetneq \mathbb{F}_{q^m}$, and let 
$\eta \in \mathbb{F}_{q^m} \setminus \mathbb{F}_{s_0}$. 
Consider a polynomial $g(z)$ of degree $t$ over $\mathbb{F}_{s_0}$ such that 
$g(\alpha_i) \neq 0$ for all $1 \leq i \leq n$. 
The \emph{twisted Goppa code} $\Gamma(\mathcal{L}, g, t_1, h, \eta)$ corresponds to the 
single-twist case $(\ell = 1)$ of the multi-twisted Goppa code family defined in 
Definition~\ref{def::MTGC}. 
According to Theorem~\ref{thm::mtg_dist}, the minimum Hamming distance of 
$\Gamma(\mathcal{L}, g, t_1, h, \eta)$ is at least $t + 1$, implying that the code is capable of 
correcting up to $\lfloor t/2 \rfloor$ errors.

In this section, we propose an explicit decoding algorithm for the twisted Goppa code 
$\Gamma(\mathcal{L}, g, t_1, h, \eta)$ capable of correcting up to $\lfloor t/2 \rfloor$ errors. 
The algorithm provides a constructive realization of the theoretical error-correction bound established in 
Theorem~\ref{thm::mtg_dist} and serves as a foundation for extending decoding methods to the broader class of 
multi-twisted Goppa (MTG) codes.

To begin with, let $H\in\mathbb{F}_{q^m}^{t\times n}$ be a parity-check matrix of the code $\Gamma(\mathcal{L}, g, t_1, h, \eta)$. Then $H$ is given by 
\begin{equation}\label{pc-mat}
    \resizebox{0.5\linewidth}{!}{%
    $\left(\begin{array}{cccc}
			g(\alpha_1)^{-1} & g(\alpha_2)^{-1} & \cdots & g(\alpha_n)^{-1}\\
			\alpha_1g(\alpha_1)^{-1} & \alpha_2g(\alpha_2)^{-1} & \cdots & \alpha_ng(\alpha_n)^{-1} \\
			\vdots & \vdots & \cdots & \vdots \\
			(\alpha_1^{h}+ \eta \alpha_1^{t-1+t_1})g(\alpha_1)^{-1} & (\alpha_2^{h}+ \eta \alpha_2^{t-1+t_1})g(\alpha_2)^{-1} & \cdots & (\alpha_n^{h}+ \eta \alpha_n^{t-1+t_1})g(\alpha_n)^{-1} \\
			\alpha_1^{h+1}g(\alpha_1)^{-1}&\alpha_2^{h +1}g(\alpha_2)^{-1}&\cdots&\alpha_n^{h+1}g(\alpha_n)^{-1}\\
			\vdots &\vdots &\cdots &\vdots \\
			\alpha_1^{t-1}g(\alpha_1)^{-1} & \alpha_2^{t-1}g(\alpha_2)^{-1} & \cdots &  \alpha_n^{t-1}g(\alpha_n)^{-1} \\
    \end{array}\right)$}.
\end{equation}
Let $\bm{y} \in \mathbb{F}_q^n$ denote the received vector, which differs from a transmitted codeword 
$\bm{c}$ by at most $t/2$ errors, i.e., $\bm{y} = \bm{c} + \bm{e}$, where 
$\bm{e} \in \mathbb{F}_q^n$ and ${wt}(\bm{e}) \leq t/2$. 
The corresponding syndrome vector is computed as $\bm{s}^T = H \bm{y}^T$, where
\[
s_l = \sum_{i=1}^{n} \alpha_i^{l} g(\alpha_i)^{-1} e_i, \quad 
0 \leq l \leq t - 1, \; l \neq h, \text{ and }
s_h = \sum_{i=1}^{n} \left(\alpha_i^{h} + \eta \, \alpha_i^{t - 1 + t_1}\right) g(\alpha_i)^{-1} e_i.
\]
Let $J \subseteq \{1, 2, \dots, n\}$ denote the set of indices corresponding to nonzero error components, 
i.e., $e_i \neq 0$ if and only if $i \in J$. 
Accordingly, the nonzero contributions to the syndrome arise only from positions indexed by $J$. 
The syndrome polynomial is then defined as
\begin{equation}\label{dfn:syn-poly}
    s(x) := s_{0+h} + s_{1+h} x + \cdots + s_{t-h-1+h} x^{t - h - 1} 
    + s_{t-h+h} x^{t - h} + \cdots + s_{t-1+h} x^{t - 1},
\end{equation}
where subscript indices are taken modulo $t$.

It is clear that $s(x)=0$ if and only if $\bm{e}=\bm{0}$. % based on the fact that the code considered has parity-check matrix consisting of every $t$ columns as linearly independent.
Upon substitution of coefficients, $$  s(x)= \eta \sum_{i\in J}\alpha_i^{t-1+t_1}g(\alpha_i)^{-1}e_i+ \sum_{l=0}^{t-1}\sum_{i\in J} \left( \alpha_i^{(l+h)\bmod t} g(\alpha_i)^{-1}e_i\right)x^l,$$ which upon rearrangements equates to $$\eta \sum_{i\in J}\alpha_i^{t-1+t_1}g(\alpha_i)^{-1}e_i+\sum_{i\in J}g(\alpha_i)^{-1}e_i\sum_{l=0}^{t-1} \left( \alpha_i^{(l+h)\bmod t} \right)x^l.$$ 
Let $\omega(x):=\sum_{l=0}^{t-1} \left( \alpha_i^{(l+h)\bmod t} \right)x^l$, and
define a linear transformation 
\begin{equation}
    \label{eqn:pi}
    \pi:\mathbb{F}_{q^m}[x]_{<t}\rightarrow\mathbb{F}_{q^m}[x]_{<t} \text{ as } x^i\mapsto x^{(i+h)\bmod t} \text{ for } 0\leq i\leq t-1. 
\end{equation}
%$\pi(f(x))=x^hf(x)\bmod (x^t-1)$
It can be observed that $\pi$ also acts as a permutation.
Thus 
\begin{align*}
    \pi(\omega(x))&=\sum_{l=0}^{t-1} \alpha_i^{(l+h)\bmod t} x^{(l+h)\bmod t} =\sum_{l=0}^{t-1} (\alpha_ix)^{(l+h)\bmod t}=\frac{1-(\alpha_ix)^t}{1-\alpha_ix}.
\end{align*}
Thus $s_\pi(x):=\eta \sum_{i\in J}\alpha_i^{t-1+t_1}g(\alpha_i)^{-1}e_i+ \sum_{i\in J}g(\alpha_i)^{-1}e_i \pi(\omega(x))$. 
Note that $s(x)$ can be computed using $s_\pi(x)$ uniquely and vice versa.
This implies $s_\pi(x)\equiv \eta \sum_{i\in J}\alpha_i^{t-1+t_1}g(\alpha_i)^{-1}e_i+ \sum_{i\in J} \frac{g(\alpha_i)^{-1}e_i}{1-\alpha_ix} \bmod x^t$.
Define the error locator polynomial as 
\begin{equation}
    \label{eqn:sigma_x}
    \sigma(x):=\prod_{i\in J}(x-\alpha_i^{-1}).
\end{equation}
Therefore, we obtain\\
\begin{equation*}
    s_\pi(x)\sigma(x)\equiv \left(\eta \sum_{i\in J}\alpha_i^{t-1+t_1}g(\alpha_i)^{-1}e_i\right)\sigma(x)-%\\
    \sum_{i\in J} \left( \alpha_i^{-1}g(\alpha_i)^{-1}e_i \prod_{j\in J\setminus\{i\}}(x-\alpha_j^{-1}) \right) \bmod x^t.
\end{equation*}
Define the error evaluator polynomial $\tau(x)$ as 
\begin{eqnarray}
&\left(\eta \sum_{i\in J}\alpha_i^{t-1+t_1}g(\alpha_i)^{-1}e_i\right)\sigma(x)-\sum_{i\in J} \left( \alpha_i^{-1}g(\alpha_i)^{-1}e_i \prod_{j\in J\setminus\{i\}}(x-\alpha_j^{-1}) \right). \label{eqn:tau_x}
\end{eqnarray}
Then,
\begin{equation}\label{eqn::9}
    \gcd(\tau(x),\sigma(x))=1 \text{ and }  s_\pi(x)\sigma(x)\equiv \tau(x)\bmod x^t.
\end{equation}

\begin{example}\label{example2}
    Considering $\mathbb{F}_{16}$, $\mathcal{L}$ and $g(z)$ as defined in Example \ref{example1}, let $\mathbb{F}_{16^2}=\mathbb{F}_{16}(c)\cong\frac{\mathbb{F}_{2}(a,b)[x]}{\langle x^2+ax+ab\rangle}$. 
    Let $\eta=c \in\mathbb{F}_{16^2}\setminus \mathbb{F}_{16}$, $t_1=1$ and $h_1=1$, then a parity-check matrix of $\Gamma(\mathcal{L},g,t_1,h,\eta)$ over $\mathbb{F}_{16^2}\cong\mathbb{F}_2(a,b,c)$ is given as follows:\\
    \resizebox{0.75\textwidth}{!}{
    $\left(\begin{array}{rrrrrrr}
        1 & a & 1 & a b & 1 & \left(a + 1\right) b & \left(a + 1\right) b  \\
        \left(\left(a + 1\right) b + 1\right) c + b + 1 & \left(\left(a + 1\right) b\right) c + b + a & \left(\left(a + 1\right) b + a\right) c + \left(a + 1\right) b & c + b + 1 & c + a & \left(\left(a + 1\right) b\right) c + a b & \left(b + a + 1\right) c + a + 1 \\
        b + a + 1 & \left(a + 1\right) b + a + 1 & a b + a + 1 & b + a & a + 1 & b & a b + a
    \end{array}\right. \cdots$}\\
    {\flushright\resizebox{0.75\textwidth}{!}{
    $\left.\begin{array}{rrrrrrr}
         a b & a & a b + 1 & a b + 1 & a & \left(a + 1\right) b & a b + 1 \\
         \left(a b + a\right) c + \left(a + 1\right) b + 1 & 0 & b c + \left(a + 1\right) b + a + 1 & \left(a b + a + 1\right) c + b & \left(\left(a + 1\right) b\right) c + a b + a + 1 & a c + b + 1 & \left(a b + 1\right) c + a b + 1 \\
        \left(a + 1\right) b + a & 0 & 1 & b + 1 & a b + a & \left(a + 1\right) b + 1 & a b + 1
    \end{array}\right)$.}\\}
    \noindent Here $\Gamma(\mathcal{L},g,t_1,h,\eta)$ is at least two dimensional subspace of $\mathbb{F}_4^{14}$.
    Let $\bm y = (1, a, a + 1, 1, 1, a + 1, 1, a, 0, 0, 0, 0, 0, a)$ be received after having one error at $14^{th}$ place and magnitude being $a$ and thus
    $\sigma(x)=(x-1)$.
    Then $H\bm y^T=((a + 1)b + a, ((a + 1)b + a)c + (a + 1)b + a, (a + 1)b + a)$ and then the syndrome polynomial $s(x)=\left(((a + 1)b + a)c + (a + 1)b + a\right) + \left((a + 1)b + a\right)x + \left((a + 1)b + a\right)x^2$.
    Rearrangement of terms of $s(x)$, on applying $\pi$, $s_\pi(x)=s(x)$. Taking $\bmod ~x^3$, $s_\pi(x) = c((a+1)b+a) + \frac{(a+1)b+a}{1-x}\bmod x^3$. Here, $\tau(x)=s_\pi(x)\sigma(x)=c((a+1)b+a)\sigma(x)+((a+1)b+a)=((a+1)b+a)(c(x-1)+1)$. 
    We can now readily see that (\ref{eqn::9}) is being satisfied.
\end{example}

\begin{lemma}\label{lemma:decoding_1tg}
    Suppose that there are at most $t/2$ errors, indexed by $J$, during transmission; then the following holds:
    \begin{enumerate}[$(i)$]
        \item $s_\pi(x)\sigma(x)\equiv \tau(x)\bmod{x^t}$
        \item $\gcd(\sigma(x),\tau(x))=1$
        \item $\deg \tau(x)\leq \deg \sigma(x)=|J|\leq t/2$.
    \end{enumerate}
\end{lemma}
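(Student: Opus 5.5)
The plan is to prove the three items in turn, working from the closed form of $s_\pi(x)$ derived just before the lemma and from the definitions \eqref{eqn:sigma_x} of $\sigma(x)$ and \eqref{eqn:tau_x} of $\tau(x)$. Throughout, write $c_i:=g(\alpha_i)^{-1}e_i$ for $i\in J$ and $E:=\eta\sum_{i\in J}\alpha_i^{t-1+t_1}c_i$. Two standing facts will be used repeatedly: every $\alpha_i$ is nonzero, since $\mathcal{L}\subseteq\mathbb{F}_{s_0}^*$, and the $\alpha_i$ are pairwise distinct, so the inverses $\alpha_i^{-1}$ are also pairwise distinct.

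For $(i)$, I first record why $\pi(\omega(x))=\frac{1-(\alpha_ix)^t}{1-\alpha_ix}\equiv\frac{1}{1-\alpha_ix}\bmod x^t$. The difference between the two sides is $\frac{(\alpha_ix)^t}{1-\alpha_ix}$, which is a power series divisible by $x^t$, because $1-\alpha_ix$ is a unit in $\mathbb{F}_{q^m}[[x]]$. Hence
\begin{equation*}
s_\pi(x)\equiv E+\sum_{i\in J}\frac{c_i}{1-\alpha_ix}\bmod x^t.
\end{equation*}
Next I multiply this congruence by $\sigma(x)$, using the identity $1-\alpha_ix=-\alpha_i(x-\alpha_i^{-1})$, which is valid because $\alpha_i\neq0$. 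It gives
\begin{equation*}
\frac{c_i\,\sigma(x)}{1-\alpha_ix}=-\alpha_i^{-1}c_i\prod_{j\in J\setminus\{i\}}(x-\alpha_j^{-1}),
\end{equation*}
which is an honest polynomial. Summing over $i\in J$ and adding $E\,\sigma(x)$ yields exactly the expression \eqref{eqn:tau_x}, so $s_\pi(x)\sigma(x)\equiv\tau(x)\bmod x^t$. I treat the expression for $s_\pi(x)$ obtained before the lemma as given. The only care needed is that multiplying a congruence of power series modulo $x^t$ by the polynomial $\sigma(x)$ preserves the congruence, which is immediate.

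For $(ii)$, the roots of $\sigma(x)$ are exactly the distinct elements $\alpha_k^{-1}$ for $k\in J$, so it suffices to show $\tau(\alpha_k^{-1})\neq0$ for each such $k$. Evaluating \eqref{eqn:tau_x} at $x=\alpha_k^{-1}$, the term $E\,\sigma(\alpha_k^{-1})$ vanishes. In the sum, every summand with $i\neq k$ contains the factor $x-\alpha_k^{-1}$ and so also vanishes. What remains is
\begin{equation*}
\tau(\alpha_k^{-1})=-\alpha_k^{-1}g(\alpha_k)^{-1}e_k\prod_{j\in J\setminus\{k\}}\bigl(\alpha_k^{-1}-\alpha_j^{-1}\bigr).
\end{equation*}
Each factor here is nonzero, since $e_k\neq0$ for $k\in J$, $g(\alpha_k)\neq0$, $\alpha_k\neq0$, and the $\alpha_j^{-1}$ are distinct. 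Therefore $\sigma(x)$ and $\tau(x)$ share no root in any extension field, and $\gcd(\sigma(x),\tau(x))=1$.

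For $(iii)$, $\sigma(x)$ is monic of degree $|J|$ by definition, and $|J|=\mathrm{wt}(\bm e)\leq t/2$ by hypothesis. In $\tau(x)$, the first term is the constant $E$ times $\sigma(x)$, so it has degree at most $|J|$. Each remaining summand is a product of $|J|-1$ linear factors, so it has degree $|J|-1$. Hence $\deg\tau(x)\leq\deg\sigma(x)$; the degree can drop below $|J|$ when $E=0$.

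The main obstacle is the bookkeeping in $(i)$: one must check that the truncation by $\pi$ agrees with $\frac{1}{1-\alpha_ix}$ modulo $x^t$ and that the placement of the $\eta$-constant in $s_\pi(x)$ is as stated. The rest is direct substitution.
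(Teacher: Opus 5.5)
Your proposal is correct and is essentially the paper's own argument. The paper gets $(i)$ in the text preceding the lemma by multiplying the closed form $s_\pi(x)\equiv \eta\sum_{i\in J}\alpha_i^{t-1+t_1}g(\alpha_i)^{-1}e_i+\sum_{i\in J}\frac{g(\alpha_i)^{-1}e_i}{1-\alpha_i x}\bmod x^t$ by $\sigma(x)$, and the evaluation $\tau(\alpha_i^{-1})=-g(\alpha_i)^{-1}e_i\alpha_i^{-1}\sigma'(\alpha_i^{-1})\neq 0$ recorded just after the lemma is what underlies $(ii)$; your nonvanishing check and degree count simply spell out what the paper leaves implicit, and your caution about using $s_\pi(x)$ as the paper defines it (with the $\eta$-term kept in the constant coefficient) is well placed, since literally applying $\pi$ to $s(x)$ would move that term to $x^h$ and $(i)$ would then in general fail for $h\neq 0$.
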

Since $\deg \tau(x) \leq \deg \sigma(x)$, on division we obtain $\tau(x)=
%\left(\eta\sum_{i\in J}  \alpha_i^{t-1+t_1}g(\alpha_i)^{-1}e_i\right)
a
\sigma(x)+w(x)$ with $\deg w(x) <\deg \sigma(x)$ for some $a\in\mathbb{F}_{q^m}$. 
Precisely, $\tau(x)$ is described in (\ref{eqn:tau_x}).
Also for $i\in J$, we have $\tau(\alpha_i^{-1})=-g(\alpha_i)^{-1} e_i\alpha_i^{-1} \prod_{f\in J\setminus \{i\}}(\alpha_i^{-1}-\alpha_f^{-1})=-g(\alpha_i)^{-1}e_i\alpha_i^{-1}\sigma'(\alpha_i^{-1})$, where $\sigma'(x)$ is derivative of $\sigma(x)$. 
This gives 
\begin{equation}\label{eqn:error_coord_val}
	e_i=-\frac{\alpha_i\tau(\alpha_i^{-1})}{g(\alpha_i)^{-1}\sigma'(\alpha_i^{-1})}.
\end{equation}
\begin{prop}\label{prop:3conditions}
    Let $\sigma(x)$ and $\tau(x)$ be some polynomials such that they satisfy the conditions of Lemma \ref{lemma:decoding_1tg}. Then $\sigma(x)$ and $\tau(x)$ are error locator and error evaluator polynomials as described in (\ref{eqn:sigma_x}) and (\ref{eqn:tau_x}) if and only if the following hold:
	\begin{itemize}
		\item $J=\{i: \sigma(\alpha_i^{-1})=0, \text{ for } 1\leq i \leq n\}$ and $|J|=\deg \sigma(x)$;
		\item $e_i=-\frac{\alpha_i\tau(\alpha_i^{-1})}{g(\alpha_i)^{-1}\sigma'(\alpha_i^{-1})}$ for $i\in J$, and $e_i=0$ otherwise;
		\item $\tau(x)=\left(\eta \sum_{i\in J}\alpha_i^{t-1+t_1}g(\alpha_i)^{-1}e_i\right)\sigma(x)+w(x)$ with $\deg w(x) <\deg \sigma(x)$.
	\end{itemize}
\end{prop}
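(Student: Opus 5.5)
The forward direction is essentially a reading of the derivation preceding the statement; the converse is the substantive part. I would prove the two directions separately.

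\emph{($\Rightarrow$).} Assume $\sigma(x)=\prod_{i\in J}(x-\alpha_i^{-1})$ and $\tau(x)$ is given by (\ref{eqn:tau_x}). Since the $\alpha_i$ are distinct and nonzero (recall $\mathcal{L}\subseteq\mathbb{F}_{s_0}^*$), the roots $\alpha_i^{-1}$ are distinct. Hence $\{i:\sigma(\alpha_i^{-1})=0\}=J$ and $\deg\sigma=|J|$.

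Evaluating (\ref{eqn:tau_x}) at $\alpha_i^{-1}$ for $i\in J$ kills the $\sigma$-term and all summands indexed by $j\neq i$. What remains is
\[
\tau(\alpha_i^{-1})=-\alpha_i^{-1}g(\alpha_i)^{-1}e_i\,\sigma'(\alpha_i^{-1}).
\]
Because $\sigma$ has simple roots, $\sigma'(\alpha_i^{-1})\neq 0$, and solving gives (\ref{eqn:error_coord_val}). For $i\notin J$ we have $e_i=0$ by the definition of $J$.

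Finally, the second sum in (\ref{eqn:tau_x}) has degree at most $|J|-1<\deg\sigma$. So $\tau=a\sigma+w$ with $a=\eta\sum_{i\in J}\alpha_i^{t-1+t_1}g(\alpha_i)^{-1}e_i$ and $\deg w<\deg\sigma$, which is the third bullet.

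\emph{($\Leftarrow$).} Suppose $(\sigma,\tau)$ satisfies Lemma \ref{lemma:decoding_1tg} and the three bullets. Let $\sigma_0,\tau_0$ denote the true error locator and evaluator.

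By the first bullet, $\sigma$ has $|J|=\deg\sigma$ distinct roots $\alpha_i^{-1}$, $i\in J$. Hence $\sigma=c\,\sigma_0$, where $c$ is the leading coefficient of $\sigma$.

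The second bullet fixes the values of $\tau$ at these roots. Comparing with the forward computation gives $\tau(\alpha_i^{-1})=c\,\tau_0(\alpha_i^{-1})$ for each $i\in J$. Hence $\tau-c\tau_0$ vanishes at all $|J|$ roots of $\sigma_0$, so $\sigma_0$ divides $\tau-c\tau_0$.

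Both $\tau$ and $c\tau_0$ have the form $c\,a\,\sigma_0+(\text{degree}<|J|)$: for $\tau$ this is the third bullet (with $\sigma=c\sigma_0$), and for $\tau_0$ it is the forward direction. Therefore $\tau-c\tau_0$ has degree $<\deg\sigma_0$ and is divisible by $\sigma_0$, so it is zero.

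We conclude $(\sigma,\tau)=c(\sigma_0,\tau_0)$. Normalizing $\sigma$ to be monic, as in (\ref{eqn:sigma_x}), forces $c=1$.

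\textbf{Main obstacle.} The delicate point is the normalization and the role of the constant $a$. The third bullet is what pins down the leading (constant-multiple-of-$\sigma$) part of $\tau$, which interpolation on $|J|$ points alone cannot determine. I would check carefully that $e_i$ in the second bullet is read as defining the error values, so the constant $a$ in the third bullet refers to those same $e_i$. I would also state explicitly that $\sigma$ is taken monic, since otherwise the equivalence holds only up to a scalar.
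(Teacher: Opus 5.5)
Your proof is correct and follows essentially the same route as the paper. The forward direction is the direct evaluation the paper dismisses as clear, and your converse (showing $\tau-c\,\tau_0$ has degree $<|J|$ yet vanishes at the $|J|$ roots $\alpha_i^{-1}$) is the same interpolation-uniqueness argument the paper phrases as recovering $w(x)$ by Lagrange interpolation from $\tau(\alpha_i^{-1})=w(\alpha_i^{-1})$ and the second bullet.

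Your explicit treatment of the scalar $c$ makes precise what the paper leaves implicit when it simply writes $\sigma(x)=\prod_{i\in J}(x-\alpha_i^{-1})$: the three bullets are invariant under $(\sigma,\tau)\mapsto c(\sigma,\tau)$, so monicity of $\sigma$ must be imposed. Working with $\tau-c\,\tau_0$ also avoids the dropped minus sign in the paper's displayed formula for $w(x)$.
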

\begin{proof}
    % \textcolor{blue}{A short proof}
    Supposing $\sigma(x)$ and $\tau(x)$ are error locator and error evaluator polynomials, i.e. they are expressed via (\ref{eqn:sigma_x}) and (\ref{eqn:tau_x}) respectively. 
    Clearly, the desired properties hold.
    Conversely, we have $\sigma(x)=\prod_{i\in J}(x-\alpha_i^{-1})$. Also, $\tau(\alpha_i^{-1})=w(\alpha_i^{-1})$ for $i\in J$ with $|J|=\deg \sigma(x)$. Since $\deg w(x)<\deg \sigma(x)$, applying Lagrange's interpolation and using the second condition, we get $w(x)=\sum_{i\in J} \left( \alpha_i^{-1}g(\alpha_i)^{-1}e_i \prod_{j\in J\setminus\{i\}}(x-\alpha_j^{-1}) \right)$. This completes the proof.
\end{proof}

\begin{lemma}
	\label{lem:uniqueness_sigma_tau}
	Consider the encoding using a twisted Goppa code $\Gamma(\mathcal{L},g,{t},{h},{\eta})$ as discussed above. Let $\bm y = \bm c+\bm e$, with $wt(\bm e)\leq \lfloor t/2\rfloor$, and $s(x)$ be the syndrome polynomial of $\bm y$. Then there is a unique pair $(\sigma(x),\tau(x))$ satisfying conditions of Proposition \ref{prop:3conditions} up to a constant multiple.
\end{lemma}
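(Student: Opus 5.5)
Existence is immediate: the error locator $\sigma(x)=\prod_{i\in J}(x-\alpha_i^{-1})$ of (\ref{eqn:sigma_x}) and the evaluator $\tau(x)$ of (\ref{eqn:tau_x}) satisfy Lemma \ref{lemma:decoding_1tg} and, by the forward direction of Proposition \ref{prop:3conditions}, its three conditions. The work is uniqueness up to a scalar. The plan is the classical key-equation argument, adapted to the fact that $\deg\tau$ may equal $\deg\sigma$ (the twist contributes the constant multiple of $\sigma$ to $\tau$), followed by a second step that uses the third condition of Proposition \ref{prop:3conditions} to pin down the leftover ambiguity.

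First, suppose $(\sigma_1,\tau_1)$ and $(\sigma_2,\tau_2)$ both satisfy $(i)$--$(iii)$ of Lemma \ref{lemma:decoding_1tg}, with $\deg\sigma_j\le t/2$ and $\deg\tau_j\le\deg\sigma_j$. From $s_\pi\sigma_j\equiv\tau_j \bmod x^t$ we get
\[
\tau_1\sigma_2\equiv s_\pi\sigma_1\sigma_2\equiv \tau_2\sigma_1 \pmod{x^t}.
\]
Both sides have degree at most $\deg\sigma_1+\deg\sigma_2\le t$. This is the main obstacle: the classical argument needs a strict inequality $<t$, which we only get when $\deg\tau_j<\deg\sigma_j$ or $|J|<t/2$. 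To handle the boundary case I would split off the leading parts. Write $\tau_j=a_j\sigma_j+w_j$ with $\deg w_j<\deg\sigma_j$. Then
\[
\tau_1\sigma_2-\tau_2\sigma_1=(a_1-a_2)\sigma_1\sigma_2+(w_1\sigma_2-w_2\sigma_1).
\]
If $\deg\sigma_1+\deg\sigma_2<t$, the congruence forces equality, so $\tau_1\sigma_2=\tau_2\sigma_1$. Then coprimality $(ii)$ gives $\sigma_1\mid\sigma_2$ and $\sigma_2\mid\sigma_1$, hence $\sigma_2=c\sigma_1$ and $\tau_2=c\tau_1$. If instead both degrees equal $t/2$, the difference is a multiple $\lambda x^t$ with $\lambda=a_1-a_2$ after comparing leading coefficients of monic $\sigma_j$.

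Second, to kill $\lambda$ in this boundary case, I would invoke the defining conditions of Proposition \ref{prop:3conditions}. The first two conditions say that each $\sigma_j$ determines a support $J_j$ and error values $e^{(j)}$. Then $H(e^{(j)})^T$ must reproduce the syndrome $s$, since the pair was built from $s_\pi$; this step uses the bijection between $s$ and $s_\pi$. So $e^{(1)}-e^{(2)}$ lies in the null space of $H$ and has weight at most $t$. Every $t$ columns of $H$ are independent by Theorem \ref{thm::mtg_dist}, so $e^{(1)}=e^{(2)}$. The supports and values then coincide, so $\sigma_1,\sigma_2$ agree up to scalar. Proposition \ref{prop:3conditions} then reconstructs $\tau$ uniquely: the constant $a=\eta\sum\alpha_i^{t-1+t_1}g(\alpha_i)^{-1}e_i$ is determined, and $w$ is fixed by Lagrange interpolation.

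In fact this second argument gives uniqueness in all cases. So I would present it as the core of the proof and use the Euclidean-style comparison only to show that any solution of Lemma \ref{lemma:decoding_1tg} is forced into this form. The delicate point to verify carefully is that a pair satisfying the conditions of Proposition \ref{prop:3conditions} indeed yields an error vector whose syndrome equals the given $s$.
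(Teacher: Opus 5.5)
Your core argument is the paper's own. The paper likewise notes that two admissible pairs would give error vectors of weight at most $t/2$, hence a difference of weight at most $t$ with zero syndrome, contradicting $d\ge t+1$ from Theorem~\ref{thm::mtg_dist}, so the pairs agree up to a scalar; your key-equation comparison is an unnecessary detour, and your existence remark is left implicit in the paper.

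The step you flag, that each candidate error vector has syndrome $s$, is exactly the one the paper takes for granted. It is cleanest to close it through $s_\pi$ rather than $s$. The three conditions force $\tau$ into the form (\ref{eqn:tau_x}) for the candidate vector $\bm e^{(j)}$, up to the common scalar. Since $\sigma(0)\neq 0$, $\sigma$ is invertible modulo $x^t$, so condition $(i)$ says that $s_\pi$ equals the polynomial $s_\pi$ built from $\bm e^{(j)}$. Hence $\bm e^{(1)}-\bm e^{(2)}$ is annihilated by the matrix (\ref{pc-mat}) with $h$ replaced by $0$. Its $t\times t$ minors are $\prod_{i\in T} g(\alpha_i)^{-1}(\det V_T+\eta\,\delta_T)$, with $V_T$ Vandermonde and $\delta_T\in\mathbb{F}_{s_0}$, and are therefore nonzero because $\eta\notin\mathbb{F}_{s_0}$. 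This route avoids the $s\leftrightarrow s_\pi$ correspondence, which is only justified for vectors over $\mathbb{F}_{s_0}$, whereas the candidate values $e_i$ a priori lie in $\mathbb{F}_{q^m}$.
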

\begin{proof}
    % \textcolor{blue}{A short proof}
    Suppose $(\sigma(x),\tau(x))$ and $(\sigma_2(x),\tau_2(x))$ satisfy Proposition \ref{prop:3conditions}. Both tuples provide error vectors of weight at most $t/2$, which means that there are two codewords at a distance of at most $t$ apart, this violates the distance of the code being at least $t+1$, referring to Theorem \ref{thm::mtg_dist}. 
    Hence, both tuples are associated with the unique error vector. 
    Thus, $(\sigma(x),\tau(x))$ and $(\sigma_2(x),\tau_2(x))$ being distinct is only possible when one is a constant multiple of the other.
\end{proof}
The above result shows that there will be a unique pair $(\sigma(x),\tau(x))$ satisfying the conditions of Proposition \ref{prop:3conditions} considering $\sigma(x)$ monic. Now, we use extended Euclid's algorithm to establish such existence.

\subsection{Decoding Algorithm for Twisted Goppa codes}
Suppose $G(x)$ and $S(x)$ be two polynomials over $\mathbb{F}_{q^m}$ such that $G(x)\ne 0$ and $\deg S(x)<\deg G(x).$ Then the \textit{extended Euclidean algorithm} (EEA) computes the remainders $\tau_i(x),$ the quotients $q_i(x)$ and the auxiliary polynomials $u_i(x)$ and $\sigma_i(x)$ as shown below:
\begin{align*}
    &\tau_{-1}(x)=G(x), &&\tau_0(x)=S(x),\\
    &u_{-1}(x)=1, &&u_0(x)=0,\\
    &\sigma_{-1}(x)=0, &&\sigma_0(x)=1.
\end{align*}
\begin{align*}
    &\tau_{i-2}(x)=\tau_{i-1}(x)q_i(x)+\tau_i(x),\;\deg\tau_i(x)<\deg\tau_{i-1}(x)\\
    &u_i(x)=u_{i-2}(x)-q_i(x)u_{i-1}(x)\\
    &\sigma_i(x)=\sigma_{i-2}(x)-q_i(x)
\sigma_{i-1}(x).
\end{align*}
Observe that $\{\deg\tau_i(x)\}$ is a strictly decreasing sequence of non-negative integers. Let $\kappa:=\max\{i: \deg\tau_i(x)\ne 0\}.$ Then it is known that $\tau_\kappa(x)=\gcd(G(x), S(x))=u_\kappa(x)G(x)+\sigma_\kappa(x)S(x).$
\begin{lemma}\label{Euclidean_lemma}
    Let $ \{q_i(x)\}, \{\tau_i(x)\}, \{u_i(x)\}, \{\sigma_i(x)\}$ be sequences obtained from the extended Euclidean algorithm applied to the polynomials $G(x)$ and $S(x)$ as described above. Then
    \begin{enumerate}[$(i)$]
        \item $\sigma_i(x)\tau_{i-1}(x)-\sigma_{i-1}(x)\tau_i(x) = (-1)^i G(x)$, for all $i\in \{ 0,1, \dots, \kappa+1\}$;
        \item  $u_i(x)G(x)+\sigma_i(x)S(x)=\tau_i(x)$, for all $i\in \{-1, 0, \dots, \kappa+1\}$;
        \item $\deg \sigma_i(x)+\deg \tau_{i-1}(x)=\deg G(x)$, for all $i\in \{ 0,1, \dots, \kappa+1\}$;
        \item $u_i(x)\sigma_{i-1}(x)-u_{i-1}(x)\sigma_i(x)=(-1)^{i+1}$, for all $i\in \{ 0,1, \dots, \kappa+1\}$.
    \end{enumerate}
    \begin{proof}
        The proof follows from Lemma 6.2 of \cite{Roth_book}.
    \end{proof}
\end{lemma}

Following the EEA-based framework of~\cite{Roth_book}, towards previous results for twisted Reed-Solomon codes, it had been shown that valid key-equation solutions correspond to a unique EEA output pair. 
In~\cite{Sui2023}, decoding was established for $\deg \sigma(x)+\deg\tau(x)<\deg g(x)$ (up to $\lfloor (t-1)/2 \rfloor$ errors), 
while~\cite{Sun2025} extended this to the boundary case $\deg \sigma(x)+\deg\tau(x)=\deg g(x)$ by expressing the solution as a linear combination of two consecutive EEA pairs. 
Here, we present an explicit proof of this including the boundary case result in the twisted Goppa setting.

\begin{lemma}\label{lem: uniqueness of nu so that sigma and tau comes from ext Euc algo}
    Let $\sigma(x), \tau(x)\in \mathbb{F}_{q^m}[x]\setminus\{0\}$ be such that $\deg \sigma(x)\leq \lfloor t/2\rfloor$ and $\deg \tau(x) \leq \deg \sigma(x)$. Suppose $\sigma(x)$ and $\tau(x)$ satisfy:
    \begin{enumerate}[$(i)$]
        \item $\sigma(x)S(x)\equiv\tau(x)\bmod G(x)$,
        \item $\gcd(\sigma(x),\tau(x))=1$,
        \item $\deg\sigma(x)+\deg\tau(x)<\deg G(x)$.
    \end{enumerate}
    If $\sigma_i(x)$ and $\tau_{i}(x),$ for $i\in \{-1, 0, \dots, \kappa+1\}$ are the polynomials obtained using extended Euclidean algorithm on polynomials $G(x)$ and $S(x),$ then there is a unique $\nu\in \{-1, 0, \dots, \kappa+1\}$ and $\mu\in\mathbb{F}_{q^m}$ such that 
    $$
    \sigma(x)=\mu\sigma_\nu(x), \, \tau(x)=\mu\tau_\nu(x).
    $$
    % \textcolor{red}{Moreover, if $\deg G(x)=t,\;\deg \tau(x)< \frac{t}{2}$ and $\deg \sigma(x)\le \frac{t}{2},$ then $\nu$ is the least index such that $\deg \tau_\nu(x)<\frac{t}{2}.$}
    \end{lemma}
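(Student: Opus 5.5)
This is the standard Roth-style uniqueness argument for extended-Euclidean key-equation solutions (Roth, Theorem 6.3). The plan is to pick $\nu$ by a degree condition on the remainders, compare the given pair with the Euclidean pair at index $\nu$, and force proportionality via the determinant identity in Lemma~\ref{Euclidean_lemma} together with $\gcd(\sigma,\tau)=1$.

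\textbf{Step 1: choosing $\nu$.} Write $\sigma S = \tau + u G$ for some $u(x)\in\mathbb{F}_{q^m}[x]$. The degrees $\deg\tau_i(x)$ strictly decrease, starting from $\deg\tau_{-1}=\deg G$, and eventually drop below $\deg\tau(x)+1$. I define $\nu$ as the unique index with
\[
\deg\tau_\nu(x)\le\deg\tau(x)<\deg\tau_{\nu-1}(x).
\]
This exists since $\deg\tau<\deg G$, and $\tau_{\kappa+1}=0$ (if $\tau_\kappa$ is the gcd) handles the bottom. By Lemma~\ref{Euclidean_lemma}(iii),
\[
\deg\sigma_\nu=\deg G-\deg\tau_{\nu-1}<\deg G-\deg\tau .
\]
By hypothesis (iii), $\deg\sigma<\deg G-\deg\tau$ as well.

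\textbf{Step 2: showing $\sigma\tau_\nu=\sigma_\nu\tau$.} Both pairs satisfy the congruence modulo $G$: the given pair by hypothesis (i), and the Euclidean pair by Lemma~\ref{Euclidean_lemma}(ii). Hence
\[
\sigma\tau_\nu-\sigma_\nu\tau=\sigma(u_\nu G+\sigma_\nu S)-\sigma_\nu(uG+\sigma S)=(\sigma u_\nu-\sigma_\nu u)G .
\]
The left side has degree at most $\max(\deg\sigma+\deg\tau_\nu,\ \deg\sigma_\nu+\deg\tau)$. Each of these is strictly less than $\deg G$, using $\deg\tau_\nu\le\deg\tau$ together with the two bounds from Step 1. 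So both sides vanish, giving $\sigma\tau_\nu=\sigma_\nu\tau$ and $\sigma u_\nu=\sigma_\nu u$.

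\textbf{Step 3: extracting $\mu$.} By Lemma~\ref{Euclidean_lemma}(iv), $\gcd(u_\nu,\sigma_\nu)=1$. Then $\sigma_\nu\mid\sigma u_\nu$ forces $\sigma_\nu\mid\sigma$, so $\sigma=\mu(x)\sigma_\nu$. Substituting back gives $\tau=\mu(x)\tau_\nu$ and $u=\mu(x)u_\nu$. Since $\mu(x)$ divides both $\sigma$ and $\tau$, hypothesis (ii) makes it a nonzero constant $\mu$. Note $\sigma_\nu\neq0$ here, because $\sigma\ne0$; this also excludes $\nu=-1$.

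\textbf{Step 4: uniqueness of $\nu$.} Suppose some $\nu'$ also satisfies $\sigma=\mu'\sigma_{\nu'}$ and $\tau=\mu'\tau_{\nu'}$. Then $\deg\tau_{\nu'}=\deg\tau$, and remainder degrees strictly decrease, so $\tau_{\nu'}\ne0$ pins $\nu'$ down. If $\tau_{\nu'}$ were zero then $\tau=0$, contradicting the hypothesis $\tau\neq0$. Hence $\nu'=\nu$.

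\textbf{Main obstacle.} The hardest part is the degree bookkeeping at the boundary of Step 1: $\nu$ must be chosen so that the bounds for $\deg\tau_\nu$ and $\deg\sigma_\nu$ hold simultaneously, including the edge cases $\nu=\kappa+1$ and $\nu=0$. Hypothesis (iii) supplies exactly the strictness needed there.
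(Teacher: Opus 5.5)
Your proof is correct and follows essentially the same route as the paper (Roth's Theorem 6.3 argument): choose $\nu$ with $\deg\tau_\nu\le\deg\tau<\deg\tau_{\nu-1}$, compare the two B\'ezout-type identities to get $\sigma\tau_\nu=\sigma_\nu\tau$ and $\sigma u_\nu=\sigma_\nu u$ by a degree count, then use $\gcd(u_\nu,\sigma_\nu)=1$ and $\gcd(\sigma,\tau)=1$ to force a constant multiplier. You go slightly beyond the paper in three places: you derive $\deg(\sigma_\nu\tau)<\deg G$ from Lemma~\ref{Euclidean_lemma}(iii), which the paper only asserts; you argue uniqueness of $\nu$ explicitly; and you note that the hypotheses involving $\lfloor t/2\rfloor$ are not needed.
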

    \begin{proof}
        Consider the case when $t$ is odd.
        In this case, $\deg \sigma(x)\leq \lfloor\frac{t-1}{2}\rfloor$. 
        Since $\deg \sigma(x)+\deg \tau(x)<\deg G(x)$, $\deg \tau(x)<\deg (G(x)=\tau_{-1}(x))$. 
        As observed $\{\deg \tau_i(x)\}$ is a decreasing sequence, there exists a unique $\nu\in\{0,1,\dots, \kappa+1\}$ such that $\deg \tau_\nu(x)\leq \deg \tau(x)<\deg \tau_{\nu-1}(x)$. 
        Using $(ii)$ of Lemma \ref{Euclidean_lemma}, we have $u_\nu(x)G(x)+\sigma_\nu(x)S(x)=\tau_\nu(x)$.
        Also $\sigma(x)S(x)\equiv\tau(x)\bmod G(x)$ implies $u(x)G(x)+\sigma(x)S(x)=\tau(x)$ for some $u(x)$.
        From these two equalities, it follows that 
        \begin{equation*}
            (u_\nu(x)\sigma(x)-u(x)\sigma_\nu(x) )G(x)= \tau_\nu(x)\sigma(x)-\sigma_\nu(x)\tau(x).
        \end{equation*}
        In particular, $\tau_\nu(x)\sigma(x)\equiv \sigma_\nu(x)\tau(x) \bmod G(x)$. 
        Moreover, since degree of both the products is strictly less than $\deg G(x)$, $\tau_\nu(x)\sigma(x)=\sigma_\nu(x)\tau(x)$. 
        Thereafter $u_\nu(x)\sigma(x)=u(x)\sigma_\nu(x)$. 
        Also, using $(iv)$ of Lemma \ref{Euclidean_lemma}, we have $\gcd(u_\nu(x),\sigma_\nu(x))=1$. 
        So there exists a $\mu(x)\in\mathbb{F}_{q^m}[x]$ such that $\sigma(x)=\mu(x) \sigma_\nu(x)$ and $\tau(x)=\mu(x)\tau_\nu(x)$.
        Now, since $\gcd(\tau(x),\sigma(x))=1$, $\mu(x)$ must be a constant, say $\mu \in\mathbb{F}_{q^m}$. 
        Using $(iii)$ of Lemma \ref{Euclidean_lemma}, $\deg\tau_{\nu-1}(x)=\deg G(x)-\deg \sigma_\nu(x)=\deg G(x)-\deg\sigma(x) \geq t-\lfloor\frac{t-1}{2}\rfloor>\lfloor\frac{t-1}{2}\rfloor$. Again, since $\{\deg \tau_i(x)\}$ is a decreasing sequence, $\deg \tau_\nu(x)=\deg \tau(x)\leq \lfloor\frac{t-1}{2}\rfloor=\lfloor t/2\rfloor$.\\
        The case when $t$ is even, i.e. $\lfloor t/2 \rfloor =t/2$. The result follows on the same lines, completing the proof.
    \end{proof}
% In Lemma \ref{lem: uniqueness of nu so that sigma and tau comes from ext Euc algo}, 
It can be observed that if $t$ is even 
% and $\deg \tau(x)<\deg \sigma(x)=\frac{t}{2},$ then there is a unique index $\nu$ and $\mu\in\mathbb{F}_{q^m}$ such that $\tau(x)=\mu\tau_\nu(x)$ and $\sigma(x)=\mu\sigma_\nu(x).$ 
and $\deg \tau(x)=\deg \sigma(x)=\frac{t}{2}$, then Lemma \ref{lem: uniqueness of nu so that sigma and tau comes from ext Euc algo} cannot be applied. 
Assuming similar conditions on these polynomials, we observe that they are also obtained via extended Euclidean algorithm, described in the following theorem.
\begin{theorem}\label{main proposition}
    Let $t=\deg G(x)$ be even and suppose $\sigma(x), \tau(x)\in \mathbb{F}_{q^m}[x]\setminus\{0\}$ satisfy
    \begin{enumerate}[$(i)$]
        \item $\sigma(x)S(x)\equiv\tau(x)\bmod G(x)$,
        \item $\gcd(\sigma(x),\tau(x))=1$,
        \item $\deg\tau(x)\le \deg\sigma(x)={t}/{2}$.
    \end{enumerate}
    If $\sigma_i(x)$ and $\tau_{i}(x),$ for $i\in \{-1, 0, \dots, \kappa+1\}$ are the polynomials obtained using extended Euclidean algorithm on  polynomials $G(x)$ and $S(x),$ then there exist $\mu_1\in\mathbb{F}_{q^m}$ and $\mu_2\in\mathbb{F}_{q^m}^*$ such that
    \begin{align*}
        \sigma(x)&=\mu_1\sigma_{\nu-1}(x)+\mu_2\sigma_{\nu}(x),\\
        \tau(x)&=\mu_1\tau_{\nu-1}(x)+\mu_2\tau_{\nu}(x),
    \end{align*}
    where $\nu$ is the least index such that $\deg \tau_\nu(x)<{t}/{2}.$ 
    Moreover, if $\deg \tau(x)<\deg \sigma(x)={t}/{2}$ in $(iii)$, then $\mu_1=0.$
\end{theorem}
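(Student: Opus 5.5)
Since the pairs $(\sigma_{\nu-1},\tau_{\nu-1})$ and $(\sigma_\nu,\tau_\nu)$ come with the ``determinant'' identity of Lemma~\ref{Euclidean_lemma}$(i)$, the plan is to solve for $(\sigma,\tau)$ in the basis given by these two pairs, in the style of Cramer's rule. First we record the degrees. Let $\nu$ be the least index with $\deg\tau_\nu(x)<t/2$, so that $\deg\tau_{\nu-1}(x)\ge t/2$. By Lemma~\ref{Euclidean_lemma}$(iii)$,
\[
\deg\sigma_\nu(x)=t-\deg\tau_{\nu-1}(x)\le t/2 \quad\text{and}\quad \deg\sigma_{\nu-1}(x)=t-\deg\tau_{\nu-2}(x)<t-\deg\tau_{\nu-1}(x)\le t/2 .
\]
Write $\sigma(x)S(x)=\tau(x)+u(x)G(x)$. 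By Lemma~\ref{Euclidean_lemma}$(ii)$ we also have $\sigma_j S=\tau_j+u_jG$ for $j=\nu-1,\nu$.

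The key step is to form the cross-differences
\[
D_j(x):=\sigma(x)\tau_j(x)-\sigma_j(x)\tau(x), \qquad j\in\{\nu-1,\nu\}.
\]
From the congruences, $D_j\equiv 0 \bmod G$. For the degrees:
\begin{itemize}
\item For $j=\nu$: $\deg(\sigma\tau_\nu)<t/2+t/2=t$ and $\deg(\sigma_\nu\tau)\le t/2+t/2=t$.
\item For $j=\nu-1$: $\deg(\sigma_{\nu-1}\tau)<t$, while $\deg(\sigma\tau_{\nu-1})=t/2+\deg\tau_{\nu-1}$, which can exceed $t$.
\end{itemize}
So $D_\nu$ has degree at most $t=\deg G$ and is divisible by $G$. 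Hence $D_\nu=c\,G$ for a constant $c$.

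The main obstacle is $D_{\nu-1}$, whose degree is not a priori below $t$. Instead of bounding it directly, I would invert the Euclidean relations. Lemma~\ref{Euclidean_lemma}$(i)$ gives
\[
\sigma_\nu\tau_{\nu-1}-\sigma_{\nu-1}\tau_\nu=(-1)^\nu G .
\]
We look for polynomials $\mu_1(x),\mu_2(x)$ with $\sigma=\mu_1\sigma_{\nu-1}+\mu_2\sigma_\nu$ and $\tau=\mu_1\tau_{\nu-1}+\mu_2\tau_\nu$. Solving this $2\times2$ system by the determinant identity gives
\[
\mu_1 G=\pm(\sigma\tau_\nu-\tau\sigma_\nu)=\pm D_\nu, \qquad \mu_2 G=\pm D_{\nu-1}.
\]
These define genuine polynomials provided $G$ divides both $D_\nu$ and $D_{\nu-1}$, which it does by the congruences. (Cleaner still: $u,\sigma$ versus $u_j,\sigma_j$ with Lemma~\ref{Euclidean_lemma}$(iv)$ give the unimodular change of basis directly.) Then:
\begin{itemize}
\item $\mu_1=\pm c$ is a constant, since $D_\nu=c\,G$.
\item To show $\mu_2$ is a constant, compare degrees in $\sigma=\mu_1\sigma_{\nu-1}+\mu_2\sigma_\nu$. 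Since $\deg\sigma_{\nu-1}<t/2=\deg\sigma$, we get $\deg(\mu_2\sigma_\nu)=t/2$, so $\deg\mu_2=t/2-\deg\sigma_\nu$.
\end{itemize}
This step is where I expect difficulty. If $\deg\sigma_\nu<t/2$, then $\mu_2$ might be a nonconstant polynomial. I would handle this through the $\tau$ equation: $\tau-\mu_1\tau_{\nu-1}=\mu_2\tau_\nu$ has degree $\le t/2$, and $\deg\tau_{\nu-1}\ge t/2$. If $\mu_1\ne0$, I would use $\deg\tau\le t/2$ to force $\deg\tau_{\nu-1}=t/2$, hence $\deg\sigma_\nu=t/2$ and $\mu_2$ constant. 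If $\mu_1=0$, then $\sigma=\mu_2\sigma_\nu$ and $\tau=\mu_2\tau_\nu$, so $\mu_2$ divides $\gcd(\sigma,\tau)=1$ and is constant. In either case $\mu_2\neq0$, because $\deg\sigma=t/2>\deg\sigma_{\nu-1}$.

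Finally, for the case $\deg\tau<t/2$, suppose $\mu_1\ne0$. The argument above gives $\deg\tau_{\nu-1}=t/2$, and then
\[
\mu_1\tau_{\nu-1}=\tau-\mu_2\tau_\nu
\]
has the left side of degree exactly $t/2$ and the right side of degree $<t/2$, a contradiction. Hence $\mu_1=0$.
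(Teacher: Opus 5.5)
Your route differs from the paper's and works, but the step you flagged as difficult is not justified as written. You assert that $\tau-\mu_1\tau_{\nu-1}=\mu_2\tau_\nu$ has degree at most $t/2$. At that point you only know $\deg\mu_2=t/2-\deg\sigma_\nu$ and $\deg\tau_\nu<t/2$. By Lemma~\ref{Euclidean_lemma}$(iii)$ these give $\deg(\mu_2\tau_\nu)<t-\deg\sigma_\nu=\deg\tau_{\nu-1}$, not $\le t/2$. Indeed, when $\mu_1\neq0$ and $\deg\tau_{\nu-1}>t/2$, the left side has degree exactly $\deg\tau_{\nu-1}>t/2$. So your claim presupposes the conclusion $\deg\tau_{\nu-1}=t/2$.

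The bound you actually have is enough. If $\mu_1\neq0$, then $\deg(\mu_2\tau_\nu)<\deg(\mu_1\tau_{\nu-1})$, so $\deg\tau=\deg\tau_{\nu-1}\ge t/2$. This forces $\deg\tau=\deg\tau_{\nu-1}=t/2$, hence $\deg\sigma_\nu=t/2$ and $\mu_2$ is constant.

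A shorter fix uses your own $D_\nu=cG$. Since $\mu_1=\pm c$, $\mu_1\neq0$ means $c\neq0$, so $\deg D_\nu=t$. Because $\deg(\sigma\tau_\nu)<t$, this forces $\deg(\sigma_\nu\tau)=t$, i.e.\ $\deg\sigma_\nu=\deg\tau=t/2$, as both are at most $t/2$. Read contrapositively, the same observation gives the ``moreover'' clause directly: $\deg\tau<t/2$ implies $\deg D_\nu<t$, so $c=0$ and $\mu_1=0$. With this repair your proof is complete.

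The paper proceeds differently. It first proves $\deg\sigma_\nu=t/2$ by a two-sided contradiction. If $\deg\sigma_\nu<t/2$, the congruence $\tau\sigma_\nu\equiv\tau_\nu\sigma \bmod G$ becomes an equality and $\gcd(\sigma,\tau)=1$ forces $\sigma\mid\sigma_\nu$. If $\deg\sigma_\nu>t/2$, Lemma~\ref{Euclidean_lemma}$(iii)$ contradicts the minimality of $\nu$. It then treats $\deg\tau<t/2$ by the same divisibility argument. For $\deg\tau=t/2$ it writes $G$ as constant multiples of the two cross-differences and concludes ``on equating''.

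You instead invert the EEA transition using the determinant identity of Lemma~\ref{Euclidean_lemma}$(i)$, or the unimodularity in $(iv)$. This makes polynomial coefficients with $\mu_1G=\pm D_\nu$ and $\mu_2G=\pm D_{\nu-1}$ automatic, and everything reduces to degree bookkeeping. Your approach handles both cases uniformly and avoids the paper's auxiliary step $\gcd(\sigma,G)=1$. It also makes explicit the Cramer-rule computation that the paper's final ``equating'' step leaves implicit. The paper's ordering has one advantage: it isolates $\deg\sigma_\nu=t/2$ as an explicit intermediate fact, which is exactly the quantity on which the decoding algorithm's Case~A/Case~B split in Theorem~\ref{thm:decoding} depends. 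In your argument that fact emerges only as a by-product.
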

\begin{proof}
    In the initialization of Euclid's algorithm, $\tau_{-1}(x)=G(x)$ and in the termination, $\tau_{\kappa+1}(x)=0$.
    Consider $\nu$ as $$\min\left\{i\in \{0, 1, \dots, \kappa+1\} : \deg \tau_i(x)<\frac{t}{2}\right\}.$$
    From the Euclidean algorithm, we have $S(x)\sigma_i(x)+ u_i(x)G(x)=\tau_i(x)$ for all $i\in\{-1,0,\dots,\kappa+1\}$. In particular, we have 
    \begin{equation}
        \label{eqn:thm3_1}
        S(x)\sigma_\nu(x)\equiv \tau_\nu(x) \bmod G(x).
    \end{equation}
    Since $\gcd(\sigma(x),\tau(x))=1$, there exist $a(x)$ and $b(x)$ s.t. $a(x)\sigma(x)+b(x)\tau(x)=1$, also $\tau(x)=\sigma(x)S(x)+c(x)G(x)$ for some $c(x)\in\mathbb{F}_{q^m}[x]$, we obtain $(a(x)+b(x)S(x))\sigma(x)+b(x)c(x)G(x)=1$, consequently $\gcd(\sigma(x),G(x))=1$.
    Therefore $S(x)\equiv \frac{\tau(x)}{\sigma(x)} \bmod G(x)$. On using this with $(\ref{eqn:thm3_1})$, we obtain
    \begin{equation}
        \label{eqn:thm3_2}
        \tau(x)\sigma_\nu(x)\equiv \tau_\nu(x)\sigma(x)\bmod G(x).
    \end{equation}
    Now we aim at showing $\deg \sigma_\nu(x)=t/2$.
    First, assume $\deg \sigma_\nu(x)<t/2$. Since $\deg \tau(x) \leq \deg \sigma(x)=t/2$, $\deg (\tau(x)\sigma_\nu(x))<t$. 
    As $\deg \tau_\nu(x)<t/2$ and $\deg \sigma(x)=t/2$, we have $\deg(\tau_\nu(x)\sigma(x))<t$. 
    Furthermore $\deg G(x)=t$ implies $\tau(x)\sigma_\nu(x)=\tau_\nu(x)\sigma(x)$.
    Since $\gcd(\sigma(x),\tau(x))=1$, $\sigma(x)|\sigma_\nu(x)$, which implies $\deg \sigma_\nu(x)\geq t/2$, a contradiction to our assumption. 
    Secondly, assume $\deg \sigma_\nu(x)>t/2$. 
    Using $(iii)$ of Lemma \ref{Euclidean_lemma}, i.e. $\deg \sigma_\nu(x)+\deg \tau_{\nu-1}(x)=\deg G(x)=t$, we have $\deg\tau_{\nu-1}(x)<t/2$ a contradiction as $\nu$ is considered smallest and $\{\deg \tau_i(x)\}$ is a strictly decreasing sequence of non negative integers.
    This shows that  $\deg \sigma_\nu(x)=t/2$. \\
    Now, if $\deg \tau(x)< \deg \sigma(x)=t/2$, using (\ref{eqn:thm3_2}), similar to Lemma \ref{lem: uniqueness of nu so that sigma and tau comes from ext Euc algo}, we obtain $\tau(x)\sigma_\nu(x)=\tau_\nu(x)\sigma(x)$. 
    Since $\gcd(\sigma(x),\tau(x))=1$, $\sigma(x)|\sigma_\nu(x)$ and $\deg \sigma(x)=\deg \sigma_\nu(x)$,
    it implies that $\sigma(x)=\mu_2\sigma_\nu(x)$ and $\tau(x)=\mu_2\tau_\nu(x)$ for some $\mu_2\in\mathbb{F}_{q^m}^*$.\\
    Since $\deg \sigma_\nu(x)+\deg\tau_{\nu-1}(x)=t$, using $(iii)$ of Lemma \ref{Euclidean_lemma}, $\deg \tau_{\nu-1}(x)=t/2$. 
    Also since $\{\deg \tau_i(x)\}$ is a strictly decreasing sequence, we obtain $\deg \sigma_{\nu-1}(x)+\deg \tau_{\nu-1}(x)<t$, and in particular $\deg \sigma_{\nu-1}(x)<t/2$.\\
    Since $\sigma_{\nu-1}(x)S(x)\equiv \tau_{\nu-1}(x)\bmod G(x)$, $\sigma_{\nu-1}(x)\frac{\tau(x)}{\sigma(x)}\equiv \tau_{\nu-1}(x)\bmod G(x)$, equivalently,
    \begin{equation}
        \label{eqn:thm3_3}
        \sigma_{\nu-1}(x)\tau(x)\equiv \tau_{\nu-1}(x)\sigma(x)\bmod G(x).
    \end{equation}
    Furthermore, if $\deg \tau(x)=\deg \sigma(x)=t/2$, then since $\deg \tau_{\nu-1}(x)=\deg \sigma_\nu(x)=t/2$,  $\deg \sigma_{\nu-1}(x)<t/2$ and $\deg \tau_{\nu-1}(x)<t/2$, we obtain
    \begin{align*}
        G(x)=&\mu_1(\tau_{\nu-1}(x)\sigma(x)-\sigma_{\nu-1}(x)\tau(x))\\
        G(x)=&-\mu_2(\tau_{\nu}(x)\sigma(x)-\sigma_{\nu}(x)\tau(x)),
    \end{align*}
    for some $\mu_1,\mu_2\in \mathbb{F}_{q^m}^*$.
    Finally since $\gcd(\sigma(x),\tau(x))=1$, on equating the above equalities, we obtain $\sigma(x)=\mu_1\sigma_{\nu-1}(x)+\mu_2 \sigma_\mu(x)$ and $\tau(x)=\mu_1 \tau_{\nu-1}(x)+\mu_2\tau_\nu(x)$.
\end{proof}

Consider the twisted Goppa code $\mathcal{C}:=\Gamma(\mathcal{L},g,{t},{h},{\eta}).$ Suppose a codeword of $\mathcal{C}$ is transmitted and $\bm{r}$ is a received word with at most $t/2$-erroneous components, that is, $d(\bm{r}, \mathcal{C})\le \frac{t}{2}.$ Let $s(x)$ be the syndrome polynomial of the received vector $\bm{r}.$ Based on Proposition \ref{prop:3conditions}, we determine the error-locator and the error-evaluator polynomial of the received word $\bm{r}$ using Lemma \ref{lem: uniqueness of nu so that sigma and tau comes from ext Euc algo} and Theorem \ref{main proposition}. %hence decode the received word $\bm{r}$.
\begin{theorem}[Decoding]\label{thm:decoding}
    Let $\mathcal{C}$ be the twisted Goppa code $\Gamma(\mathcal{L},g,{t},{h},{\eta})$. Let $\bm{r}\in\mathbb{F}_{q}^n$ be such that $d(\bm{r}, \mathcal{C})\le \frac{t}{2}$ and let $s(x)\neq 0$ be its syndrome, as defined in (\ref{dfn:syn-poly}). 
    Suppose $\sigma_i(x)$ and $\tau_i(x)$ are polynomials from extended Euclidean algorithm applied to $s_{\pi}(x)$ and $x^t$ for $i\in \{-1, 0, \dots, \kappa+1\}$ and let $\nu:=\min\left\{i\in \{0, 1, \dots, \kappa+1\} : \deg \tau_i(x)<\frac{t}{2}\right\}.$
    \begin{enumerate}[$(i)$]
        \item If $\deg \sigma_\nu(x)<\frac{t}{2},$ then $(\mu\sigma_\nu(x), \mu\tau_\nu(x))$ satisfy the conditions of Lemma \ref{lemma:decoding_1tg} and in fact, are the error-locator and error evaluator polynomials of the received word $\bm{r}.$ Here, $\mu\in\mathbb{F}_{q^m}$ is chosen such that $\mu\sigma_\nu(x)$ becomes monic.
        \item If $t$ is even and $\deg \sigma_\nu(x)=\frac{t}{2},$ then there exist $\mu_1,\mu_2\in\mathbb{F}_{q^m}$, s.t. $(\mu_1\sigma_{\nu-1}(x)+\mu_2\sigma_{\nu}(x), \mu_1\tau_{\nu-1}(x)+\mu_2\tau_\nu(x))$ satisfy the conditions of Lemma \ref{lemma:decoding_1tg} and in fact, are the error-locator and error evaluator polynomials of the received word $\bm{r}.$ The constants $\mu_1$ and $\mu_2$ are chosen such that the corresponding tuple satisfies the conditions of Proposition \ref{prop:3conditions}.
    \end{enumerate}
\end{theorem}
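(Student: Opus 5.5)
The plan is to match the actual error pair with the Euclidean outputs by invoking the two structural results already established, taking $G(x)=x^t$ and $S(x)=s_\pi(x)$. Since $\bm r=\bm c+\bm e$ with $wt(\bm e)\le t/2$, the syndrome of $\bm r$ equals that of $\bm e$. Hence the true error locator $\sigma(x)$ from (\ref{eqn:sigma_x}) and error evaluator $\tau(x)$ from (\ref{eqn:tau_x}) satisfy Lemma \ref{lemma:decoding_1tg}:
\[
s_\pi(x)\sigma(x)\equiv\tau(x)\bmod x^t,\qquad \gcd(\sigma,\tau)=1,\qquad \deg\tau\le\deg\sigma=|J|\le t/2 .
\]
Because $s(x)\neq0$, we have $\bm e\neq\bm 0$, so $\sigma(x)$ and $\tau(x)$ are nonzero. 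One also checks that $\deg s_\pi<t$, so the EEA hypotheses hold. By Lemma \ref{lem:uniqueness_sigma_tau}, this pair is the unique (monic-normalised) pair meeting the conditions of Proposition \ref{prop:3conditions}. It therefore suffices to show that the EEA produces a scalar multiple of it, or a combination of the stated form.

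\emph{Case split on the true pair.} If $\deg\sigma+\deg\tau<t$, Lemma \ref{lem: uniqueness of nu so that sigma and tau comes from ext Euc algo} gives an index $\nu'$ and a scalar $\mu$ with $(\sigma,\tau)=\mu(\sigma_{\nu'},\tau_{\nu'})$. I would then identify $\nu'$ with $\nu$, the least index having $\deg\tau_i<t/2$. We have $\deg\tau_{\nu'}=\deg\tau\le\deg\sigma$. By Lemma \ref{Euclidean_lemma}(iii), $\deg\tau_{\nu'-1}=t-\deg\sigma_{\nu'}\ge t/2$. Also $\deg\tau_{\nu'}<t/2$ holds: this is automatic unless $\deg\tau=\deg\sigma=t/2$, and that subcase is excluded by $\deg\sigma+\deg\tau<t$. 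So $\nu'=\nu$, and $\deg\sigma_\nu=\deg\sigma$. In the remaining case, $\deg\sigma+\deg\tau=t$, which forces $t$ even and $\deg\sigma=\deg\tau=t/2$. Theorem \ref{main proposition} then gives $(\sigma,\tau)=\mu_1(\sigma_{\nu-1},\tau_{\nu-1})+\mu_2(\sigma_\nu,\tau_\nu)$ with $\deg\sigma_\nu=t/2$. When $t$ is even and $\deg\sigma=t/2>\deg\tau$, the "moreover" part of the same theorem gives $\mu_1=0$ and $\deg\sigma_\nu=t/2$.

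\emph{Converting to the dichotomy in the statement.} The hypothesis of (i) is $\deg\sigma_\nu<t/2$. By the case analysis, $\deg\sigma_\nu=t/2$ whenever $\deg\sigma=t/2$. So under (i) we must have $\deg\sigma<t/2$, and we are in the first case with $(\sigma,\tau)=\mu(\sigma_\nu,\tau_\nu)$; normalising to monic fixes $\mu$. Under (ii), $\deg\sigma_\nu=t/2$. If $\deg\sigma<t/2$, the first case would give $\deg\sigma_\nu=\deg\sigma<t/2$, which is a contradiction, so $\deg\sigma=t/2$. Theorem \ref{main proposition} then provides the required $\mu_1,\mu_2$. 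Uniqueness of the pair satisfying Proposition \ref{prop:3conditions}, from Lemma \ref{lem:uniqueness_sigma_tau}, shows that choosing $\mu_1,\mu_2$ so that the three conditions hold recovers exactly the error locator and evaluator.

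The main obstacle I expect is the index-matching step: showing that the $\nu'$ supplied by Lemma \ref{lem: uniqueness of nu so that sigma and tau comes from ext Euc algo} coincides with the $\nu$ defined via the threshold $t/2$. The boundary degrees need care here. I would handle it purely through Lemma \ref{Euclidean_lemma}(iii) and the strict decrease of $\deg\tau_i$, as sketched above. A secondary point is that (ii) only asserts the existence of $\mu_1,\mu_2$. Their effective determination, for instance by requiring that $\mu_1\sigma_{\nu-1}+\mu_2\sigma_\nu$ be monic with $t/2$ roots among the $\alpha_i^{-1}$ and satisfy the third condition of Proposition \ref{prop:3conditions}, is left to the uniqueness lemma rather than proved constructively.
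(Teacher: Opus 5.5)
Your proof is correct. It uses the same two ingredients as the paper: Lemma~\ref{lem: uniqueness of nu so that sigma and tau comes from ext Euc algo} when $\deg\sigma+\deg\tau<t$, and Theorem~\ref{main proposition} in the boundary case. Only the organisation differs.

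\textbf{How the paper argues.} It splits on $\deg\sigma_\nu$.
For (i), it cross-multiplies the two key equations and uses $\deg\sigma_\nu<t/2$ to turn $\tau\sigma_\nu\equiv\sigma\tau_\nu \bmod x^t$ into an equality. Coprimality then gives $\sigma\mid\sigma_\nu$, hence $\deg\sigma<t/2$, and only then does it invoke Lemma~\ref{lem: uniqueness of nu so that sigma and tau comes from ext Euc algo}.
For (ii), it assumes $\deg\sigma<t/2$, derives $\deg\sigma_\nu<t/2$ from the lemma as a contradiction, and applies Theorem~\ref{main proposition}. This part is exactly your argument.

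\textbf{How you argue.} You classify the true pair first. You show $\deg\sigma_\nu=\deg\sigma$ whenever $\deg\sigma+\deg\tau<t$, and $\deg\sigma_\nu=t/2$ whenever $\deg\sigma=t/2$. Both (i) and (ii) then follow by contraposition.

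\textbf{What your route buys.} It explicitly identifies the index $\nu'$ supplied by Lemma~\ref{lem: uniqueness of nu so that sigma and tau comes from ext Euc algo} with the threshold index $\nu$. Lemma~\ref{Euclidean_lemma}(iii) gives $\deg\tau_{\nu'-1}=t-\deg\sigma\ge t/2$, and $\deg\tau_{\nu'}=\deg\tau<t/2$. The paper tacitly assumes $\nu'=\nu$, both in (i) and in the contradiction step of (ii).

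\textbf{What the paper's route buys.} Its cross-multiplication in (i) gets $\deg\sigma<t/2$ directly from the hypothesis, without a full case analysis. Completed with $\gcd(u_\nu,\sigma_\nu)=1$ from Lemma~\ref{Euclidean_lemma}(iv), that step could even bypass the uniqueness lemma altogether.
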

\begin{proof}
    Let $\sigma(x)$ and $\tau(x)$ be the error-locator and the error-evaluator polynomial of the received word $\bm{r}.$ 
    Then $(\sigma(x),\tau(x))$ satisfy the conditions of Lemma \ref{lemma:decoding_1tg} taking $G(x)=x^t$ and $S(x)=s_\pi(x)$. 
    Consequently, $\gcd(\sigma(x), x^t)=1$ and
    \begin{equation}\label{cong eqn 1}
        s_\pi(x)\equiv \tau(x)\sigma(x)^{-1}\bmod x^t.
    \end{equation}
    From Lemma \ref{Euclidean_lemma}(ii), 
    \begin{equation}\label{cong eqn 2}
        s_\pi(x)\sigma_\nu(x)\equiv \tau_\nu(x)\bmod x^t.
    \end{equation}
    Equations \eqref{cong eqn 1} and \eqref{cong eqn 2} imply
    \begin{equation}
        \tau(x)\sigma_\nu(x)\equiv \sigma(x)\tau_\nu(x) \bmod x^t,
    \end{equation} and if $\deg \sigma_\nu(x)<\frac{t}{2}$, we obtain
    \begin{equation}
        \tau(x)\sigma_\nu(x)=\sigma(x)\tau_\nu(x).
    \end{equation}
    Since $\gcd(\tau(x), \sigma(x))=1,\;\sigma(x)|\sigma_\nu(x)$ and $\tau(x)|\tau_\nu(x)$. Consequently, $\deg \sigma(x)\le \deg \sigma_\nu(x)<\frac{t}{2}.$ 
    Applying Lemma \ref{lem: uniqueness of nu so that sigma and tau comes from ext Euc algo} to $\sigma(x)$ and $\tau(x)$, we obtain, $\sigma(x)=\mu\sigma_\nu(x)$ and $\tau(x)=\mu\tau_\nu(x)$ for suitable choice of $\mu\in\mathbb{F}_{q^m}$.
    % Choosing $\mu$ suitably, $\sigma(x)=\mu\sigma_\nu(x)$ and $\tau(x)=\mu\tau_\nu(x).$ 
    This proves (i).\\
    %this means degree of $\sigma(x)$ is already less than $t/2$, which we can say only if we know the number of errors initially.}\\
    If $\deg \sigma_\nu(x)=\frac{t}{2},$ then by Lemma \ref{Euclidean_lemma}(iii), $\deg \tau_{\nu-1}(x)=\deg G(x)-\deg \sigma_\nu(x)=\frac{t}{2}.$ 
    Since $\deg \tau_{\nu-2}(x)>\deg \tau_{\nu-1}(x),$ again by Lemma \ref{Euclidean_lemma}(iii), $\deg \sigma_{\nu-1}(x)<\frac{t}{2}.$ 
    Suppose that $\deg \sigma(x)<\frac{t}{2},$ then by Lemma \ref{lem: uniqueness of nu so that sigma and tau comes from ext Euc algo}, $\sigma(x)=\mu \sigma_{\nu}(x).$ Consequently, $\deg \sigma_{\nu}(x)<\frac{t}{2},$ a contradiction to the hypothesis. 
    Thus, $\deg \sigma(x)=\frac{t}{2}.$ 
    By Theorem \ref{main proposition}, $\sigma(x)=\mu_1\sigma_{\nu-1}(x)+\mu_2\sigma_{\nu}(x)$ and $\tau(x)=\mu_1\tau_{\nu-1}(x)+\mu_2\tau_\nu(x)$, for suitable $\mu_1,\mu_2\in\mathbb{F}_{q^m}$ s.t. the corresponding tuple satisfies the conditions of Proposition \ref{prop:3conditions}.
    %Using (??), we have $\sigma_{\nu-1}(x)s(x)\equiv \tau_{\nu-1}(x)\bmod g(x)$, and further (\ref{cong eqn 1}) implies 
    %\begin{equation*}
        %\tau_{\nu-1}(x)\sigma(x)\equiv \tau(x)\sigma_{\nu-1}(x)\bmod g(x)
    %\end{equation*}
    %\textcolor{blue}{Use Theorem 3. Why are we not using it? }
    %{\color{red}To use Theorem 3, $\deg \sigma(x)$ must be equal to $\frac{t}{2}.$ Why can't we assume $t/2$ errors occurred during transmission? This will give degree of $\sigma(x)$ equal to $t/2$?}
    %{\color{red} Stuck here!!!!!}\\
    %{\color{blue} Suppose during transmission $t'\leq t/2$ errors occurred, which gives $\deg \sigma(x)=t'$ and $\deg \tau(x)\leq \deg \sigma(x)=t'\leq t/2$. \\
    %Case 1. If $t'<t/2$, then we have two sub-cases:\\
    %Sub case 1: $\deg \sigma_\nu(x)<t/2$, done!\\
    %Sub case 2: $\deg \sigma_\nu(x)\geq t/2$, this is not possible! (How?)\\
    %{\color{red} These subcases will not arise since $\deg \sigma_\nu(x)$ is exactly $\frac{t}{2}.$}
    %Case 2: If $t'=t/2$, then again we have two sub-cases:\\
    %Sub case 1: $\deg \sigma_\nu(x)\leq t/2$, both can be handled by Theorem 3.\\
    %Sub case 2: $\deg \sigma_\nu(x)> t/2$, this is not possible! (How?)\\}

    If $\deg \sigma_\nu(x)>\frac{t}{2},$ then by Lemma \ref{Euclidean_lemma}(ii), $\deg \tau_{\nu-1}(x)<\frac{t}{2},$ a contradiction, since $\nu$ is  the  least index such that $\deg \tau_{\nu}(x)<\frac{t}{2}.$
\end{proof}

\begin{remark}
    It is to be noted here that, taking $\eta=0$, the decoding algorithm described above is also valid for classical Goppa codes.
\end{remark}

\begin{algorithm}[H]
\caption{Decoding a Twisted Goppa Code with up to $\lfloor t/2 \rfloor$ Errors}
\label{alg:twisted-goppa-decoder}
\begin{algorithmic}[1]
  \Require Parity-check matrix $H \in \mathbb{F}_{q^m}^{t \times n}$ of $\mathcal{C}=\Gamma(\mathcal{L},g,t,h,\eta)$ (as in~(\ref{pc-mat})); received word $\bm{r} \in \mathbb{F}_q^n$
  \Ensure Error vector $\bm{e}$ and codeword $\bm{c}=\bm{r}-\bm{e}$

  \State \textit{Syndrome:} Compute $\bm{s}^T = H \bm{r}^T$ and define
  \[
    s(x) := s_{0+h} + s_{1+h} x + \cdots + s_{t-h-1+h} x^{t - h - 1} 
    + s_{t-h+h} x^{t - h} + \cdots + s_{t-1+h} x^{t - 1},
  \] where subscript indices are taken modulo $t$.
  \If{$s(x)\equiv 0$}
    \State \Return $\bm{e}=\bm{0}$, $\bm{c}=\bm{r}$
  \EndIf

  \State \textit{Twist transform:} Compute $s_\pi(x)\gets \pi(s(x))$ as defined in~(\ref{eqn:pi}).

  \State \textit{Initialize EEA:} Set 
  \[
    \tau_{-1}(x)=x^{t},\;\; \tau_{0}(x)=s_\pi(x),\;\; 
    \sigma_{-1}(x)=0,\;\; \sigma_{0}(x)=1,\;\;
    u_{-1}(x)=1,\;\; u_{0}(x)=0.
  \]

  \State\label{l7} \textit{EEA step:} Apply the extended Euclidean algorithm on $(x^{t}, s_\pi(x))$ to obtain triples $(u_i(x), \tau_i(x), \sigma_i(x))$ for $i=1,2,\dots,\kappa$, where
  \[
    \kappa \coloneqq \max\{i \mid \deg \tau_i(x) \ge 0\}.
  \]

  \State\label{l8} \textit{Index selection:} Let
  \[
    \nu \coloneqq \min\{i \in \{0,1,\dots,\kappa\} \mid \deg \tau_i(x) < t/2\}.
  \]

  \State \textit{Case A:} If $\deg \sigma_\nu(x) < t/2$ then
  \State Choose $\mu \in \mathbb{F}_{q^m}^*$ so that $\mu\sigma_\nu(x)$ is monic, and set
  \[
    \sigma(x)\gets \mu\sigma_\nu(x), \qquad \tau(x)\gets \mu\tau_\nu(x).
  \]
  \State $J \gets \{\, i \mid \sigma(\alpha_i^{-1})=0,\; 1\le i\le n \,\}$ and ensure $\gcd(\sigma,\sigma')=1$.
  \For{$i\in J$}
        \State $e_i \gets -\,\alpha_i\,g(\alpha_i)\,\dfrac{\tau(\alpha_i^{-1})}{\sigma'(\alpha_i^{-1})}$;\quad 
               \textbf{else } $e_i\gets 0$
      \EndFor
  \State \Return $\bm{e}=(e_1,\dots,e_n)$ and $\bm{c}=\bm{r}-\bm{e}$.

  \State \textit{Case B:} If $\deg \sigma_\nu(x)=t/2$ then
  \State Choose $\mu_2\in\mathbb{F}_{q^m}^*$ such that $\mu_2\sigma_\nu(x)$ is monic.
  \For{$\mu_1 \in \mathbb{F}_{q^m}$}
    \State\label{l17} $\sigma(x)\gets \mu_1\sigma_{\nu-1}(x)+\mu_2\sigma_\nu(x)$;\quad
           $\tau(x)\gets \mu_1\tau_{\nu-1}(x)+\mu_2\tau_\nu(x)$
    \State $J\gets \{\, i \mid \sigma(\alpha_i^{-1})=0,\; 1\le i\le n \,\}$
    \If{$\deg \sigma(x)=|J|$ \textbf{ and } $\gcd(\sigma,\sigma')=1$}
      \For{$i\in J$}
        \State $e_i \gets -\,\alpha_i\,g(\alpha_i)\,\dfrac{\tau(\alpha_i^{-1})}{\sigma'(\alpha_i^{-1})}$;\quad 
               \textbf{else } $e_i\gets 0$
      \EndFor
      \State\label{l25} $a \gets \eta\sum_{i\in J} g(\alpha_i)^{-1} e_i \alpha_i^{t-1+t_1}$;\quad 
             $w(x)\gets \tau(x)-a\sigma(x)$
      \If{$\deg w(x)<\deg \sigma(x)$}
        \State \Return $\bm{e}=(e_1,\dots,e_n)$,\; $\bm{c}=\bm{r}-\bm{e}$
      \EndIf
    \EndIf
  \EndFor
\end{algorithmic}
\end{algorithm}

\begin{example}\label{example3} 
    Consider the finite field $\mathbb{F}_{2^5}=\mathbb{F}_2(a)=\mathbb{F}_2[x]/\langle x^5+x^2+1\rangle$. 
    Let $t=3$ and $g(z)=z^{3} + a z^{2} + a^6 z + a^{8}$. 
    Let $\mathcal{L}=\{\alpha_i\}_{i=1}^{20}=\{a^{23},\; a^{3},\; a^{25},\; a^{6},\; a^{21},\; a^{4},\;    a^{14},\; a^{22},\; a^{20},\; a^{8},\;    a^{13},\; a^{16},\; a^{1},\; a^{9},\;    a^{11},\; a^{0},\; a^{28},\; a^{26},\; a^{2},\; a^{12} \}$, i.e. $n=20$.
    Considering $\ell=1$, with $t_1=1$, $h=1$ and $\eta\in\{\mathbb{F}_{(2^5)^2}=\mathbb{F}_{2^5}(c)\}\setminus \mathbb{F}_{2^5}$ be given as $\eta=a^{3} c + a^{17}$. 
    The parity-check matrix $H\in \mathbb{F}_2(a,c)^{3\times 20}$ is described as:\\
    \resizebox{0.9\textwidth}{!}{$
    \left(\begin{array}{rrrrrrrrrr}
        a^{4} & a^{6} & a^{18} & a^{18} & a^{25} & a^{28} & a^{8} & a^{3} & a^{30} & a^{17} \\

        a^{14}c + a^{14} & a^{18}c + a^{21} & a^{3}c + a^{14} & a^{8}c + a^{27} & a^{29}c + a^{10} & a^{12}c + a^{22} & a^{22}c + a^{4} & a^{10}c + a^{11} & c + a^{16} & a^{13}c + a^{30}  \\
        
        a^{19} & a^{12} & a^{6} & a^{30} & a^{5} & a^{5} & a^{5} & a^{16} & a^{8} & a^{2} 
    \end{array}\right.\cdots$}\\
    {\flushright\resizebox{0.9\textwidth}{!}{$
    \left.\begin{array}{rrrrrrrrrr}
        a^{23} & a^{8} & a^{6} & a^{16} & a^{9} & a^{29} & a^{20} & a^{0} & a^{8} & a^{1} \\
                
        a^{3}c + a^{28} & a^{28}c + a^{25} & a^{12}c + a^{18} & a^{15}c + a^{4} & a^{14}c + a^{9} & a^{1}c + a^{28} & a^{14}c + a^{5} & a^{19}c + a^{17} & a^{17}c + a^{4} & a^{9}c + a^{17}\\
        
        a^{18} & a^{9} & a^{8} & a^{3} & a^{0} & a^{29} & a^{14} & a^{21} & a^{12} & a^{25}
    \end{array}\right)$}\\}
    In particular, a generator matrix of $\Gamma(\mathcal{L},g,t_1,h,\eta)$ is given by 
    \begin{equation*}
        G=\left(\begin{array}{rrrrrrrrrrrrrrrrrrrr}
            1 & 0 & 0 & 1 & 0 & 0 & 0 & 1 & 0 & 1 & 1 & 0 & 1 & 0 & 0 & 0 & 1 & 1 & 0 & 0 \\
            0 & 0 & 1 & 1 & 1 & 0 & 0 & 0 & 1 & 0 & 1 & 0 & 1 & 1 & 1 & 0 & 0 & 0 & 1 & 1 \\
            0 & 0 & 0 & 0 & 0 & 1 & 0 & 1 & 0 & 1 & 1 & 1 & 1 & 1 & 1 & 1 & 0 & 0 & 1 & 0 \\
            0 & 0 & 0 & 0 & 0 & 0 & 1 & 0 & 1 & 1 & 0 & 1 & 1 & 0 & 0 & 1 & 1 & 1 & 1 & 1
        \end{array}\right).
    \end{equation*}
    Consider the received vector $y=(0, 0, 0, 1, 0, 0, 0, 1, 0, 1, 1, 0, 1, 0, 0, 0, 1, 1, 0, 0)$ having one error at some place.
    Since $h=1$, here $s(x)= (a^{14}c+a^{14}) + a^{19}x + a^4x^2 =  \eta a^{11} + a^{27}+ a^{19}x + a^4x^2 $,  and $ s_\pi(x)= \eta a^{11} + a^4 + a^{27}x + a^{19}x^2$.
    Applying Theorem $\ref{thm:decoding}$, i.e. in particular on applying extended Euclidean algorithm on $s_\pi(x)$ and $x^3$ till the remainder has degree less than $ 3/2$, we obtain $q_1(x)=a^{12}x+a^{20}$ and $\tau_1(x)=(a^{19}c+a^{15})x+a^{27}c+a^{10}$.
    Referring to the notations, $\sigma(x)=\mu q_1(x)$, where $\mu$ is chosen s.t. $\sigma(x)$ becomes monic, i.e. $\mu=a^{19}$. 
    Therefore, we obtain $\sigma(x)=x-a^8$ and $\tau(x)=(a^7c+a^3)x+a^{15}c+a^{29}$.
    The root of $\sigma(x)$ is $a^8$, i.e. $\alpha_1^{-1}$ so the error position is $1^{st}$ coordinate. 
    The error magnitude, as defined in (\ref{eqn:error_coord_val}), is calculated as $e_1=1$. 
    So the decoded codeword is given as $(1, 0, 0, 1, 0, 0, 0, 1, 0, 1, 1, 0, 1, 0, 1, 0, 0, 1, 1, 1)$.
\end{example}

\subsection{Complexity Analysis of Decoding}
The computational complexity of Algorithm~\ref{alg:twisted-goppa-decoder} is analyzed in terms of the number of elementary operations over the finite field $\mathbb{F}_{q^m}$, namely additions, multiplications, and inversions, following the standard cost model used in algebraic coding theory~\cite{Patterson1975,Berlekamp1984,Blahut2003}. 
The total running time is dominated by polynomial arithmetic (multiplications, divisions, and gcd computations) whose cost can be expressed as $\mathcal{O}(M(t))$, where $M(t)$ denotes the cost of multiplying two polynomials of degree~$t$ over $\mathbb{F}_{q^m}$~\cite{vonzurGathen1999}. 
Referring \cite{vonzurGathen1999, Shoup1995}, $M(t)=\mathcal{O}(t^2)$, while fast algorithms yield $M(t)=\tilde{\mathcal{O}}(t)$~\cite{Nielsen2017}. 
Throughout this section, we consider the classical arithmetic assumptions, so that each polynomial multiplication of degree at most~$t$ costs $\mathcal{O}(t^2)$ field operations.
The decoding process comprises six dominant computational stages: syndrome formation, twist transformation, extended Euclidean iterations, index selection, error-locator polynomial evaluation, and error-magnitude computation.
For each step, we explicitly count the number of $\mathbb{F}_{q^m}$ operations in terms of the code parameters $(n,t)$, and summarize the total decoding complexity for both Case~A and Case~B branches.
The resulting complexity estimates are consistent with those reported in classical analyses of Patterson decoding and its generalizations~\cite{Patterson1975,Berlekamp1984,vonzurGathen1999,Nielsen2017}.

\begin{theorem}[Time complexity in the Case~A, i.e. when there are less than $t/2$ errors]
\label{thm:complexity-caseA}
    Let $\mathcal{C}=\Gamma(\mathcal{L},g,t,h,\eta)$ be a twisted Goppa code over $\mathbb{F}_{q}$,
    with parity-check matrix $H\in\mathbb{F}_{q^m}^{t\times n}$ and decoding algorithm
    given in Algorithm~\ref{alg:twisted-goppa-decoder}.
    Assume the algorithm terminates in Case~A (i.e., $\deg\sigma_\nu(x)<t/2$).
    Then,  the total decoding cost is $\mathcal{O}(t^2 + t n)$    operations over $\mathbb{F}_{q^m}$.
\end{theorem}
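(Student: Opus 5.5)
The plan is to walk through Algorithm~\ref{alg:twisted-goppa-decoder} line by line along the Case~A branch. For each stage we bound the number of $\mathbb{F}_{q^m}$ operations under the classical arithmetic model fixed above, and then add the bounds. Any precomputation that depends only on the code, such as $g(\alpha_i)$, $g(\alpha_i)^{-1}$, $\alpha_i^{-1}$ and the matrix $H$, is treated as part of the key. If $g(\alpha_i)$ must instead be recomputed during decoding, Horner's rule costs $\mathcal{O}(t)$ per point, i.e. $\mathcal{O}(tn)$ in total, which still fits the claimed bound.

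\textbf{Syndrome and twist.} Computing $\bm{s}^T=H\bm{r}^T$ is a $t\times n$ matrix--vector product, costing at most $tn$ multiplications and $t(n-1)$ additions, hence $\mathcal{O}(tn)$. Since $\bm r\in\mathbb{F}_q^n$, the multiplications are scalings by subfield elements, but we do not need this refinement. Forming $s(x)$ is only a re-indexing of coefficients modulo $t$. The map $\pi$ in (\ref{eqn:pi}) is a permutation of coefficients, so $s_\pi(x)$ costs $\mathcal{O}(t)$, or even zero arithmetic operations.

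\textbf{Extended Euclidean stage.} This is the step that needs real care. We run the EEA on $(x^t,s_\pi(x))$ only until the index $\nu$ of line~\ref{l8} is reached, so the remainders have degree at most $t$. We use the standard accounting from \cite{vonzurGathen1999}. Dividing $\tau_{i-2}$ by $\tau_{i-1}$ costs $\mathcal{O}((\deg q_i+1)\deg\tau_{i-1})$. Updating $\sigma_i$ and $u_i$ costs $\mathcal{O}((\deg q_i+1)\deg\sigma_{i-1})$, where $\deg\sigma_{i-1}\le t$ by Lemma~\ref{Euclidean_lemma}(iii). Summing over $i$ and using $\sum_i \deg q_i\le \deg G=t$ together with the bound $t+1$ on the number of steps gives $\mathcal{O}(t^2)$. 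We will also note that the $u_i$ need not be computed at all. Finding $\nu$ is free once the remainder degrees are tracked, and normalising $\sigma_\nu$ to be monic costs one inversion plus $\mathcal{O}(t)$ multiplications for $\mu\sigma_\nu$ and $\mu\tau_\nu$. The obstacle is to make the telescoping bound rigorous rather than naively charging $\mathcal{O}(t^2)$ per step, which would give $\mathcal{O}(t^3)$.

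\textbf{Root finding and error values.} By Theorem~\ref{thm:decoding}(i), $\sigma=\mu\sigma_\nu$ is the true error locator, and $\deg\sigma,\deg\tau<t/2$ by Lemma~\ref{lemma:decoding_1tg}. Evaluating $\sigma$ at all $\alpha_i^{-1}$ by Horner's rule costs $\mathcal{O}(tn)$ and yields $J$. The check $\gcd(\sigma,\sigma')=1$ is one further Euclidean computation on degree $<t/2$ inputs, costing $\mathcal{O}(t^2)$. Computing $\sigma'$ costs $\mathcal{O}(t)$. For each $i\in J$, with $|J|\le t/2$, we evaluate $\tau(\alpha_i^{-1})$ and $\sigma'(\alpha_i^{-1})$ in $\mathcal{O}(t)$ each and apply (\ref{eqn:error_coord_val}) with $\mathcal{O}(1)$ further operations including one inversion, for a total of $\mathcal{O}(t^2)$. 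Forming $\bm c=\bm r-\bm e$ costs $\mathcal{O}(n)$. Adding all stages gives $\mathcal{O}(tn)+\mathcal{O}(t)+\mathcal{O}(t^2)+\mathcal{O}(tn)+\mathcal{O}(t^2)+\mathcal{O}(n)=\mathcal{O}(t^2+tn)$.
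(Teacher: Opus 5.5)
Your proposal is correct and follows the paper's proof essentially step for step. Both use the same stage-by-stage count: $\mathcal{O}(tn)$ for the syndrome, a pure coefficient permutation for $\pi$, $\mathcal{O}(t^2)$ for the EEA, $\mathcal{O}(t)$ for monic scaling, $\mathcal{O}(tn)$ for the Horner root search plus an $\mathcal{O}(t^2)$ gcd check, and $\mathcal{O}(t^2)$ for the error values. The only variation is in the EEA stage. Your telescoping bound $\sum_i(\deg q_i+1)\deg\tau_{i-1}$ with $\sum_i\deg q_i\le t$ is a cleaner justification than the paper's. The paper counts at most $\lfloor t/2\rfloor$ iterations and charges each as if the degree dropped by exactly one, which it calls the worst case without justifying that claim.
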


\begin{proof}
We count the dominant $\mathbb{F}_{q^m}$-operations step by step.

\begin{itemize}
  \item[{(1)}] \textit{Syndrome computation.} 
  Computing $\bm{s}^T = H\bm{r}^T$ requires $t n$ multiplications 
  and $t(n-1)$ additions in $\mathbb{F}_{q^m}$, 
  i.e.\ $\mathcal{O}(tn)$ total operations. 
  Forming $s(x)$ adds another $\mathcal{O}(t)$ operations.

  \item[{(2)}] \textit{Twist transform.} 
  The map $\pi:\mathbb{F}_{q^m}[x]\to\mathbb{F}_{q^m}[x]$ acts linearly 
  (permutes and scales coefficients), costing $\mathcal{O}(t)$ operations.

  \item[{(3)}] \textit{Extended Euclidean Algorithm.}
    Let $G(x)=x^{t}$ and $S(x)=s_{\pi}(x)\in\mathbb{F}_{q^{m}}[x]$.
    Applying the EEA to $(G,S)$, the degree of the     remainder decreases by \emph{at least} one at each iteration. 
    The procedure can terminate (ref. Lines \ref{l7},\ref{l8}) once $\deg \tau_i(x)<t/2$, i.e., after at most     $\left\lfloor\tfrac{t}{2}\right\rfloor$ iterations. 
    If only remainders are    maintained (without B\'ezout coefficients), a worst-case iteration (when the    degree drops by exactly one) performs $\mathcal{O}(1)$ field inversions and about    $2(d+1)$ multiplications/additions, where $d$ is the current divisor degree. 
    Summing from $d=t-1$ down to $t-\left\lfloor\tfrac{t}{2}\right\rfloor$ yields     $\mathcal{O}\!\bigl(\left\lfloor\tfrac{t}{2}\right\rfloor\bigr)$ inversions and    
    \[
    \left\lfloor\tfrac{t}{2}\right\rfloor\!\bigl(2t-\left\lfloor\tfrac{t}{2}\right\rfloor+1\bigr)
    =
    \begin{cases}
    m(3m+1), & t=2m,\\
    3m(m+1), & t=2m+1,
    \end{cases}
    \]
    multiplications/additions. When the B\'ezout coefficient polynomials $(u_i(x),\sigma_i(x))$ are     also updated, the operation counts increase by an additional quadratic term; 
    a    conservative bound is $\mathcal{O}(t)$ inversions and $\mathcal{O}(t^{2})$     multiplications/additions overall. 
    Hence, under the classical long-division     model, the EEA costs $\mathcal{O}(t^{2})$ field operations and     $\mathcal{O}(t)$ inversions in $\mathbb{F}_{q^{m}}$. 
    With asymptotically fast     multiplication or half-GCD methods, this can be reduced to    $\mathcal{O}(M(t)\log t)$, where $M(t)$ denotes the cost of multiplying two    degree-$t$ polynomials, i.e. $\mathcal{O}(t^2)$.

  \item[{(4)}] \textit{Index selection and monic scaling.}
    % After obtaining the sequence $(u_i(x),\tau_i(x),\sigma_i(x))$ from the EEA,    the algorithm identifies the smallest index $i$ such that    $\deg(\tau_i)<t/2$.  
    % This search traverses at most     $\left\lfloor\tfrac{t}{2}\right\rfloor$ entries and therefore costs    $\mathcal{O}(t)$ comparisons.  
    Once $\sigma_\nu(x)$ is chosen,     it is scaled by a nonzero $\mu\in\mathbb{F}_{q^{m}}$ to make it monic.
    This normalization introduces one field inversion and at most    $\mathcal{O}(t)$ multiplications, so the total cost of this step is    $\mathcal{O}(t)$ field operations.
    
    \item[{(5)}] \textit{Error-locator polynomial evaluation and root finding.}
    Given $\sigma(x)$ of degree less than $t/2$, the set     $J=\{\,i\mid\sigma(\alpha_i^{-1})=0,\,1\le i\le n\,\}$    is obtained by evaluating $\sigma(x)$ at all $n$ support points    $\alpha_i^{-1}$.  
    Using Horner’s rule, each evaluation requires     $\deg {\sigma(x)}$ multiplications and $\deg{\sigma(x)}$ additions, resulting in    $\mathcal{O}(t n)$ total field operations.  
    % If desired, a multipoint-evaluation tree could reduce this to    $\tilde{\mathcal{O}}(n+t)$, but the schoolbook bound $\mathcal{O}(t n)$ suffices.  
    Computing $\sigma'(x)$ adds $\mathcal{O}(t)$ operations, and verifying    $\gcd(\sigma(x),\sigma'(x))=1$ via a Euclidean gcd contributes    $\mathcal{O}(t^{2})$ more, preserving an overall complexity of    $\mathcal{O}(t n + t^{2})$ for this stage.
    
    \item[{(6)}] \textit{Error-magnitude computation.}
    For each position $i\in J$ (with $|J|\le\left\lfloor\tfrac{t}{2}\right\rfloor$),
    the decoder evaluates $\tau(\alpha_i^{-1})$ and
    $\sigma'(\alpha_i^{-1})$, performs one division, and multiplies by
    known constants $\alpha_i$ and $g(\alpha_i)^{-1}$ to obtain  $    e_i=-\,\alpha_i\,g(\alpha_i)\,       \frac{\tau(\alpha_i^{-1})}{\sigma'(\alpha_i^{-1})}$.
    Each evaluation costs $\mathcal{O}(t)$ operations, and each division     costs $\mathcal{O}(1)$ inversion, giving    $\mathcal{O}(t\,|J|)\sim \mathcal{O}(t^{2})$ total arithmetic and    $\mathcal{O}(t)$ inversions.  
    Forming the full error vector    $\bm{e}=(e_1,\dots,e_n)$ and codeword    $\bm{c}=\bm{r}-\bm{e}$ requires only linear time in $n$.
    Hence, the total cost of this step is    $\mathcal{O}(t^{2}+n)$ field operations.

\end{itemize}

Summing the contributions gives $  \mathcal{O}(tn) \;+\; \mathcal{O}(t^2) \;+\; \mathcal{O}(t) = \mathcal{O}(t^2 + tn)$, which dominates all other lower-order terms.
Hence the total decoding complexity in Case~A is $\mathcal{O}(t^2 + tn)$ operations over $\mathbb{F}_{q^m}$.
\end{proof}

\begin{prop}[Time complexity in the Case B, i.e. when there are $t/2$ errors]
\label{prop:complexity-caseB}
Under the assumption $wt(\bm{e})\le\lfloor t/2\rfloor$ and standard non-degeneracy conditions, the search loop in Case~B of Algorithm~\ref{alg:twisted-goppa-decoder} can be restricted to a constant number of candidates $\mu_1\in\mathbb{F}_{q^m}$.
Consequently, the overall decoding complexity in Case~B remains $  \mathcal{O}(t^2 + t n)$ operations over $\mathbb{F}_{q^m}$.
\end{prop}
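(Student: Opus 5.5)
The plan is to show that, in Case~B, the requirement that the tuple satisfy Proposition~\ref{prop:3conditions} translates into a polynomial equation in the single unknown $\mu_1$ of degree at most two. Under non-degeneracy this equation is not identically zero, so at most two candidates for $\mu_1$ survive. Each candidate is then processed by the Case~A machinery.

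\textbf{Step 1: the leading coefficient of $\tau$.} By Theorem~\ref{main proposition} and the proof of Theorem~\ref{thm:decoding}, in Case~B we have $\deg\sigma_\nu=\deg\tau_{\nu-1}=t/2$, while $\deg\sigma_{\nu-1}<t/2$ and $\deg\tau_\nu<t/2$. With $\mu_2$ fixed so that $\mu_2\sigma_\nu$ is monic, every candidate $\sigma=\mu_1\sigma_{\nu-1}+\mu_2\sigma_\nu$ is monic of degree $t/2$. The leading coefficient of the matching $\tau$ is $\mu_1 c$, where $c:=\mathrm{lc}(\tau_{\nu-1})$. The third bullet of Proposition~\ref{prop:3conditions} says $\tau=a\sigma+w$ with $\deg w<\deg\sigma$. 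Comparing leading coefficients therefore forces
\[
a=\eta\sum_{i\in J}\alpha_i^{t-1+t_1}g(\alpha_i)^{-1}e_i=\mu_1 c .
\]

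\textbf{Step 2: the constant term of $\tau$.} Use the explicit form of $w$ from (\ref{eqn:tau_x}) and evaluate at $x=0$. Since $\prod_{j\in J\setminus\{i\}}(-\alpha_j^{-1})=-\alpha_i\,\sigma(0)$, we get
\[
w(0)=-\sigma(0)\sum_{i\in J}g(\alpha_i)^{-1}e_i=-\sigma(0)\,S_0 .
\]
When $h\neq 0$, the quantity $S_0$ is exactly the untwisted syndrome entry $s_0$, so it is known to the decoder. Hence the true pair must satisfy $\tau(0)=\sigma(0)\,(a-S_0)$.

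\textbf{Step 3: the quadratic in $\mu_1$.} Substitute the expansions of $\tau(0)$ and $\sigma(0)$ in terms of $\mu_1$, together with $a=\mu_1c$ from Step~1. This gives
\[
\mu_1\tau_{\nu-1}(0)+\mu_2\tau_\nu(0)=\bigl(\mu_1\sigma_{\nu-1}(0)+\mu_2\sigma_\nu(0)\bigr)\bigl(\mu_1 c-S_0\bigr).
\]
This is a polynomial equation in $\mu_1$ of degree at most two, with all coefficients computable in $\mathcal{O}(1)$ operations from the EEA output. Note $c\neq 0$ by the degree count above, and $\sigma_\nu(0)\ne0$ because $\gcd(\sigma_\nu,x^t)=1$. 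The ``standard non-degeneracy condition'' is precisely that this polynomial is not identically zero; then it has at most two roots. The true $\mu_1$, which exists by Theorem~\ref{main proposition} and is unique by Lemma~\ref{lem:uniqueness_sigma_tau}, is among them.

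If $h=0$, then $s_0$ is contaminated by the twist. In that case I would instead compare the coefficient of $x$ (or evaluate another known power sum $\sum\alpha_i^{l}g(\alpha_i)^{-1}e_i$ with $l\neq h$) against the corresponding coefficient of $w$. This again yields a low-degree equation in $\mu_1$. Making this case uniform is the main obstacle I expect: the relevant coefficient of $w$ is an elementary-symmetric combination, and one must check that it stays linear in $\mu_1$ up to a bounded degree.

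\textbf{Step 4: complexity.} For each of the at most two (more generally, $\mathcal{O}(1)$) candidates, the loop body of Case~B does the following. It forms $\sigma$ and $\tau$ in $\mathcal{O}(t)$ operations and finds $J$ by Horner evaluation in $\mathcal{O}(tn)$ operations. It tests $\gcd(\sigma,\sigma')=1$ in $\mathcal{O}(t^2)$ operations. It computes the $e_i$ and $a$, and checks $\deg w<\deg\sigma$, in $\mathcal{O}(t^2+n)$ operations. These are the same bounds as steps (4)–(6) in the proof of Theorem~\ref{thm:complexity-caseA}. Steps (1)–(3) are shared with Case~A, and the quadratic in Step~3 can be solved in $\mathcal{O}(1)$ field operations, or by at most $q^m$ trial evaluations done once without touching $n$. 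Summing gives $\mathcal{O}(t^2+tn)$ for Case~B.
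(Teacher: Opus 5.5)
\textbf{Gap 1: sign error in Step~2.} Your route differs from the paper's. The paper never pins $\mu_1$ down: it bounds the candidates by $|\mathbb{F}_{q^m}|$ and absorbs the factor $q^m$ into the constant. Its remark about a linear equation from a coefficient of the key equation cannot help, because every member of the pencil $\mu_1(\sigma_{\nu-1},\tau_{\nu-1})+\mu_2(\sigma_\nu,\tau_\nu)$ satisfies $\sigma s_\pi\equiv\tau\pmod{x^t}$ identically. Extracting the constraint from the third condition of Proposition~\ref{prop:3conditions}, as in your Step~1, is the right idea. But by (\ref{eqn:tau_x}), $w(x)=-\sum_{i\in J}\alpha_i^{-1}g(\alpha_i)^{-1}e_i\prod_{j\in J\setminus\{i\}}(x-\alpha_j^{-1})$. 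Your identity $\prod_{j\neq i}(-\alpha_j^{-1})=-\alpha_i\sigma(0)$ therefore gives $w(0)=+\sigma(0)S_0$. Equivalently, $w/\sigma=\sum_{i\in J}g(\alpha_i)^{-1}e_i/(1-\alpha_i x)$ evaluated at $x=0$. Hence $\tau(0)=\sigma(0)(a+S_0)$. With your sign, the true $\mu_1$ satisfies your quadratic only if $2S_0=0$, so in odd characteristic the claim that it is among the roots fails.

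After correcting the sign, note that the constant term of Lemma~\ref{Euclidean_lemma}(ii) with $G=x^t$ gives $\tau_k(0)=s_\pi(0)\sigma_k(0)$ for every $k$. So for $h\neq0$ (where $S_0=s_0$) your equation factors as
\[
\bigl(\mu_1\sigma_{\nu-1}(0)+\mu_2\sigma_\nu(0)\bigr)\bigl(s_\pi(0)-s_0-\mu_1c\bigr)=0 .
\]
Since the true $\sigma(0)=\prod_{i\in J}(-\alpha_i^{-1})\neq0$, this forces $\mu_1=(s_\pi(0)-s_0)/c$. That is a single candidate. It needs no non-degeneracy hypothesis and no appeal to your unjustified claim $\sigma_\nu(0)\neq0$. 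It also avoids solving a genuine quadratic, which would need a square root or half-trace rather than $\mathcal{O}(1)$ operations. In effect, the only non-vacuous content is Step~1's $a=\mu_1c$ together with the fact that $a=s_\pi(0)-s_0$ is known.

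\textbf{Gap 2: the case $h=0$.} The statement covers this case, and your fallback cannot handle it. Write $S_l=\sum_{i\in J}\alpha_i^{l}g(\alpha_i)^{-1}e_i$, so that $s_\pi=a+\sum_{l=0}^{t-1}S_lx^l$. For every $\mu_1$ the key equation gives $\tau-\mu_1c\,\sigma\equiv(s_\pi-\mu_1c)\sigma\pmod{x^t}$. The Lagrange form of $w$ amounts to $w\equiv\sigma\sum_{l}S_lx^l\pmod{x^t}$. The two differ by $(a-\mu_1c)\sigma$, so comparing the coefficient of $x$, or any other coefficient, only reproduces $a=\mu_1c$.

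When $h=0$ neither $a$ nor $S_0$ is known, only $s_0=S_0+a$. This unknown split is exactly the degree of freedom that $\mu_1$ parametrizes. What remains is the splitting of $\sigma$ over $\{\alpha_i^{-1}\}$, the condition $e_i\in\mathbb{F}_q$, and the consistency condition $\mu_1c=\eta\sum_{i\in J}\alpha_i^{t-1+t_1}g(\alpha_i)^{-1}e_i$, with $J$ and $e_i$ depending on $\mu_1$. None of these is a bounded-degree polynomial condition in $\mu_1$, which puts you back at the paper's enumeration.

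The missing ingredient is the standing hypothesis of Section~\ref{Section: Twisted Goppa Codes}. Since $\mathcal{L}\subseteq\mathbb{F}_{s_0}$, $g\in\mathbb{F}_{s_0}[z]$, $e_i\in\mathbb{F}_q$ and $\eta\notin\mathbb{F}_{s_0}$, the twisted entry is $s_h=S_h+\eta T$. Here $S_h$ and $T=\sum_{i\in J}\alpha_i^{t-1+t_1}g(\alpha_i)^{-1}e_i$ both lie in $\mathbb{F}_{s_0}$. This decomposition is unique because $1,\eta$ are $\mathbb{F}_{s_0}$-linearly independent. So $a=\eta T$ can be read off the syndrome for every $h$; this split is also what forming $s_\pi$ for $h\neq0$ requires, cf.\ Example~\ref{example3}. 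Step~1 alone then gives $\mu_1=a/c$, and your Step~4 count yields $\mathcal{O}(t^2+tn)$.
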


\begin{proof}
At index $\nu$ when $\deg\sigma_\nu(x)=t/2$, the EEA outputs $(\sigma_{\nu-1}(x),\tau_{\nu-1}(x))$ and $(\sigma_\nu(x),\tau_\nu(x))$.
Normalizing $\mu_2$ to make $\sigma_\nu(x)$ monic, and setting $\sigma(x)=\mu_1\sigma_{\nu-1}(x)+\mu_2\sigma_\nu(x)$ leaves a single free scalar $\mu_1\in\mathbb{F}_{q^m}$.
Imposing the root-count and coprimality conditions ($\deg\sigma(x)=|J|$, $\gcd(\sigma(x),\sigma'(x))=1$) reduces admissible $\mu_1$ to at most $q^m$ values.
Each candidate requires the same operations as Case~A (evaluation of $\sigma(x)$ and $\tau(x)$, formation of $J$, and computation of $e_i$), hence the  cost remains $\mathcal{O}(t^2 + t n)$.
In worst case, one has to iterate over all $\mu_1\in\mathbb{F}_{q^m}$, which costs  $\mathcal{O}(q^m(t^2 + t n))=\mathcal{O}(t^2 + t n)$. %{\color{blue}Line 25 and 26...} 
Further, in Line~\ref{l25}, computing the scalar $a$ requires $\mathcal{O}(t)$ field operations, since $|J|=\deg\sigma=t/2$, and forming $w(x)=\tau(x)-a\sigma(x)$ incurs an additional $\mathcal{O}(t)$ multiplications and additions. These costs are dominated by the complexity being $\mathcal{O}(t^2+tn)$.
\end{proof}

\begin{remark}
    In Case~B, after normalizing $\mu_2$ so that $\sigma(x)$ is monic, the remaining scalar $\mu_1$ appears linearly in the key equation $\sigma(x)s_\pi(x)\equiv \tau(x)\pmod{x^t}$.
    Comparing any coefficient of degree $<t$ yields a linear equation of the form $a\mu_1+b=0$ over $\mathbb{F}_{q^m}$, with $a\neq 0$.
    Hence, $\mu_1$ can be determined by solving a single field equation, rather than by exhaustive search over $\mathbb{F}_{q^m}$.
    This reduces the search for $\mu_1$ to $\mathcal{O}(1)$ field operations and does not affect the overall decoding complexity.
\end{remark}

% {\color{blue}
% \begin{remark}[Improving cost of computing $\mu_1$ to $\mathcal{O}(1)$]
% After normalizing $\mu_2$ to make $\sigma_\nu(x)$ monic, referring Line \ref{l17}, the pair $\sigma(x)=\mu_1\sigma_{\nu-1}(x)+\mu_2\sigma_\nu(x)$,            $\tau(x)=\mu_1\tau_{\nu-1}(x)+\mu_2\tau_\nu(x)$ depends linearly on a single scalar $\mu_1$. 
% The key-equation invariant $\sigma(x) s_\pi(x) \equiv \tau(x) \pmod{x^t}$ implies that, for some coefficient index $j<t$, the equation for the $x^j$-coefficient is $a_j\mu_1+b_j=0$ with $a_j\neq 0$. 
% Thus $\mu_1=-b_j/a_j$ is determined in $\mathcal{O}(1)$ field operations. If a chosen $j$ yields $a_j=0$, one checks the next coefficient; this introduces only a constant number of attempts. Hence the
% $\mu_1$-resolution step incurs at most $\mathcal{O}(t)$ auxiliary arithmetic,
% which is dominated by the $\mathcal{O}(tn+t^2)$ cost of evaluation and gcd, and
% does not affect the overall $\mathcal{O}(t^2+tn)$ bound.
% \end{remark}}

\section{Cryptosystem based on Twisted Goppa codes}\label{Section: Cryptosystem based on Twisted Goppa codes}

The Niederreiter cryptosystem instantiates code-based encryption with a public parity-check matrix and a hidden (trapdoor) efficient decoder. 
Concretely, let $\mathcal{C}\subseteq \mathbb{F}_2^{n}$ be an $[n,k]$ linear code with efficient error-correction capability $t$, and let $H\in\mathbb{F}_2^{(n-k)\times n}$ be a parity-check matrix for $\mathcal{C}$. 
The public key is an \emph{obfuscated} version of $H$ (e.g., after scrambling rows and permuting columns), while the secret key consists of the structured code description and its efficient decoder. 
To encrypt, a sender samples a weight-$t$ vector $\bm{e}\in\mathbb{F}_2^n$ and outputs the syndrome $\bm{c}=H\bm{e}^T\in\mathbb{F}_2^{n-k}$. 
Decryption removes the secret obfuscations and applies an efficient decoding algorithm of $\mathcal{C}$ on $\boldsymbol{c}$ to recover $\boldsymbol{e}$; 
correctness follows from $H\bm{e}^T$ being the unique syndrome for errors of weight $\le t$.
Security relies on the hardness of decoding a random linear code and on sufficiently hiding any exploitable structure in $\mathcal{C}$.

Classical deployments choose \emph{binary irreducible Goppa codes} for $\mathcal{C}$ because they admit fast algebraic decoding \cite{Patterson1975} and have withstood decades of cryptanalysis \cite{Singh2019ClassicMcEliece,weger2022survey}.
However, two practical concerns remain: (i) \emph{partial key-recovery avenues} that leverage algebraic dependencies of specific Goppa instantiations (e.g., settings where knowledge of a nontrivial fraction of the private key dramatically eases full recovery), and (ii) \emph{key size}, where utilizing any algebraic structured code, can shrink public keys but may invite
structure-exploiting attacks. 

Motivated by these issues, we propose using \emph{twisted Goppa codes} in the Niederreiter framework: the twist perturbs the algebraic relations of standard Goppa codes while preserving efficient decoding. 
This creates design headroom to disrupt the specific dependencies targeted in partial key-recovery (discussed in Section \ref{sec:4.1}), and selectively study the quasi-cyclic (QC) structure (discussed in Section \ref{sec:4.2}) for key compression in a way that can be tuned against known structural attacks-taking inspiration from QC \emph{binary} Goppa constructions~\cite{613441} but adapting the
symmetry under twisting.

\subsection{Strengthening Niederreiter PKC Against Partial Key-Recovery Attacks}\label{sec:4.1}

Recent results by Kirshanova and May~\cite{KIRSHANOVA} demonstrated that for McEliece, equivalently Niederreiter and thereafter Classic McEliece \cite{classicmceliece2020}, PKC instantiated with classical binary irreducible Goppa codes, knowledge of roughly one-quarter of the secret key, specifically $t m + 1$ Goppa points, is sufficient to recover the full  secret key in polynomial time. This highlights an inherent structural weakness under partial key exposure.
In this section, we show that breaking the  cryptosystem instantiated with  MTG codes is substantially more expensive than for Goppa codes, thereby enhancing its security without sacrificing performance.

\medskip
Let $\mathcal{C}$ be an $[n,k]$ linear code over $\mathbb{F}_q$, and let $\mathcal{I} \subsetneq [n] := \{1,2,\dots,n\}$.
Define $\mathcal{C}(\mathcal{I}^c)$ as the set of all codewords in $\mathcal{C}$ that are zero on  
$\mathcal{I}^c := [n] \setminus \mathcal{I}$. 
The code obtained by puncturing $\mathcal{C}(\mathcal{I}^c)$ on the coordinate positions in $\mathcal{I}^c$ is denoted 
by $\mathcal{C}|_{\mathcal{I}^c}$ and is referred to as the code \emph{shortened on} $\mathcal{I}^c$. 
It follows that $\mathcal{C}|_{\mathcal{I}^c}$ is a linear code of length $|\mathcal{I}|$ and dimension at least 
$k + |\mathcal{I}| - n$ over $\mathbb{F}_q$.

Let $\mathcal{I} = \{i_1, i_2, \dots, i_\epsilon\}$ with $\epsilon < n$. 
If $H = [\bm{h}_1\; \bm{h}_2\; \cdots\; \bm{h}_n] \in \mathbb{F}_q^{(n-k)\times n}$ is a parity-check matrix of $\mathcal{C}$, 
where each $\bm{h}_i \in \mathbb{F}_q^{n-k}$ for $1 \le i \le n$, then the restriction $H|_{\mathcal{I}} := [\bm{h}_{i_1}\; \bm{h}_{i_2}\; \cdots\; \bm{h}_{i_\epsilon}]$
obtained by selecting the columns of $H$ indexed by $\mathcal{I}$ forms a parity-check matrix for the shortened code 
$\mathcal{C}_{\mathcal{I}^c}$. Shortening and puncturing are fundamental operations in coding theory, as they preserve linearity while modifying the length and dimension of the code; 
for a detailed study, readers are referred to~\cite{MacWilliams1977,HuffmanPless2003,Roth_book, LingXing2004}.

Now, consider an MTG code $\tgoppa$ as defined in Definition~\ref{def::MTGC}. 
Let $\tgoppa$ be an $[n,  n - m t]$ linear code over $\mathbb{F}_q$. 
For $\mathcal{I} = \{i_1, i_2, \dots, i_\epsilon\}$ with $\epsilon < n$, consider the shortened code $\tgoppa|_{\mathcal{I}^c}$. 
A vector $\bm{u} = (u_i)_{i \in \mathcal{I}}$ belongs to $\tgoppa|_{\mathcal{I}^c}$ if and only if
\begin{equation}
\label{shortened_Goppa_code}
s_{\bm{u}}(z)
   := \sum_{i \in \mathcal{I}} 
   u_i \!\left(
   \frac{1}{z-\alpha_i}
   - \sum_{j=1}^{\ell} 
   \frac{\eta_j \alpha_i^{t-1+t_j}}{g(\alpha_i)} 
   \big(\operatorname{U}^{h_j+1}(\bm{z}) \cdot \bm{g}\big)
   \right)
   \equiv 0 \pmod{g(z)}.
\end{equation}
Based on the shortened code derived from $\tgoppa$,
Algorithm~\ref{Algo::POTENTIAL-MTG-POLYNOMIAL} performs syndrome-based validation of a candidate
polynomial $f(z)$ to determine whether it could be the true MTG polynomial.
The probabilistic behavior of this test is quantified in the following theorem.

\begin{algorithm}[H]
    \caption{Potential MTG Polynomial }\label{Algo::POTENTIAL-MTG-POLYNOMIAL}
    \begin{algorithmic}[1]
        \Statex \textbf{Input:} An index set $\mathcal{I}$ with $|\mathcal{I}|:=\epsilon(<n),$ the set $\{\alpha_i: i\in \mathcal{I}\}\subsetneq \mathcal{L}$ and a generator matrix $G\in \mathbb{F}_q^{j\times \epsilon}$ of $\tgoppa|_{\mathcal{I}^c}$ and a polynomial $f(z)\in\mathbb{F}_{q^m}[z];$  
        \Statex \textbf{Output:} $(f(z),\epsilon,\mathcal{I}, G)$ mapping to $1$ indicates $f(z)$ is a potential MTG polynomial and $0$ otherwise.
        \For{each row $\bm{u}$ of $G$}
            \State Compute $s_{\bm{u}}(z)\bmod f(z)$ from Equation \eqref{shortened_Goppa_code}.
            \If{$s_{\bm{u}}(z) \not\equiv 0\bmod f(z)$}
                \State\Return $0$
            \EndIf
        \EndFor
        \State \Return $1$
    \end{algorithmic}
\end{algorithm}

%% To be read-again
\begin{theorem}[Correctness and False-Positive Rate of Algorithm \ref{Algo::POTENTIAL-MTG-POLYNOMIAL}]
    \label{thm:mtg-prob}
    Let $\tgoppa$ be an MTG code over $\mathbb{F}_q$, where $\mathcal{L}\subset \mathbb{F}_{q^m}$ and MTG polynomial $g(z)\in\mathbb{F}_{q^m}[z]$ has degree $t$. 
    Fix an index set $\mathcal{I}\subset [n]$ with $|\mathcal{I}|=\epsilon<n$ and let $G\in\mathbb{F}_q^{\,j\times \epsilon}$ be a generator matrix of the shortened code $\tgoppa|_{\mathcal{I}^c}$. 
    Consider Algorithm~\ref{Algo::POTENTIAL-MTG-POLYNOMIAL}, which tests a candidate polynomial $f(z)\in\mathbb{F}_{q^m}[z]$ of degree $t$ being a potential MTG polynomial via the congruences
    \begin{equation*}
    s_{\bm{u}}(z)\equiv 0 \pmod{f(z)}\qquad\text{for all rows $\bm{u}$ of $G$,}
    \end{equation*}
    where $s_{\bm{u}}(z)$ is defined in~\eqref{shortened_Goppa_code}. 
    Assuming the following mild independence hypothesis: for a fixed  degree-$t$ polynomial $f$, each syndrome condition behaves uniformly in $\mathbb{F}_{q^m}[z]/\langle f\rangle$, so that $\Pr\left[s_{\bm{u}}(z)\equiv 0 \!\!\!\pmod{f(z)}\right]=q^{-mt}$,
    and the elements of all $j$ rows of $G$ are independent.
    Let $M_t=q^{mt}$ denote the number of monic  polynomials of degree $t$ over $\mathbb{F}_{q^m}$. Then:
    
    \begin{enumerate}[$(i)$]
    \item For any $f$, $\Pr\left[\text{Algorithm \ref{Algo::POTENTIAL-MTG-POLYNOMIAL} }(f)=1\right]=q^{-mtj}$.
    \item The probability that some  $f\neq g$ passes all $j$ checks is $\Pr\left[\exists\, f\neq g \text{ accepted}\right]\;\le\; (M_t-1)\,q^{-mtj}$. 
    Equivalently, the algorithm uniquely accepts $g$ with probability at least $1-(M_t-1)\,q^{-mtj}$.
    % \item If candidates are sampled uniformly from the $M_t$ polynomials, then conditioned on acceptance the posterior that the accepted polynomial is $g$ is
    % \begin{equation}\label{eqn:true_accept}
    % P_{\mathrm{true}\mid\mathrm{accept}}
    % \;\approx\;
    % \frac{\tfrac{1}{M_t}}
    %      {\tfrac{1}{M_t}+\big(1-\tfrac{1}{M_t}\big)\,q^{-mtj}}
    % \;\approx\;
    % \frac{1}{1+M_t\,q^{-mtj}}\quad(\text{for large }M_t).
    % \end{equation}
    \end{enumerate}
\end{theorem}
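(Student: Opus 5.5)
Under the stated independence hypothesis this is a probabilistic counting argument, so the plan is to compute the acceptance probability for a single candidate and then take a union bound over all wrong candidates.

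\emph{Step 1: the true polynomial is always accepted.} For $f=g$, every row $\bm u$ of $G$ lies in $\tgoppa|_{\mathcal{I}^c}$. Extending $\bm u$ by zeros on $\mathcal{I}^c$ gives a codeword of $\tgoppa$, and by Definition~\ref{def::MTGC} this means exactly that the sum in \eqref{shortened_Goppa_code} vanishes modulo $g(z)$. Hence Algorithm~\ref{Algo::POTENTIAL-MTG-POLYNOMIAL} returns $1$ on input $g$ with certainty. This settles correctness and is the only place the code structure is used.

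\emph{Step 2: part $(i)$.} Fix a candidate $f$ of degree $t$. The algorithm returns $1$ if and only if all $j$ events $E_{\bm u}=\{s_{\bm u}(z)\equiv 0 \bmod f(z)\}$ occur. By hypothesis, each $s_{\bm u}\bmod f$ is uniform on the quotient ring $\mathbb{F}_{q^m}[z]/\langle f\rangle$. This ring has $q^{mt}$ elements, so $\Pr[E_{\bm u}]=q^{-mt}$. The rows of $G$ are assumed independent, so the events $E_{\bm u}$ are independent, and the probability multiplies to $(q^{-mt})^{j}=q^{-mtj}$. I will state explicitly that this is a model statement for $f\neq g$, since Step 1 shows $g$ itself is accepted with probability $1$.

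\emph{Step 3: part $(ii)$.} Apply a union bound over the $M_t-1$ monic degree-$t$ candidates $f\neq g$:
\[
\Pr[\exists\, f\neq g \text{ accepted}]\le \sum_{f\neq g}\Pr[\text{accept } f]=(M_t-1)\,q^{-mtj}.
\]
Combining this with Step 1, $g$ is the unique accepted polynomial with probability at least $1-(M_t-1)q^{-mtj}$.

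The main obstacle is interpretive rather than computational. The $s_{\bm u}$ are deterministic, so the probability exists only through the heuristic model. The care needed is to apply uniformity and independence only to $f\neq g$, and to restrict to monic $f$ so that the count $M_t=q^{mt}$ is valid.
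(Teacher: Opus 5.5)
Your proposal is correct and follows essentially the same route as the paper. Part $(i)$ multiplies the per-row probability $q^{-mt}$ over the $j$ independent rows for a fixed $f\neq g$, and part $(ii)$ is the union bound over the $M_t-1$ incorrect monic candidates. Your Step 1 (each row of $G$, padded with zeros on $\mathcal{I}^c$, is a codeword of the full MTG code, so $g$ always passes) and your caveat that $(i)$ really concerns $f\neq g$ are both used only implicitly in the paper; they are exactly what justifies the ``uniquely accepts $g$'' conclusion.
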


\begin{proof}
    For $(i)$, under the independence assumption across the $j$ rows of generator matrix, each syndrome congruence 
    $s_{\bm{u}}(z)\equiv0\pmod{f(z)}$ holds with probability $q^{-mt}$. 
    Hence, the probability that all $j$ congruences simultaneously hold for a fixed incorrect 
    $f\neq g$ is $(q^{-mt})^j=q^{-mtj}$.
    For $(ii)$, let $E_f$ denote the event that a polynomial $f$ passes all $j$ checks. 
    The probability that at least one incorrect polynomial is accepted equals 
    $\Pr[\bigcup_{f\neq g}E_f]$. 
    Since the events $\{E_f\}$ may not be independent, the union bound yields 
    $\Pr[\bigcup_{f\neq g}E_f]\le \sum_{f\neq g}\Pr[E_f]=(M_t-1)\,q^{-mtj}$. 
    Consequently, the probability that the algorithm uniquely accepts the correct 
    MTG polynomial $g$ is at least $1-(M_t-1)\,q^{-mtj}$.
    % For $(iii)$, consider uniform sampling over all $M_t$ monic degree-$t$ candidates. 
    % The correct polynomial $g$ always satisfies the syndrome conditions, whereas each incorrect 
    % $f\neq g$ is accepted with probability $q^{-mtj}$. 
    % Thus, the total acceptance probability is approximately 
    % $P_{\mathrm{accept}}\approx \tfrac{1}{M_t}+(1-\tfrac{1}{M_t})q^{-mtj}$. 
    % Applying Bayes' rule, the posterior probability that an accepted polynomial is the true 
    % $g$ is given by (\ref{eqn:true_accept}), which approaches unity as the number of checks $j$ increases.
\end{proof}

% \begin{remark}[On the upper bound in $(ii)$]
% The events $\{E_f\}_{f\neq g}$ need not be independent (distinct wrong polynomials can simultaneously pass), so we use the union bound. %, yielding an upper bound “$\le$”. 
% When $q^{-mtj}$ is tiny, overlaps are negligible and the heuristic equality 
% $P_{\mathrm{false\;positive}}\approx (M_t-1)\,q^{-mtj}$ is often used in practice.
% \end{remark}

% \begin{corollary}[Choosing $j$ for a target security level]
%     \label{cor:j-choice}
%     To make the false-positive probability at most $q^{-\lambda}$, it suffices to choose
%     \begin{equation*}
%     j \;\ge\; \frac{\lambda + \log_q(M_t-1)}{m\,t}.
%     \end{equation*}
% \end{corollary}

\begin{remark}
    The independence assumption is standard in such syndrome modulo test, and can be  replaced by a precise argument for specific parameter choices. 
    In practice, $j$ can be taken as the number of rows of generator matrix used as checks; any linear dependence among rows only reduces the effective $j$. Additionally, $M_t$ can be replaced with number of irreducible monic polynomials over $\mathbb{F}_{q^m}$ when considering irreducible MTG code.
\end{remark}

If the support set $\mathcal{L}$ is known, the irreducible MTG polynomial $g(z)$ can be uniquely determined via Algorithm~\ref{Algo::BASIC-MTG}.
Building on this procedure, Algorithm~\ref{Algo::ADVANCED-MTG} identifies all candidate MTG polynomials consistent with an index set $\mathcal{I}$ of size at least $mt+1$ and the corresponding evaluation subset $\{\alpha_i : i\in\mathcal{I}\}\subseteq\mathcal{L}$.
\begin{algorithm}[H]
    \caption{Basic MTG}\label{Algo::BASIC-MTG}
    \begin{algorithmic}[1]
        \Statex \textbf{Input:} A parity check matrix $H\in\mathbb{F}_q^{(\leq mt)\times n}$,  $\mathcal{L}=\{\alpha_1, \alpha_2, \dots, \alpha_n\}$, $\bm{\mathbbm{t}}$, $\bm{\mathbbm{h}}$ and $\bm{\eta}$ corresponding to some $\tgoppa$.
        \Statex \textbf{Output:} The MTG polynomial $g(z).$
        \State Obtain the Null space of $H$ and a generator matrix $G$ of $\tgoppa$.
        \State Pick $\bm{c}=(c_1, c_2, \dots, c_n)\in \tgoppa\setminus\{\bm{0}\}.$
        \State Compute $f_{\bm{c}}(z)=\sum\limits_{i=1}^n c_i \left( \underset{1\le j\le n, j\ne i}{\prod}(z-\alpha_j) - \prod\limits_{w=1}^n(z-\alpha_w)\sum\limits_{j=1}^\ell \frac{\eta_j \alpha_i^{t-1+t_j}}{g(\alpha_i)} (\operatorname{U}^{h_j+1}(\bm{z})\cdot \bm{g}) \right) \in \mathbb{F}_{q^m}[z]$.
        \State Find the irreducible factorization of $f_{\bm{c}}(z)$ over $\mathbb{F}_{q^m}$.
        % \State Find all $t$-degree factors of $f_{\bm{c}}(z)$.
        \For{ all $t$-degree factors $f(z)$ of $f_{\bm{c}}(z)$}
            \If{Algorithm \ref{Algo::POTENTIAL-MTG-POLYNOMIAL} $(f(z),[n], \mathcal{L},  G)=1$}
                \State\Return $f(z)$
            \EndIf
        \EndFor
    \end{algorithmic}
\end{algorithm}
\begin{algorithm}[H]
    \caption{Advanced MTG}\label{Algo::ADVANCED-MTG}
    \begin{algorithmic}[1]
        \Statex \textbf{Input:} A parity check matrix $H\in\mathbb{F}_q^{(\leq mt)\times n}$, an indexing set $\mathcal{I}\subset [n]$ with $|\mathcal{I}|=\epsilon> mt$, the set $\{\alpha_i: i\in \mathcal{I}\}\subseteq \mathcal{L}$ corresponding to some $\tgoppa$.
        \Statex \textbf{Output:} A list $\mathcal{J}$ of all potential MTG polynomials. 
        \State Restrict $H$ to $\epsilon$ columns indexed by $\mathcal{I}$, and call it $H|_{\mathcal{I}}.$ \Comment{$\epsilon>mt\ge\text{rank} H$}
        \State Compute $G|_{\mathcal{I}}$ as the right kernel (null space's generator matrix) of $H|_{\mathcal{I}}.$
        \State Set $\bm{c}'=\bm{m}G|_{\mathcal{I}}^T,$ for some non-zero vector $\bm{m}\in\mathbb{F}_q^{\epsilon-\text{rank}(H)}.$
        \State Construct $\bm{c}$ by substituting zero to $\bm{c}'$ at $[n]\setminus\mathcal{I}$ index positions.
        \State Compute $f_{\bm{c}}(z)=\underset{i\in \text{Supp}(\bm{c})}{\sum} c_i \left(\underset{j\in \text{Supp}(\bm{c})\setminus \{i\}}{\prod}(z-\alpha_j) -\prod\limits_{w=1}^n(z-\alpha_w)\sum\limits_{j=1}^\ell \frac{\eta_j \alpha_i^{t-1+t_j}}{g(\alpha_i)} (\operatorname{U}^{h_j+1}(\bm{z})\cdot \bm{g}) \right) \in \mathbb{F}_{q^m}[z]$
        \State Find the irreducible factorization of $f_{\bm{c}}(z)$ over $\mathbb{F}_{q^m}[z]$.
        \State Set $\mathcal{J}=\emptyset$.
        \For{ all $t$-degree factors $f(z)$ of $f_{\bm{c}}(z)$}
            \If{Algorithm \ref{Algo::POTENTIAL-MTG-POLYNOMIAL} $(f(z),\mathcal{I}, \{\alpha_i: i\in \mathcal{I}\}, G|_{\mathcal{I}})=1$}
                \State $\mathcal{J}=\mathcal{J}\cup \{f(z)\}$
            \EndIf
        \EndFor
    \end{algorithmic}
\end{algorithm}

We now analyze the likelihood of successful reconstruction of the unknown support elements of an MTG code for  $\ell=1$. 
Consider an MTG code $\Gamma(\mathcal{L}, g, t_1, h, \eta)$ defined over $\mathbb{F}_q$, with known MTG polynomial $g(z)$ and an index set 
$\mathcal{I}\subset [n]$ satisfying $|\mathcal{I}|=\epsilon>tm$. 
Given the subset $\{\alpha_i: i\in\mathcal{I}\}\subset\mathcal{L}$, the objective is to recover the points of 
$\mathcal{L}\setminus\{\alpha_i:i\in\mathcal{I}\}$. 
For this purpose, we consider a codeword $\bm{c}\in\Gamma(\mathcal{L}, g, t_1, h, \eta)$ such that its support outside $\mathcal{I}$ consists of a single element, say 
$\text{Supp}(\bm{c})\setminus\mathcal{I}=\{r\}$ for some $r\notin\mathcal{I}$. 
By definition of the code, we have $s_{\bm{c}}(z)\equiv0\bmod g(z)$, leading to the relation
\begin{equation}\label{eq:twisted_syndrome_equiv}
    f_{\bm{c}}(z)+\eta\left(\operatorname{U}^{h+1}(\bm{z})\cdot\bm{g}\right)\alpha_r^{t-1+t_1}g(\alpha_r)^{-1}\equiv (z-\alpha_r)^{-1}\pmod{g(z)},
\end{equation}
where $f_{\bm{c}}(z):=\underset{i\in \mathcal{I}}{\sum} \frac{c_i}{z-\alpha_i}+\eta\left(\operatorname{U}^{h+1}(\bm{z})\cdot \bm{g}\right)\underset{i\in \mathcal{I}}{\sum}c_i\alpha_i^{t-1+t_1}g(\alpha_i)^{-1}$ is fully determined by the known subset $\mathcal{I}$. 
Considering the MTG polynomial $g(z)$ irreducible, recovering $\alpha_r$ therefore requires inverting the left-hand side of~\eqref{eq:twisted_syndrome_equiv} modulo $g(z)$, 
but the scalar $\alpha_r^{t-1+t_1}g(\alpha_r)^{-1}$ remains unknown and can assume any value in $\mathbb{F}_{q^m}$. 
Consequently, 
%even if such a codeword $\bm{c}$ can be found, 
the probability of correctly identifying a new element 
$\alpha_r\notin\{\alpha_i:i\in\mathcal{I}\}$ is only $q^{-m}$. 
Since there are $(n-\epsilon)$ remaining points to recover, the overall probability of reconstructing the complete support set 
$\mathcal{L}$ from the shortened code is approximately $q^{-m(n-\epsilon)}$. 
This observation motivates the following probabilistic characterization of MTG polynomial recovery.

\begin{theorem}[Probability of Support Recovery in irreducible MTG Codes]
    \label{thm:support-recovery}
    Let $\Gamma(\mathcal{L}, g, t_1, h, \eta)$ be an MTG code with $\ell=1$ and MTG polynomial $g(z)$ being irreducible over $\mathbb{F}_{q^m}$. 
    Let $\mathcal{I}\subset [n]$ be an index set of size $\epsilon>tm$ with known elements 
    $\{\alpha_i : i\in\mathcal{I}\}\subset\mathcal{L}$. 
    % Even if a codeword $\bm{c}\in\Gamma(\mathcal{L}, g, t_1, h, \eta)$ satisfying $\text{Supp}(\bm{c})\setminus\mathcal{I}=\{r\}$ can be found, 
    The probability of correctly identifying a new 
    support element $\alpha_r\notin\{\alpha_i : i\in\mathcal{I}\}$ is $1/2^m$. 
    Hence, the probability of recovering all unknown points of $\mathcal{L}$ is approximately $q^{-m(n-\epsilon)}$. 
\end{theorem}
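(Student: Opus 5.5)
The argument formalizes the heuristic discussion preceding the statement, in three steps. First, I will set up the single-unknown relation. Since $\epsilon=|\mathcal{I}|>tm$ and the code has dimension at least $n-mt$, the shortened code $\Gamma(\mathcal{L},g,t_1,h,\eta)|_{\mathcal{I}^c\setminus\{r\}}$ is nonzero for each $r\notin\mathcal{I}$. Equivalently, restricting the parity-check matrix $H$ of Proposition~\ref{prop::parity_mat_mtgoppa}, expanded over $\mathbb{F}_q$ (at most $mt$ rows), to the $\epsilon+1$ columns $\mathcal{I}\cup\{r\}$ leaves a nontrivial kernel. So there is a codeword $\bm{c}$ with $\text{Supp}(\bm{c})\setminus\mathcal{I}=\{r\}$. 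Plugging $\bm{c}$ into Definition~\ref{def::MTGC} with $\ell=1$ gives exactly \eqref{eq:twisted_syndrome_equiv}. Here $f_{\bm{c}}(z)$ and $P(z):=\operatorname{U}^{h+1}(\bm{z})\cdot\bm{g}$ are known, since $g$, $\eta$, $t_1$, $h$ and $\{\alpha_i\}_{i\in\mathcal{I}}$ are known and $c_i$ for $i\in\mathcal{I}$ comes from the kernel computation up to the unknown scalar $c_r$. The unknowns are therefore $\alpha_r$ and the scalar $\lambda:=c_r\alpha_r^{t-1+t_1}g(\alpha_r)^{-1}\in\mathbb{F}_{q^m}$.

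Second, I will show that $\lambda$ decouples from $\alpha_r$ from the attacker's viewpoint. Because $g$ is irreducible, $\mathbb{F}_{q^m}[z]/\langle g\rangle$ is a field. For each guess $\lambda'\in\mathbb{F}_{q^m}$, the element $f_{\bm{c}}(z)+\eta\lambda'P(z)$ is nonzero generically, so it can be inverted modulo $g$. The relation then forces $z-\alpha_r\equiv (f_{\bm{c}}+\eta\lambda' P)^{-1}$. A guess is consistent only if this inverse reduces to a monic linear polynomial $z-\beta$ with $\beta\in\mathbb{F}_{q^m}$, and that $\beta$ is then the candidate for $\alpha_r$. The map $\lambda'\mapsto\beta$ is injective, since different $\lambda'$ give different residues. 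Hence the attacker gets a family of up to $q^m$ candidate pairs $(\lambda',\beta)$, and the true $\lambda$ is one element of this family. To get the stated $1/2^m$ (that is, $q^{-m}$ with $q=2$, which is the binary Niederreiter setting), I will argue that, viewing $\alpha_r$ (and hence $\lambda$) as uniformly distributed over the admissible values, the attacker has no information singling out the correct $\lambda'$. Any consistent candidate $\beta$ passes the check \eqref{eq:twisted_syndrome_equiv} with its own $\lambda'$, so a uniform guess succeeds with probability $q^{-m}$.

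Third, I will iterate this over the $n-\epsilon$ unknown positions. Assuming the guesses behave independently, in the same spirit as the independence hypothesis of Theorem~\ref{thm:mtg-prob}, the success probabilities multiply to give $q^{-m(n-\epsilon)}$. This is the stated ``approximately'' claim.

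The main obstacle is the second step: making rigorous that $\lambda$ cannot be pinned down from other data, for instance by pairing two codewords sharing the same $r$, or by checking that $\alpha_r^{t-1+t_1}g(\alpha_r)^{-1}c_r=\lambda'$ for the candidate $\beta$. The consistency check $\lambda'=c_r\beta^{t-1+t_1}g(\beta)^{-1}$ could in principle eliminate many candidates. I would handle this by treating $c_r\in\mathbb{F}_q^*$ as free and by stating explicitly the heuristic assumption that consistent candidates are spread uniformly. I would present this, like Theorem~\ref{thm:mtg-prob}, as a probabilistic model statement rather than a worst-case bound.
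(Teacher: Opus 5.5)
Your plan follows the same route as the paper. The paper's only justification is the paragraph before the theorem, which asserts that the scalar $\alpha_r^{t-1+t_1}g(\alpha_r)^{-1}$ ``can assume any value in $\mathbb{F}_{q^m}$''. The step you flag as the main obstacle is not a technicality that a uniformity assumption can absorb. That assumption is false, and the step fails.

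Relation \eqref{eq:twisted_syndrome_equiv} is just $H\bm c^{\mathsf T}=\bm 0$ over $\mathbb{F}_{q^m}$, for the matrix $H$ of Proposition~\ref{prop::parity_mat_mtgoppa}. The unknown twist scalar enters only row $h+1$. Every other row $j+1$ with $j\neq h$ reads
\[
c_r\,\alpha_r^{j}g(\alpha_r)^{-1}=-\sum_{i\in\mathcal I}c_i\,\alpha_i^{j}g(\alpha_i)^{-1}.
\]
The right-hand side involves only $g$, the known $\alpha_i$, and the codeword $\bm c$. You computed $\bm c$ from the public code, so $c_r$ is known as well, not a free scalar; it equals $1$ for $q=2$. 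For $t\ge 3$ there are two consecutive untwisted rows $j,j+1$. Dividing their equations gives $\alpha_r$ outright; if both sides vanish, then $\alpha_r=0$. This uses neither $\eta$ nor $t_1$, so in this model the twist gives no protection at all.

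The same conclusion follows in your polynomial language. For a wrong guess $\lambda'$, requiring $(f_{\bm c}+\eta\lambda'P)^{-1}\bmod g$ to have the form $z-\beta$ forces the coefficients of $z^2,\dots,z^{t-1}$ to vanish. That is $t-2$ conditions on a single parameter, so generically only the true $\lambda$ survives. The extra check $\lambda'=c_r\beta^{t-1+t_1}g(\beta)^{-1}$ removes any leftover candidates. In any case the attacker can test every $\lambda'$, or every $\beta$; that is only $2^{26}$ trials for the Corollary's parameters.

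So the consistent-candidate set in your second step is generically the singleton $\{\alpha_r\}$. Recovery succeeds with probability essentially $1$ in polynomial time, and irreducibility of $g$ plays no role. Your third step, multiplying $n-\epsilon$ independent factors of $q^{-m}$, then has nothing to multiply.

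No completion of your plan can prove the statement in the model where it is posed: $g$ and $\epsilon>mt$ indexed support points are known, and the public code is available. The paper's heuristic has exactly the same gap, so there is no missing ingredient to borrow from it. Your argument only shows that a blind guess of $\lambda$, ignoring the remaining coefficient equations, is correct with probability $q^{-m}$. That is not an attacker's success probability. Your reconciliation of $1/2^m$ with $q^{-m}$ via $q=2$ is fine.
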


\begin{remark}
    This probability decreases exponentially with both the field extension degree $m$ and the number of unknown 
    positions $(n-\epsilon)$. 
    Consequently, recovering the entire support set from a shortened code is computationally infeasible for 
    typical MTG parameters, reinforcing the cryptographic strength of the construction.
\end{remark}

\begin{corollary}
    For the Classic McEliece \cite{classicmceliece2020} parameter set $(n,t,m)=(8192,128,13)$, using  MTG codes with irreducible MTG polynomial over $\mathbb{F}_{2^{26}}$ in place of binary irreducible Goppa codes, the probability of recovering any unknown support element $\alpha_j$ given $\epsilon$ known support elements satisfies
    \begin{equation*}
        \Pr[\text{recover } \alpha_j]
        \le 2^{-26(8192-\epsilon)}.
    \end{equation*}
    For  $3328=2mt<\epsilon \ll n=8192$, this bound is negligible, implying that partial key
    recovery is computationally infeasible under the proposed parameters.
\end{corollary}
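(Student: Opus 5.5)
The corollary is a direct specialization of Theorem~\ref{thm:support-recovery}, so I would obtain it by substituting the Classic McEliece parameters into that theorem and checking that its hypotheses hold.

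\textbf{Step 1: fix the setting.} Take $q=2$ and $n=8192$. The MTG polynomial $g(z)$ is irreducible of degree $t=128$. The extension field is $\mathbb{F}_{2^{26}}$, so the twist $\eta$ lies in $\mathbb{F}_{2^{26}}\setminus\mathbb{F}_{2^{13}}$. This matches the construction of Theorem~\ref{thm::mtg_dist}, with support and $g$ over $\mathbb{F}_{s_0}=\mathbb{F}_{2^{13}}$. Hence the code still has minimum distance at least $t+1$ and remains decodable by Algorithm~\ref{alg:twisted-goppa-decoder}.

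\textbf{Step 2: identify the extension degree.} The relevant quantity is the degree of the field containing the twist coefficient and the unknown scalar $\alpha_r^{t-1+t_1}g(\alpha_r)^{-1}$ in \eqref{eq:twisted_syndrome_equiv}. That field is $\mathbb{F}_{2^{26}}$, so the effective value is $m=26$ rather than $13$. The same field also governs the bound on the dimension: there are at most $mt=3328$ parity rows over $\mathbb{F}_2$. This explains the threshold $\epsilon>3328=2\cdot 13\cdot 128$, which is exactly the hypothesis $\epsilon>tm$ of Theorem~\ref{thm:support-recovery} with $m=26$.

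\textbf{Step 3: apply Theorem~\ref{thm:support-recovery}.} Each unknown point is guessed correctly with probability at most $2^{-26}$. Following the heuristic argument preceding that theorem, the $n-\epsilon$ unknown points must each be determined through their own unknown scalar. Multiplying over these points gives the bound $2^{-26(8192-\epsilon)}$. For $\epsilon\ll 8192$ the exponent is at least of order $26\cdot 4864$, which is astronomically small and therefore negligible.

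\textbf{Main obstacle.} The difficulty is conceptual rather than computational. The displayed bound is stated for recovering a single element $\alpha_j$, but the product bound belongs to recovering all remaining points. I would handle this in two ways:
\begin{itemize}
\item state that the per-element bound $2^{-26}$ holds by Theorem~\ref{thm:support-recovery};
\item read the displayed inequality as the probability of full support recovery, which is at most the single-element probability raised to the power $n-\epsilon$ under the theorem's independence heuristic.
\end{itemize}
I would flag explicitly that the result relies on the same heuristic, namely that the scalar $\alpha_r^{t-1+t_1}g(\alpha_r)^{-1}$ is uniform in $\mathbb{F}_{2^{26}}$.
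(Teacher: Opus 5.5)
Your core route is the paper's. The corollary is just Theorem~\ref{thm:support-recovery} with $q=2$ and extension degree $26$, so $tm=26\cdot 128=3328=2\cdot 13\cdot 128$. You are also right that the display only makes sense as the bound for recovering \emph{all} $n-\epsilon$ unknown points, since the theorem gives $2^{-26}$ per point.

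The genuine problem is your Step~1: it is not merely superfluous but contradicts the hypotheses you use afterwards. A degree-$128$ polynomial with coefficients in $\mathbb{F}_{2^{13}}$ is never irreducible over $\mathbb{F}_{2^{26}}$. An irreducible polynomial of degree $d$ over $\mathbb{F}_Q$ splits into $\gcd(d,k)$ factors over $\mathbb{F}_{Q^k}$, and $\gcd(128,2)=2$. So the setting of Theorem~\ref{thm::mtg_dist} and the irreducibility hypothesis of Theorem~\ref{thm:support-recovery} cannot hold together here. It gets worse:
\begin{itemize}
\item With $\mathcal{L}$ and $g$ over $\mathbb{F}_{2^{13}}$, the scalar $\alpha_r^{t-1+t_1}g(\alpha_r)^{-1}$ lies in $\mathbb{F}_{2^{13}}$. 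It therefore cannot be uniform in $\mathbb{F}_{2^{26}}$, as your Step~3 needs; the paper's heuristic would then give at best $2^{-13}$ per point.
\item $n=8192=|\mathbb{F}_{2^{13}}|$ forces $\mathcal{L}=\mathbb{F}_{2^{13}}$, so the unknown points are already known as a set.
\item Algorithm~\ref{alg:twisted-goppa-decoder} assumes $\mathcal{L}\subseteq\mathbb{F}_{s_0}^*$, which has only $8191$ elements, so your decodability claim fails too.
\end{itemize}
Drop Step~1 entirely, since the corollary claims nothing about distance or decoding. Work instead with $\mathcal{L}\subseteq\mathbb{F}_{2^{26}}$ and $g$ irreducible over $\mathbb{F}_{2^{26}}$, exactly as Theorem~\ref{thm:support-recovery} requires.

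There is also a smaller slip. Since $\epsilon>3328$, you have $8192-\epsilon<4864$, so $26\cdot 4864$ bounds the exponent from \emph{above}, not below. Negligibility comes from $\epsilon\ll n$ keeping $n-\epsilon$ large; already $n-\epsilon\ge 5$ gives less than $2^{-128}$.

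You should also know that the heuristic you flag, which the paper shares, does not justify ``computationally infeasible''. The quantity $2^{-26(n-\epsilon)}$ is the chance of a blind simultaneous guess, but an attacker can treat the points one at a time. Write \eqref{eq:twisted_syndrome_equiv} as $(z-\alpha_r)^{-1}\equiv A(z)+\lambda B(z)\pmod{g(z)}$, with $A,B$ known and $\lambda=\alpha_r^{t-1+t_1}g(\alpha_r)^{-1}$. Multiplying by $z-\alpha_r$ gives a congruence that is linear in the three unknowns $\alpha_r,\lambda,\lambda\alpha_r$. That is $128$ scalar linear equations over $\mathbb{F}_{2^{26}}$, which generically determine $\alpha_r$. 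Even brute force over the $2^{26}$ candidates per point costs only about $2^{38}$ checks in total.

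So, as a formal specialization of Theorem~\ref{thm:support-recovery}, your derivation (minus Step~1) matches the paper. But neither your argument nor the paper's actually supports the security conclusion.
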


For the 128, 192, and 256-bit security levels, we propose the similar parameter sets as Classic McEliece with doubling the field size $\mathbb{F}_{2^m} $ to $\mathbb{F}_{2^{2m}}$, since the best-known attack is information-set decoding (ISD) targeting the recovery of the plaintext from the ciphertext and public key. The time and space complexities of various ISD algorithms are evaluated using the syndrome decoding estimator proposed by Esser and Bellini~\cite{esserBellini2021}.

%\section{Hulls and other properties}
%\subsection{Hulls, Cyclic structures, subcodes of MTRS?}

\subsection{Quasi-Cyclic Twisted Goppa code}\label{sec:4.2}
The Niederreiter PKC that utilizes twisted Goppa codes faces a similar drawback as the cryptosystem that relies on Goppa codes: a large public key size. To address this, we explore quasi-cyclic MTG codes to minimize the size of the public key.

Quasi-cyclic (QC) structure can compress the public key via circulant blocks, as studied for binary Goppa codes in~\cite{Bommier2000BinaryQG}. 
In the twisted setting, we propose QC constructions where the twist parameters and support sets
are chosen to retain decoding efficiency, thus we obtain compact keys. 
Throughout this section, we consider the MTG code $\Gamma(\mathcal{L},g,1,t-2,\eta)$ for $\ell=1$ and $q$ being a power of an odd prime $p$, with  parameters as (\ref{pc-mat}) and $t=1+p^s,$ for some $s\in \mathbb{N}.$

For a code $\mathcal{C}$ of length $n$ over $\mathbb{F}_q,$ a \textit{(linear) automorphism} of $\mathcal{C}$ is a linear operator on $\mathbb{F}_q^n$ that preserves (Hamming) weight and leaves the code globally invariant. Let $S_n$ denote the group of all permutations on the set $[n].$ For a permutation $\sigma\in S_n$, the transformation $\bm{c}=(c_1, \dots, c_n)\mapsto \sigma(\bm{c})=(c_{\sigma(1)}, \dots, c_{\sigma(n)})$ is an automorphism of $\mathcal{C}$. We will be interested only in such type of automorphism of a code. For $\sigma\in S_n$, define $\sigma(\mathcal{C})=\{\sigma(\bm{c}): \bm{c}\in \mathcal{C}\}.$ A \textit{permutation automorphism} of a code $\mathcal{C}$ is a permutation $\sigma$ such that $\sigma(\mathcal{C})=\mathcal{C}.$
\begin{definition}\cite{xia_qin2022}
    The \textit{permutation automorphism group} of a code $\mathcal{C}\subseteq \mathbb{F}_{q}^n$ is defined as
    \begin{equation}
        \text{PAut}(\mathcal{C}):=\{\sigma\in S_n: \sigma(\mathcal{C})=\mathcal{C}\}.
    \end{equation}
\end{definition}
It is easy to show that $\text{PAut}(\mathcal{C})$ is a subgroup of $S_n.$ A code $\mathcal{C}$ is called a \textit{symmetric code} if $\text{PAut}(\mathcal{C})$ is non-trivial.
\begin{remark}
    We will drop the term ``automorphism'' and simply refer to $\text{PAut}(\mathcal{C})$ as the permutation group of $\mathcal{C}$.
\end{remark}
\begin{definition}\cite{xia_qin2022}
    A code $\mathcal{C}\subseteq \mathbb{F}_{q}^n$ is called \textit{quasi-cyclic code of order (or, index) $\frac{n}{u}$} if $\text{PAut}(\mathcal{C})$ contains a subgroup isomorphic to $\mathbb{Z}/u\mathbb{Z},$ for some $u$ dividing $n.$
\end{definition}
Consider the subgroup
\begin{equation*}
    \text{AGL}_2(\mathbb{F}_{q^m}):=\left\{ \begin{pmatrix}
    a & b\\
    0 & 1
\end{pmatrix}: a\in \mathbb{F}_{q^m}^*, b\in \mathbb{F}_{q^m}\right\}\; \text{of}\;\; \text{GL}_2(\mathbb{F}_{q^m}).
\end{equation*}
The group $\text{AGL}_2(\mathbb{F}_{q^m})$ is called the \textit{affine linear group}. Every element $\begin{pmatrix}
    a & b\\
    0 & 1
\end{pmatrix}$ defines an \textit{affine automorphism}
$\sigma_{a,b}$ of $\mathbb{F}_{q^m}$ as follows:
\begin{equation*}
    \sigma_{a,b}: \mathbb{F}_{q^m}\to \mathbb{F}_{q^m}, x\mapsto ax+b.
\end{equation*}
Let $\text{Aff}(\mathbb{F}_{q^m}):=\{\sigma_{a,b}: a\in \mathbb{F}_{q^m}^*, b\in \mathbb{F}_{q^m}\}.$ Then $\text{Aff}(\mathbb{F}_{q^m})$ is a group under composition of maps. On the other hand, $\text{AGL}_2(\mathbb{F}_{q^m})$ is a group under matrix multiplication. If we identify each affine automorphism $\sigma_{a,b}$ with its affine transformation matrix $\begin{pmatrix}
    a & b\\
    0 & 1
\end{pmatrix},$ then there is a group isomorphism between the groups $\text{Aff}(\mathbb{F}_{q^m})$ and $\text{AGL}_2(\mathbb{F}_{q^m})$. Hence, we write
\begin{equation}
    \text{AGL}_2(\mathbb{F}_{q^m}):=\left\{ \sigma_{a,b}=\begin{pmatrix}
    a & b\\
    0 & 1
\end{pmatrix}: a\in \mathbb{F}_{q^m}^*, b\in \mathbb{F}_{q^m}\right\},
\end{equation}
where $\sigma_{a,b}\circ \sigma_{c,d}$ is equivalent to $\begin{pmatrix}
    a & b\\
    0 & 1
\end{pmatrix}$$\begin{pmatrix}
    c & d\\
    0 &  1
\end{pmatrix}$. Suppose $\text{ord}(\sigma_{a,b})$ denote the order of $\sigma_{a,b}$ in $\text{AGL}_2(\mathbb{F}_{q^m})$.
\begin{remark}
    $\text{ord}(\sigma_{-1,b})=2$ and if $b\neq 0,$ then $\text{ord}(\sigma_{1,b})=p.$
\end{remark}
\begin{definition}\cite{Sui2023}
    For a subset $\mathcal{L}\subseteq\mathbb{F}_{q^m},$ if $\sigma_{a,b}(\mathcal{L})=\mathcal{L},$ then we say that $\mathcal{L}$ is \textit{$\sigma_{a,b}$-invariant}. If $\sigma(\mathcal{L})=\mathcal{L}$, for all $\sigma$ in some subgroup $\mathcal{A}$ of $\text{AGL}_2(\mathbb{F}_{q^m}),$ then we say that $\mathcal{L}$ is $\mathcal{A}$-invariant.
\end{definition}
If $\cal{L}$ is $\sigma_{a,b}$-invariant with $|\mathcal{L}|=n$, then $\sigma_{a,b}$ induces a permutation $\sigma\in S_n$ as follows:
\begin{equation}
    \sigma_{a,b}(\alpha_i)=a\alpha_i+b=\alpha_{\sigma(i)}, \forall \alpha_i\in \cal{L}.
\end{equation}
We call $\sigma$ the \textit{permutation induced by $\sigma_{a,b}.$} Consequently, $\sigma$ induces a bijection from $\Gamma(\mathcal{L}, g(x))$ to $\Gamma(\mathcal{L}, g(ax+b))$ as follows:
{\small{
$$
(c_1, \dots, c_n)\in \Gamma(\mathcal{L}, g(x))\mapsto (c_{\sigma(1)}, \dots, c_{\sigma(n)})\in \Gamma(\mathcal{L}, g(ax+b)).
$$}}
We show a similar result for a twisted Goppa code, with a twist in the second-last position. The case of twist in the last position is considered in \cite{Sui2023}.
\begin{lemma}\label{lem: equivalence of Goppa codes}
If $a^2=1$ and $\mathcal{L}$ is $\sigma_{a,b}$-invariant, then the permutation $\sigma$ induced by $\sigma_{a,b}$ induces a bijection between $\Gamma(\mathcal{L},g(x),1,t-2,\eta)$ and $\Gamma(\mathcal{L},g(ax+b),1,t-2,\eta).$     
\end{lemma}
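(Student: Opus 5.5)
The plan is to work at the level of the parity-check matrix (\ref{pc-mat}) and show that, after permuting columns by $\sigma$, the parity-check matrix of $\Gamma(\mathcal{L},g(ax+b),1,t-2,\eta)$ has the same $\mathbb{F}_{q^m}$-row space as that of $\Gamma(\mathcal{L},g(x),1,t-2,\eta)$. Intersecting both null spaces with $\mathbb{F}_q^n$ then gives $\bm c\in\Gamma(\mathcal{L},g(x),1,t-2,\eta)\iff \sigma(\bm c)\in\Gamma(\mathcal{L},g(ax+b),1,t-2,\eta)$. Since $\sigma$ is a permutation, this map is a bijection.

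\emph{Step 1: rows as evaluations of a polynomial space.} By Proposition \ref{prop::parity_mat_mtgoppa} with $\ell=1$, $t_1=1$, $h=t-2$, every row of $H$ has the form $\bigl(f(\alpha_i)g(\alpha_i)^{-1}\bigr)_{i}$ with $f$ ranging over a basis of
\[
V=\operatorname{span}\{x^j: 0\le j\le t-1,\ j\ne t-2\}\oplus\operatorname{span}\{x^{t-2}+\eta x^{t}\}.
\]
Write $\tilde g(x)=g(ax+b)$. The rows of the parity-check matrix for $\tilde g$ applied to $\sigma(\bm c)$ give the sums $\sum_i c_{\sigma(i)}f(\alpha_i)\tilde g(\alpha_i)^{-1}$. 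Using $\tilde g(\alpha_i)=g(\alpha_{\sigma(i)})$ and, since $a^2=1$, $\alpha_i=a(\alpha_{\sigma(i)}-b)$, I will reindex by $k=\sigma(i)$. This yields $\sum_k c_k\, f\bigl(a(\alpha_k-b)\bigr)g(\alpha_k)^{-1}$. So the claim reduces to showing that the substitution $\phi:f(x)\mapsto f(a(x-b))$ maps $V$ into itself. Because $\phi$ is invertible and degree-preserving, and $V$ is finite-dimensional, this forces $\phi(V)=V$.

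\emph{Step 2: invariance of $V$.} Here the hypotheses $t=1+p^s$, $p$ odd and $a^2=1$ enter. For $j\le t-1$ with $j\ne t-2$, $\phi(x^j)$ has degree $\le t-1$, so it lies in $V$ provided its $x^{t-2}$-coefficient vanishes. This matters only for $j=t-1=p^s$. There the coefficient is $\binom{p^s}{1}(-b)a^{p^s}=0$ in characteristic $p$.

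For the twisted basis element, I will use the Frobenius identity
\[
(a(x-b))^{t}=a^{t}(x^{p^s}-b^{p^s})(x-b)=x^{t}-bx^{t-1}-b^{p^s}x+b^{p^s+1}.
\]
Here $a^t=1$ because $t$ is even and $a^2=1$. This expression has no $x^{t-2}$ term, since $t-2=p^s-1\ge 2$ as $p$ is odd. Similarly, $(a(x-b))^{t-2}=a^{t-2}x^{t-2}+(\text{lower terms})=x^{t-2}+(\text{lower terms})$, again using that $t-2$ is even. Hence $\phi(x^{t-2}+\eta x^{t})=(x^{t-2}+\eta x^t)+\sum_{j\le t-1,\,j\ne t-2}\beta_jx^j\in V$.

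\emph{Step 3: conclusion, and the expected obstacle.} From Step 2, each row of the $\tilde g$-parity-check matrix, with columns permuted by $\sigma$, is an $\mathbb{F}_{q^m}$-combination of the rows of $H$, and conversely by applying $\phi^{-1}$. So the two null spaces correspond under $\sigma$, as required. I expect the main obstacle to be the bookkeeping in Step 2. One must confirm that the coefficient of the twisted monomial $x^{t-2}$ comes out exactly $1$, so that it is not merely a multiple of $x^{t-2}$ with a different multiple of $\eta x^t$. This requires $a^{t-2}=a^{t}=1$, which uses both $a^2=1$ and the parity of $t$. One must also check that no stray $x^{t-2}$ term arises from $x^{t-1}$ or $x^{t}$; this is where $t-1=p^s$ is essential. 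I would verify the small edge cases ($p^s=3$, so $t=4$) explicitly to be safe.
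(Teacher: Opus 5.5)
Your proposal is correct and follows essentially the same route as the paper: both arguments show that the space $V$ of polynomials defining the rows of the parity-check matrix is invariant under an affine substitution. Both rely on $a^{t-2}=a^{t}$ and on the vanishing in characteristic $p$ of the middle binomial coefficients of $(x\pm b)^{p^s+1}$ (the paper's observation $p\mid\binom{t}{2}$), so that the two parity-check matrices are row-equivalent after permuting columns by $\sigma$. The differences are cosmetic: the paper performs explicit row operations with $x\mapsto ax+b$, whereas you reindex the syndrome sums and check invariance under the inverse substitution $x\mapsto a(x-b)$, which is equivalent because $\phi$ is invertible and $V$ is finite-dimensional.
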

\begin{proof}
    $\Gamma(\mathcal{L},g(x),1,t-2,\eta)$ has a parity-check matrix of the form
    \begin{equation*}
    \resizebox{0.5\linewidth}{!}{%
    $H=\left(\begin{array}{cccc}
			g(\alpha_1)^{-1} & g(\alpha_2)^{-1} & \cdots & g(\alpha_n)^{-1}\\
			\alpha_1g(\alpha_1)^{-1} & \alpha_2g(\alpha_2)^{-1} & \cdots & \alpha_ng(\alpha_n)^{-1} \\
			\vdots & \vdots & \cdots & \vdots \\
            
			\alpha_1^{t-3}g(\alpha_1)^{-1}&\alpha_2^{t-3}g(\alpha_2)^{-1}&\cdots&\alpha_n^{t-3}g(\alpha_n)^{-1}\\
			(\alpha_1^{t-2}+ \eta \alpha_1^{t})g(\alpha_1)^{-1} & (\alpha_2^{t-2}+ \eta \alpha_2^{t})g(\alpha_2)^{-1} & \cdots & (\alpha_n^{t-2}+ \eta \alpha_n^{t})g(\alpha_n)^{-1} \\
			\alpha_1^{t-1}g(\alpha_1)^{-1} & \alpha_2^{t-1}g(\alpha_2)^{-1} & \cdots &  \alpha_n^{t-1}g(\alpha_n)^{-1} \\
    \end{array}\right)$}
\end{equation*}
and $\Gamma(\mathcal{L},g(ax+b),1,t-2,\eta)$ has a parity-check matrix of the form
\begin{equation*}
    \resizebox{0.5\linewidth}{!}{%
    $\widetilde{H}=\left(\begin{array}{cccc}
			g(a\alpha_1+b)^{-1} & g(a\alpha_2+b)^{-1} & \cdots & g(a\alpha_n+b)^{-1}\\
			\alpha_1g(a\alpha_1+b)^{-1} & \alpha_2g(a\alpha_2+b)^{-1} & \cdots & \alpha_ng(a\alpha_n+b)^{-1} \\
			\vdots & \vdots & \cdots & \vdots \\
            
			\alpha_1^{t-3}g(a\alpha_1+b)^{-1}&\alpha_2^{t-3}g(a\alpha_2+b)^{-1}&\cdots&\alpha_n^{t-3}g(a\alpha_n+b)^{-1}\\
			(\alpha_1^{t-2}+ \eta \alpha_1^{t})g(a\alpha_1+b)^{-1} & (\alpha_2^{t-2}+ \eta \alpha_2^{t})g(a\alpha_2+b)^{-1} & \cdots & (\alpha_n^{t-2}+ \eta \alpha_n^{t})g(a\alpha_n+b)^{-1} \\
			\alpha_1^{t-1}g(a\alpha_1+b)^{-1} & \alpha_2^{t-1}g(a\alpha_2+b)^{-1} & \cdots &  \alpha_n^{t-1}g(a\alpha_n+b)^{-1}\\
    \end{array}\right)$.}
\end{equation*}
Using the elementary row operations on $\widetilde{H},$ it is easy to observe that for all $j\in\{1,2, \dots, t-2\},$ the $j^{\text{th}}$ row can be reduced to

$$((a\alpha_1+b)^j g(a\alpha_1+b)^{-1}, (a\alpha_2+b)^j g(a\alpha_2+b)^{-1}, \dots, (a\alpha_n+b)^j g(a\alpha_n+b)^{-1}).$$
%Since $p|\binom{t}{2},$ for all $1\le i\le n,$ $\binom{t}{2} (a\alpha_i)^{t-2}b^2=0.$ 
By elementary row operations $(t-1)^{\text{th}}$ row transforms to $(\cdots\;\; (a^{2}\alpha_i^{t-2}+ \eta a^{2}\alpha_i^{t})g(a\alpha_i+b)^{-1} \;\;\cdots).$ 
Since $p|\binom{t}{2},$ we have $\binom{t}{2}\eta b^2=0$. With $a^2=1+\binom{t}{2}\eta b^2$, the row further transforms to $$(\cdots\;\; [a^{t-2}\alpha_i^{t-2}+ \eta(a^{t}\alpha_i^{t}+\binom{t}{2}(a\alpha_i)^{t-2}b^2)]g(a\alpha_i+b)^{-1} \;\;\cdots)$$ and consequently can be transformed to $$(\cdots\;\; [(a\alpha_i+b)^{t-2}+ \eta\left(a\alpha_i+b)^t]g(a\alpha_i+b)^{-1} \;\;\cdots\right).$$ The last row can be easily transformed to $$(\cdots\;\; (a\alpha_i+b)^{t-1}g(a\alpha_i+b)^{-1} \;\;\cdots).$$ Consequently, $\widetilde{H}$ is row-equivalent to the matrix
%\begin{equation*}
    %\resizebox{0.5\linewidth}{!}{%
    %$\widetilde{H}=\left(\begin{array}{cccc}
			%g(a\alpha_1+b)^{-1} & g(a\alpha_2+b)^{-1} & \cdots & g(a\alpha_n+b)^{-1}\\
			%(a\alpha_1+b)g(a\alpha_1+b)^{-1} & (a\alpha_2+b)g(a\alpha_2+b)^{-1} & \cdots & (a\alpha_n+b)g(a\alpha_n+b)^{-1} \\
			%\vdots & \vdots & \cdots & \vdots \\
            
			%(a\alpha_1+b)^{t-3}g(a\alpha_1+b)^{-1}&(a\alpha_2+b)^{t-3}g(a\alpha_2+b)^{-1}&\cdots&(a\alpha_n+b)^{t-3}g(a\alpha_n+b)^{-1}\\
		%((a\alpha_1+b)^{t-2}+ \eta ((a\alpha_1+b)^{t}))g(a\alpha_1+b)^{-1} & ((a\alpha_2+b)^{t-2}+ \eta((a\alpha_2+b)^{t}))g(a\alpha_2+b)^{-1} & \cdots & ((a\alpha_n+b)^{t-2}+ \eta(a\alpha_n+b)^{t}))g(a\alpha_n+b)^{-1} \\
			%(a\alpha_1+b)^{t-1}g(a\alpha_1+b)^{-1} & (a\alpha_2+b)^{t-1}g(a\alpha_2+b)^{-1} & \cdots &  (a\alpha_n+b)^{t-1}g(a\alpha_n+b)^{-1}\\
    %\end{array}\right)$.}
%\end{equation*}
\begin{equation*}
    \widetilde{H}=\left(\begin{array}{ccc}
			\cdots & g(a\alpha_i+b)^{-1} & \cdots\\
			\cdots& (a\alpha_i+b)g(a\alpha_i+b)^{-1} & \cdots\\ 
			\cdots & \vdots & \cdots \\
            
			\cdots & (a\alpha_i+b)^{t-3}g(a\alpha_i+b)^{-1}& \cdots\\
		\cdots& ((a\alpha_i+b)^{t-2}+ \eta ((a\alpha_i+b)^{t}))g(a\alpha_i+b)^{-1} & \cdots\\ 
			\cdots & (a\alpha_i+b)^{t-1}g(a\alpha_i+b)^{-1} & \cdots\\
    \end{array}\right)
\end{equation*}
Hence, there is a bijection from $\Gamma(\mathcal{L},g(x),1,t-2,\eta)$ to $\Gamma(\mathcal{L},g(ax+b),1,t-2,\eta)$ given by
\begin{equation*}
    (c_1, c_2, \dots, c_n)\mapsto (c_{\sigma(1)}, c_{\sigma(2)}, \dots, c_{\sigma(n)}).
\end{equation*}
\end{proof}
\begin{theorem}\label{thm: induced permutation of twisted goppa}
    Let $a^2=1$ and $g(ax+b)=g(x).$ If $\mathcal{L}$ is $\sigma_{a,b}$-invariant, then the permutation $\sigma$ induced by $\sigma_{a,b}$ is in $\text{PAut}(\Gamma(\mathcal{L},g(x),1,t-2,\eta))$.
\end{theorem}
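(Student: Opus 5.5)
This follows almost directly from Lemma~\ref{lem: equivalence of Goppa codes}. I would apply that lemma with the given $a,b$: since $a^2=1$ and $\mathcal{L}$ is $\sigma_{a,b}$-invariant, the induced permutation $\sigma$ gives a bijection
\[
(c_1,\dots,c_n)\mapsto (c_{\sigma(1)},\dots,c_{\sigma(n)})
\]
from $\Gamma(\mathcal{L},g(x),1,t-2,\eta)$ onto $\Gamma(\mathcal{L},g(ax+b),1,t-2,\eta)$. The hypothesis $g(ax+b)=g(x)$ says the target code equals the source code, so the two codes in the lemma coincide. Hence $\sigma(\mathcal{C})=\mathcal{C}$ for $\mathcal{C}=\Gamma(\mathcal{L},g(x),1,t-2,\eta)$, which is exactly $\sigma\in\text{PAut}(\mathcal{C})$ by definition.

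I would make three points explicit in the write-up. First, $\sigma$ is a genuine element of $S_n$: $\sigma_{a,b}$ is injective on $\mathbb{F}_{q^m}$ and maps $\mathcal{L}$ into itself, so it permutes $\mathcal{L}$. Second, $g(ax+b)$ has degree $t$ and satisfies $g(a\alpha_i+b)=g(\alpha_{\sigma(i)})\neq 0$, so the second code is well defined. Third, the map is coordinate permutation, hence linear and weight preserving, so it is an automorphism in the sense used in the paper.

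The main obstacle is checking that Lemma~\ref{lem: equivalence of Goppa codes} really applies without extra conditions. Its proof rewrites the twisted row $(t-1)$ using the standing assumption $t=1+p^s$ with $q$ odd, so that $p\mid\binom{t}{2}$, together with $a^2=1$. Then $a^{t-2}=a^{t}=a^{2}\cdot a^{t-2}$ is consistent, because $t-2=p^s-1$ is even when $p$ is odd.

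I would also confirm the direction of the bijection. The columns of $\widetilde H$ are the columns of $H$ indexed through $\sigma$, since $a\alpha_i+b=\alpha_{\sigma(i)}$. So $\widetilde H$ equals $H$ with permuted columns, and $\widetilde H$ is row equivalent to a parity-check matrix of the twisted code for $g(ax+b)=g$. Comparing the two parity-check matrices then shows that a vector $\bm c$ satisfies $H\bm c^T=0$ if and only if its $\sigma$-permutation does, which closes the argument.
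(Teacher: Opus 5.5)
Your proposal is correct and follows the paper's proof, which is exactly the deduction $\sigma(\Gamma(\mathcal{L},g(x),1,t-2,\eta))=\Gamma(\mathcal{L},g(ax+b),1,t-2,\eta)=\Gamma(\mathcal{L},g(x),1,t-2,\eta)$ via Lemma~\ref{lem: equivalence of Goppa codes}. One small correction to your side remarks: $a^{t}=a^{t-2}$ needs only $a^2=1$, not the parity of $t-2$, and what the lemma actually uses from $t=1+p^s$ is that $p$ divides both $t-1$ and $\binom{t}{2}$, which removes the unwanted $\alpha_i^{t-2}$ terms coming from $(a\alpha_i+b)^{t-1}$ and $(a\alpha_i+b)^{t}$.
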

\begin{proof}
    By Lemma \ref{lem: equivalence of Goppa codes}, $\sigma(\Gamma(\mathcal{L},g(x),1,t-2,\eta))=\Gamma(\mathcal{L},g(ax+b),1,t-2,\eta)=\Gamma(\mathcal{L},g(x),1,t-2,\eta).$ This completes the proof.
\end{proof}
We now describe a subgroup of the permutation group of the twisted Goppa code $\Gamma(\mathcal{L},g(x),1,t-2,\eta)$ using Theorem \ref{thm: induced permutation of twisted goppa}. With abuse of notation, we denote the permutation induced by $\sigma_{a, b}\in\text{AGL}_2(\mathbb{F}_{q^m})$ with $\sigma_{a, b}$ (instead of $\sigma$).
%\begin{remark}
    %If $p$ is odd, then $\binom{t}{2}\eta b^2=0$ and hence the condition $a^2=1+\binom{t}{2}\eta b^2$ reduces to $a^2=1.$ If $p=2,$ then again $a^2=1+\binom{t}{2}\eta b^2$ reduces to $a^2=1,$ provided $s>1.$ For $p=2$ and $s=1,$ the choice of $t=3.$ We always assume that $t>3$ to avoid this case.
%\end{remark}
\begin{theorem}\label{Thm: Subgroup of PAut}
The set
\begin{equation*}
    \mathcal{A}=\bigg\{\sigma_{a,b}\in \text{AGL}_2(\mathbb{F}_{q^m}): \sigma_{a,b}(\mathcal{L})=\mathcal{L}, a^2=1\; \text{and}\;
    g(\sigma_{a, b}(x))=g(x)\bigg\}
\end{equation*} is a subgroup of $\text{PAut}(\Gamma(\mathcal{L},g(x),1,t-2,\eta))$.
\end{theorem}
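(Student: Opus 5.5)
The plan has two parts. First, the set $\mathcal{A}$ is a subgroup of $\text{AGL}_2(\mathbb{F}_{q^m})$. Second, the map sending $\sigma_{a,b}\in\mathcal{A}$ to the permutation of $[n]$ it induces on the indices of $\mathcal{L}$ is an injective group homomorphism whose image lies in $\text{PAut}(\Gamma(\mathcal{L},g(x),1,t-2,\eta))$. Under the paper's identification of $\sigma_{a,b}$ with its induced permutation, this gives the claim.

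\textbf{Subgroup property.} I would check the subgroup axioms directly.
\begin{enumerate}
\item The identity $\sigma_{1,0}$ lies in $\mathcal{A}$: it fixes $\mathcal{L}$, $1^2=1$, and $g(x)=g(x)$.
\item Closure. Take $\sigma_{a,b},\sigma_{c,d}\in\mathcal{A}$. Their composition is $\sigma_{a,b}\circ\sigma_{c,d}=\sigma_{ac,\,ad+b}$. Here $(ac)^2=a^2c^2=1$, the map preserves $\mathcal{L}$ as a composition of maps that preserve $\mathcal{L}$, and
$g(\sigma_{a,b}(\sigma_{c,d}(x)))=g(\sigma_{c,d}(x))=g(x)$, where the first equality substitutes $y=\sigma_{c,d}(x)$ into the polynomial identity $g(\sigma_{a,b}(y))=g(y)$.
\item Inverses. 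We have $\sigma_{a,b}^{-1}=\sigma_{a^{-1},-a^{-1}b}$ and $(a^{-1})^2=1$. Since $\sigma_{a,b}$ is a bijection of the finite set $\mathcal{L}$ onto itself, its inverse also preserves $\mathcal{L}$. Substituting $x\mapsto\sigma_{a,b}^{-1}(x)$ in $g(\sigma_{a,b}(x))=g(x)$ gives $g(x)=g(\sigma_{a,b}^{-1}(x))$.
\end{enumerate}
A simpler alternative is to note that $\mathcal{A}$ is finite, since $a=\pm1$ and $b$ ranges over a finite field, so closure under composition together with the identity already suffices.

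\textbf{Embedding into $\text{PAut}$.} For each $\sigma_{a,b}\in\mathcal{A}$ the hypotheses of Theorem~\ref{thm: induced permutation of twisted goppa} hold: $a^2=1$, $g(ax+b)=g(x)$, and $\mathcal{L}$ is $\sigma_{a,b}$-invariant. Hence the induced permutation lies in $\text{PAut}(\Gamma(\mathcal{L},g(x),1,t-2,\eta))$.

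The map $\sigma_{a,b}\mapsto\sigma$ is defined by $\sigma_{a,b}(\alpha_i)=\alpha_{\sigma(i)}$. It respects composition because the action on indices is just the action on points, transported through the bijection $i\leftrightarrow\alpha_i$. So $\sigma_{a,b}\circ\sigma_{c,d}$ induces the composite of the induced permutations, and the map is a homomorphism.

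\textbf{Main obstacle: injectivity.} Injectivity is the one point needing care, and it is needed because of the abuse of notation identifying $\sigma_{a,b}$ with its induced permutation. Suppose two affine maps induce the same permutation. Then they agree on all of $\mathcal{L}$. Two affine maps that agree on two distinct points coincide, so if $|\mathcal{L}|\ge 2$ the kernel is trivial and the map is injective. This holds since $n>t\ge 2$.

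I would also note the order convention for induced permutations: with $\bm c\mapsto(c_{\sigma(1)},\dots)$ the action is a right action, so the map may be an anti-homomorphism. Its image is nevertheless a subgroup isomorphic to $\mathcal{A}$, because the image of a group under an anti-homomorphism is still a subgroup. That settles the statement.
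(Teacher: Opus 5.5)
Your proposal is correct and follows the paper's proof. The paper likewise checks $\sigma_{1,0}\in\mathcal{A}$, closure via $\sigma_{a,b}\circ\sigma_{c,d}=\sigma_{ac,ad+b}$, and inverses via $\sigma_{a,b}^{-1}=\sigma_{a,-ab}$ (the same as your $\sigma_{a^{-1},-a^{-1}b}$, since $a^2=1$), and relies on Theorem~\ref{thm: induced permutation of twisted goppa} for membership in $\text{PAut}$; your homomorphism and injectivity discussion only makes explicit what the paper leaves to its abuse of notation. Note that $\sigma_{a,b}\mapsto\sigma$ is already an honest homomorphism into $S_n$, because $\sigma_{a,b}\circ\sigma_{c,d}$ induces the composite permutation, so your anti-homomorphism caveat is unnecessary.
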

%\begin{theorem}
    %The permutations induced by each element of the set $\mathcal{A}=\left\{\sigma_{1,b}\in \text{Aff}(\mathbb{F}_{q^m}): \sigma_{1,b}(\mathcal{L})=\mathcal{L}\; \text{and}\; g(\sigma_{1, b}(x))=g(x)\right\}\cup \left\{\sigma_{-1,b}\in \text{Aff}(\mathbb{F}_{q^m}): \sigma_{-1,b}(\mathcal{L})=\mathcal{L}\; \text{and}\; g(\sigma_{-1, b}(x))=g(x)\right\}$ forms a subgroup of $\text{PAut}(\Gamma(\mathcal{L},g(x),1,t-2,\eta))$.
%\end{theorem}
\begin{proof}
    Clearly, $\sigma_{1,0}\in \mathcal{A},$ showing that $\mathcal{A}$ is non-empty. For $a^2=1$ and $c^2=1,$ $\sigma_{a,b}, \sigma_{c,d}\in \mathcal{A},$ $\sigma_{a,b}\circ\sigma_{c,d}=\begin{pmatrix}
     ac & ad+b\\
     0 & 1
    \end{pmatrix}=\sigma_{ac, ad+b}.$
    Now, $(ac)^2=1$ and $\sigma_{ac, ad+b}(\mathcal{L})=\sigma_{a,b}\circ\sigma_{c,d}(\mathcal{L})=\sigma_{a,b}(\sigma_{c,d}(\mathcal{L}))=\mathcal{L}.$ Since $g(\sigma_{a,b}(x))=g(x),$ we obtain $g(\sigma_{ac, ad+b}(x))=g(\sigma_{a,b}(\sigma_{c,d}(x))=g(\sigma_{c,d}(x))=g(x).$ This shows that $\sigma_{a,b}\circ\sigma_{c,d}\in \mathcal{A}.$

    $\sigma_{a,b}^{-1}=\begin{pmatrix}
        a & -ab\\
        0 & 1
    \end{pmatrix}=\sigma_{a, -ab}.$ It is not difficult to show that $\sigma_{a, -ab}(\mathcal{L})=\mathcal{L}$ and $g(\sigma_{a, -ab}(x))=g(x).$ This completes the proof.
\end{proof}
\begin{remark}
    The statement of Theorem \ref{Thm: Subgroup of PAut} says that the collection of permutations induced by each element of the set $\mathcal{A}$ forms a subgroup of $\text{PAut}(\Gamma(\mathcal{L},g(x),1,t-2,\eta))$.
\end{remark}
For the subgroup $\mathcal{G}:=\left\{ \sigma_{a,b}: a^2=1, b\in \mathbb{F}_{q^m}\right\}$ of $\text{AGL}_2(\mathbb{F}_{q^m})$ of order $r$, consider the following action of $\mathcal{G}$ on $\mathbb{F}_{q^m}$ as follows:
\begin{align*}
    \mathcal{G}\times \mathbb{F}_{q^m}&\to \mathbb{F}_{q^m}\\
    (\sigma_{a,b}, x)&\mapsto\sigma_{a,b}(x)
\end{align*}
For $x\in \mathbb{F}_{q^m},$ the set $\mathcal{O}_{x}:=\{\sigma_{a,b}(x): \sigma_{a,b}\in \mathcal{G}\}$ is called the \textit{orbit of $x$ under $\mathcal{G}$} and the set $\mathcal{S}_x:=\{\sigma_{a,b}\in\mathcal{G}: \sigma_{a,b}(x)=x\}$ is called the \textit{stabilizer of $x$ in $\mathcal{G}$.} Suppose there are at least $u$ distinct $\mathcal{G}$-orbits of size $r$ and let $\mathcal{L}$ be the union of these orbits. Then $\sigma_{a,b}(\mathcal{L})=\mathcal{L},$ for all $\sigma_{a,b}\in \mathcal{G}.$ If $g(\mathcal{L})\neq 0$ and $g(\sigma_{a,b}(x))=g(x),$ for all $\sigma_{a,b}\in\mathcal{G},$ then we get a twisted Goppa code $\Gamma(\mathcal{L}, g(x), 1, t-2,\eta)$ of length $ur$ whose permutation group contains the subgroup $\mathcal{G}.$ We now give an explicit construction of quasi-cyclic twisted Goppa codes.

For $b\in \mathbb{F}_{q^m}^*,$ consider $\sigma_{-1,b}\in \mathcal{G}$ and let $\mathcal{G}_{-1,b}:=\langle \sigma_{-1,b}\rangle=\{\sigma_{1,0},\; \sigma_{-1,b}\}$ be a cyclic subgroup of $\mathcal{G}.$ Note that $\mathcal{O}_{\frac{b}{2}}=\{\frac{b}{2}\}.$ Also, the only fixed point of $\sigma_{-1,b}$ is $\frac{b}{2}.$ Therefore, for $x\in \mathbb{F}_{q^m}$ with $x\ne \frac{b}{2},$ $\mathcal{S}_x=\{\sigma_{1,0}\}.$ Consequently, $|\mathcal{O}_{x}|=2$ and $\mathcal{O}_x=\{x, b-x\}.$ Therefore, we have a partition of $\mathbb{F}_{q^m}$ as follows:
\begin{equation}
    \mathbb{F}_{q^m}=\mathcal{O}_{\frac{b}{2}}\cup \bigcup_{x\ne \frac{b}{2}} \mathcal{O}_x,
\end{equation}
where the union runs through distinct $\frac{q^m-1}{2}$ elements of $\mathbb{F}_{q^m}.$
\begin{theorem}\label{Thm: Quasi-cyclic of order L/2}
    Let $b\in \mathbb{F}_{q^m}^*$ and $\Gamma(\mathcal{L},g,1,t-2,\eta)$ be a twisted Goppa code with parameters as (\ref{pc-mat}), where the defining set $\mathcal{L}=\underset{x\in F}{\bigcup}\mathcal{O}_x,$ for some subset $F$ of $\mathbb{F}_{q^m}\setminus\{\frac{b}{2}\}$ and $\mathcal{O}_x$ is an $\mathcal{G}_{-1,b}$-orbit. If $g(b-x)=g(x),$ then the twisted Goppa code $\Gamma(\mathcal{L},g,1,t-2,\eta)$ is a quasi-cyclic code.
\end{theorem}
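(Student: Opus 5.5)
The plan is to apply Theorem~\ref{thm: induced permutation of twisted goppa} (equivalently Theorem~\ref{Thm: Subgroup of PAut}) to the single affine map $\sigma_{-1,b}$. Then we read off a cyclic subgroup of $\text{PAut}$ of the required kind from the orbit structure of $\mathcal{L}$.

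First, we check the three hypotheses for $\sigma_{-1,b}$. We have $a=-1$, so $a^2=1$. The assumption $g(b-x)=g(x)$ says exactly that $g(\sigma_{-1,b}(x))=g(x)$. Since $\mathcal{L}=\bigcup_{x\in F}\mathcal{O}_x$ is a union of $\mathcal{G}_{-1,b}$-orbits, we get $\sigma_{-1,b}(\mathcal{L})=\mathcal{L}$. Hence $\sigma_{-1,b}\in\mathcal{A}$, and Theorem~\ref{thm: induced permutation of twisted goppa} shows that the induced permutation $\sigma$ of $[n]$ lies in $\text{PAut}(\Gamma(\mathcal{L},g,1,t-2,\eta))$. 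The standing assumptions ($p$ odd, $t=1+p^s$, $g(\alpha_i)\ne 0$) are inherited from the setting of this section, so Lemma~\ref{lem: equivalence of Goppa codes} applies without change.

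Next, we identify the subgroup. The induced permutation $\sigma$ is an involution, because $\sigma_{-1,b}\circ\sigma_{-1,b}=\sigma_{1,0}$. It has no fixed points on $[n]$, because the only fixed point of $\sigma_{-1,b}$ in $\mathbb{F}_{q^m}$ is $b/2$, and $b/2\notin\mathcal{L}$ since $F\subseteq\mathbb{F}_{q^m}\setminus\{b/2\}$ and each $\mathcal{O}_x$ with $x\neq b/2$ is $\{x,b-x\}$, which avoids $b/2$. The orbits are pairwise disjoint of size $2$, so $n=|\mathcal{L}|=2|F|$ (after choosing $F$ with one representative per orbit). Thus $2\mid n$, and $\langle\sigma\rangle\cong\mathbb{Z}/2\mathbb{Z}$ is a subgroup of $\text{PAut}$. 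By the definition of quasi-cyclic code from \cite{xia_qin2022} with $u=2$, the code is quasi-cyclic of index $n/2$. To make this concrete, we order $\mathcal{L}$ as $x_1,b-x_1,x_2,b-x_2,\dots$, so that $\sigma$ swaps the coordinates within each consecutive pair.

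The main obstacle is bookkeeping rather than mathematics. We must make sure that $F$ is taken as a set of orbit representatives, so that $|\mathcal{L}|=2|F|$; otherwise the orbits $\mathcal{O}_x$ and $\mathcal{O}_{b-x}$ coincide. We must also confirm that $\sigma$ is nontrivial, which holds as soon as $\mathcal{L}\neq\emptyset$, since $\sigma$ moves every coordinate. Everything else follows directly from the earlier theorems.
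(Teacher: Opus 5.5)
Your proposal is correct and follows the paper's own argument. You check that $\sigma_{-1,b}$ satisfies $a^2=1$, $g(\sigma_{-1,b}(x))=g(x)$ and $\sigma_{-1,b}(\mathcal{L})=\mathcal{L}$, invoke Theorem~\ref{thm: induced permutation of twisted goppa} to place the induced permutation in $\text{PAut}$, and conclude quasi-cyclicity of index $|\mathcal{L}|/2$ from the resulting order-$2$ subgroup; your explicit checks that this permutation is a fixed-point-free involution (since $b/2\notin\mathcal{L}$) and that $n$ is even supply details the paper leaves implicit.
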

\begin{proof}
    By the above discussion, for $x\ne \frac{b}{2},$ $\sigma_{-1,b}(\mathcal{O}_x)=\mathcal{O}_x$ and consequently, $\sigma_{-1,b}(\mathcal{L})=\mathcal{L}.$  Hence, $\mathcal{G}_{-1,b}=\langle \sigma_{-1,b}\rangle$ is a subgroup of order $2$ of $\text{PAut}(\Gamma(\mathcal{L},g(x),1,t-2,\eta))$. Therefore, $\Gamma$ is a quasi-cyclic code of order $\frac{|\mathcal{L}|}{2}.$
\end{proof}
%{\color{red}For $b\in \mathbb{F}_{q^m}^*,$ consider $\sigma_{1,b}\in \mathcal{G}$ and let $\mathcal{G}_{1,b}:=\langle \sigma_{1,b}\rangle$ be a cyclic subgroup of $\mathcal{G}$ of order $p$. Note that for $x\in \mathbb{F}_{q^m}$, \textcolor{blue}{here stabilizer is subgroup of $\mathcal{G}_{1,b}$} $\mathcal{S}_x=\{\sigma_{1,0}\}.$ Therefore, $|\mathcal{O}_{x}|=p.$ In fact, for $x\in \mathbb{F}_{q^m},$ $\mathcal{O}_{x}=\{x, x+b, x+2b,\dots, x+(p-1)b\}.$ Therefore, we have a partition of $\mathbb{F}_{q^m}$ as follows:
%\begin{equation}
    %\mathbb{F}_{q^m}=\bigcup_{x} \mathcal{O}_x,
%\end{equation}
%where the union runs through distinct $\frac{q^m}{p}$ elements of $\mathbb{F}_{q^m}.$
%\begin{theorem}\label{Thm: Quasi-cyclic of order p}
    %Let $b\in \mathbb{F}_{q^m}^*$ and $\Gamma(\mathcal{L},g,1,t-2,\eta)$ be a twisted Goppa code with parameters as (\ref{pc-mat}), where the defining set $\mathcal{L}=\underset{x\in F}{\bigcup}\mathcal{O}_x,$ for some subset $F$ of $\mathbb{F}_{q^m}$ and $\mathcal{O}_x$ is an $\mathcal{G}_{1,b}$-orbit. If $g(x+b)=g(x).$ Then the twisted Goppa code $\Gamma(\mathcal{L},g,1,t-2,\eta)$ is a quasi-cyclic code of order $\frac{|\mathcal{L}|}{p}$.
%\end{theorem}
%\begin{proof}
    %Similar to Theorem \ref{Thm: Quasi-cyclic of order 2}.
%\end{proof}
To construct a quasi-cyclic twisted Goppa code, we have to find all possible forms of $g(x)$ such that $g(b-x)=g(x)$, for some $b\in \mathbb{F}_{q^m}^*$. We have the following result from \cite{Berger2000}:
\begin{theorem}
    For $b\in \mathbb{F}_{q^m}^*,$ all possible $g(x)\in\mathbb{F}_{q^m}[x]$ satisfying $g(b-x)=g(x)$ are $g(x)=f((x-\frac{b}{2})^2)$, for some polynomial $f(x)\in \mathbb{F}_{q^m}[x]$, if $\text{char} (\mathbb{F}_{q^m})$ is odd.
\end{theorem}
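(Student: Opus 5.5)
We change variables to center at the fixed point of $\sigma_{-1,b}$. Since $\operatorname{char}(\mathbb{F}_{q^m})$ is odd, $2$ is invertible, so set $y = x - \frac{b}{2}$ and $h(y) := g\left(y + \frac{b}{2}\right)$. Then $b - x = \frac{b}{2} - y$, so the condition $g(b-x)=g(x)$ becomes $h(-y) = h(y)$ as an identity in $\mathbb{F}_{q^m}[y]$. Both directions then reduce to a statement about even polynomials.

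For sufficiency, take $g(x) = f\bigl((x-\frac{b}{2})^2\bigr)$. Then
\[
g(b-x) = f\bigl((\tfrac{b}{2}-x)^2\bigr) = f\bigl((x-\tfrac{b}{2})^2\bigr) = g(x),
\]
so such polynomials do satisfy the symmetry.

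For necessity, write $h(y) = \sum_i c_i y^i$. The identity $h(-y) = h(y)$ gives $\bigl((-1)^i - 1\bigr)c_i = 0$ for every $i$. For odd $i$ this says $2c_i = 0$, and since $2 \neq 0$ in odd characteristic, $c_i = 0$. Hence $h(y) = \sum_j c_{2j}\, y^{2j} = f(y^2)$ with $f(x) := \sum_j c_{2j} x^j$. Substituting back $y = x - \frac{b}{2}$ gives $g(x) = f\bigl((x-\frac{b}{2})^2\bigr)$.

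The argument is routine. The only point needing care is the use of odd characteristic, which enters twice: to form $\frac{b}{2}$, and to conclude $c_i = 0$ from $2c_i = 0$. In characteristic $2$ the map $x \mapsto b - x$ is a translation, and the description fails. Equality of polynomials here is formal, coefficientwise, so no evaluation argument is needed even when the degree exceeds $q^m$.
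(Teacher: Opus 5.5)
Your proof is correct. Translating to the fixed point $\frac{b}{2}$ of $\sigma_{-1,b}$ turns the symmetry into $h(-y)=h(y)$, odd characteristic kills the odd-degree coefficients, and the one-line sufficiency check completes the characterization.

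The paper gives no argument of its own here. It imports the statement from \cite{Berger2000}, where it arises as a special case of describing the polynomials invariant under a given affine substitution of finite order. In that general picture, the invariants of $\langle\sigma_{-1,b}\rangle$ are the polynomials in the orbit product $x(b-x)=\frac{b^2}{4}-\bigl(x-\frac{b}{2}\bigr)^2$, which generates the same ring as $\bigl(x-\frac{b}{2}\bigr)^2$. The same viewpoint also covers subgroups your parity argument does not reach directly, such as the translations $\sigma_{1,b}$, whose invariants are the polynomials in $x^p-b^{p-1}x$.

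Your route gives a self-contained proof of exactly the case the paper uses, and it shows explicitly where odd characteristic is needed. Your remark that the identity must be read coefficientwise also matters. The polynomial $x^{q^m}-x$ satisfies $g(b-x)=g(x)$ as a function on $\mathbb{F}_{q^m}$, since both sides vanish identically. As a polynomial, however, it satisfies $g(b-x)=-g(x)$, and having odd degree it is not of the form $f\bigl((x-\frac{b}{2})^2\bigr)$.
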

\begin{example}
    Let $p=3$ and consider $\mathbb{F}_9=\mathbb{F}_3(z)$ with $z^2+2z+2=0$. Let $b=z\in\mathbb{F}_9^{*},~ t=4$ and $g(x)=(x-2z)^4=x^4 + zx^3 + (2z + 1)x + 2\in\mathbb{F}_9[x]$. Choose $\mathcal{L}=\{0, z\}\cup \{1, z+2\} \cup \{2, z+1\} \cup \{2z+1, 2z+2\}$ and $\eta= 2z+1$. Then all the conditions of Theorem \ref{Thm: Quasi-cyclic of order L/2} are satisfied and hence, $\Gamma(\mathcal{L},g,1,2,2z+1)$ is a quasi-cyclic MTG code over $\mathbb{F}_3$. By Magma \cite{MAGMA}, $\Gamma(\mathcal{L},g,1,2,2z+1)$ is an $[8, 3, 4]$-quasi cyclic code of order $4$ over $\mathbb{F}_3$.
\end{example}
%In Theorem \ref{Thm: Quasi-cyclic of order p}, if $\mathcal{L}=\mathcal{O}_x$ for some $x\in\mathbb{F}_{q^m}$, then we obtain a cyclic twisted Goppa code.
%\begin{theorem}
    %Let $b\in \mathbb{F}_{q^m}^*$ and $\Gamma(\mathcal{L},g,1,t-2,\eta)$ be a twisted Goppa code with parameters as (\ref{pc-mat}), where the defining set $\mathcal{L}=\mathcal{O}_x,$ for some $x\in \mathbb{F}_{q^m}$ and $\mathcal{O}_x$ is an $\mathcal{G}_{1,b}$-orbit. If $g(x)=f(x^p-b^{p-1}x)$ for some $f(x)\in\mathbb{F}_{q^m}$, then the twisted Goppa code $\Gamma(\mathcal{L},g,1,t-2,\eta)$ is a $[p, k, d]$-cyclic code, where $k\ge p-mt$ and $d\ge t$.
%\end{theorem}

One may now follow \cite{Sui2023} and \cite{gaborit2005} to see how to reduce the public key size using quasi-cyclic codes. In the context of reducing the public-key size of McEliece-type cryptosystems, we propose the use of quasi-cyclic MTG codes with $\ell = 1$, where $q$ is a power of an odd prime $p$ and $t = p^{s} + 1$. This choice not only achieves a reduction in key size but also provides resistance to partial key-recovery attacks, as discussed in Section IV-A. 
To achieve NIST post-quantum security Level~5 while simultaneously addressing the public key size constraint, we select the parameters $p=3$, $m=2\cdot 9=18$, $t=1+3^{5}=244$, and $n=8192$ in the proposed construction. 
For these parameters, let $b \in \mathbb{F}_{p^{2m}}^{*}$ and consider the MTG code $\Gamma(\mathcal{L}, g, 1, t-2, \eta)$. By Theorem~\ref{Thm: Quasi-cyclic of order L/2}, the resulting MTG code is quasi-cyclic of order $4096$ and has dimension at least $n - mt = 3800$. 
Moreover, the associated parity-check matrix can be expressed in a compact block form, analogous to the representation described in \cite{Sui2023}, which enables a substantial reduction in the public key size without compromising the targeted security level.

It is to note that quasi-cyclic variants of Goppa code-based public-key encryption have historically been shown to be vulnerable to structural key-recovery attacks exploiting large automorphism groups, most notably through folding techniques and algebraic cryptanalysis \cite{FOPT2010,FOPT2014,Otmani2011}. These attacks crucially rely on the existence of non-trivial permutation symmetries that allow the reduction of the public code to a significantly smaller folded code whose algebraic structure can be efficiently recovered. In contrast, the QC MTG code considered in this work is explicitly designed to break such symmetry. 
As a result, the public code does not admit a meaningful reduction to a smaller equivalent Goppa code, and the algebraic modeling techniques underlying the Faug\`{e}re–Otmani–Perret–Tillich attacks are no longer applicable.

From a security standpoint, the absence of exploitable automorphisms ensures that the public key cannot be compressed into a lower-dimensional algebraic instance, and key recovery reduces to generic decoding or codeword-finding attacks against random-looking Goppa codes. 
This places the security of the proposed scheme in line with that of the original McEliece cryptosystem \cite{McEliece1978}.

Although several works have demonstrated the existence of algebraic or statistical distinguishers for high-rate alternant and Goppa codes, these results are inherently confined to the very high-rate regime and rely on specific algebraic properties of alternant constructions \cite{Faugere2013,Bernstein2008}. In particular, known distinguishers exploit deviations in the Schur square or higher-order products that become detectable only when the code rate is close to one. For codes of moderate rate, such as the proposed construction with rate $3800/8192 \approx 0.46$, no such distinguishing techniques are known, even for structured families, and generic random linear codes of comparable parameters are believed to be indistinguishable from truly random instances. Moreover, since the proposed scheme does not rely on Goppa or alternant structure, it is not subject to the algebraic relations underlying high-rate McEliece distinguishers. Consequently, the security of the construction is governed solely by the hardness of generic decoding, for which no structural or rate-based attacks are known in this parameter range \cite{Finiasz2009,Peters2010}.

\section{CONCLUSIONS}
In this paper, we established the theoretical framework of \emph{multi-twisted Goppa (MTG) codes} by presenting them as the duals of subfield subcodes of generalized multi-twisted Reed–Solomon codes. 
We further provided an explicit, self-contained definition of MTG codes and analyzed their algebraic structure through the construction of corresponding parity-check matrices. 
This formulation enabled a detailed study of their error-correction potential and led to a methodology for designing MTG codes with error-correction capabilities suitable for code-based cryptographic applications. 

Building upon results available in the literature, we also provided a decoding process for MTG codes with a twist applied at an arbitrary position, which extends the decoding for single twisted Goppa codes. 
A rigorous complexity analysis demonstrated that the proposed decoding algorithm remains computationally efficient while preserving decoding accuracy. 
From a cryptographic perspective, we showed that partial key recovery attacks become computationally infeasible when MTG structures are employed, and that the use of quasi-cyclic variants of MTG codes 
(over fields of odd characteristic) provides an effective approach for mitigating public-key size issues without compromising security. 

Future research directions include the design and analysis of efficient decoding algorithms for multi-twisted Goppa codes with more than one twist, a deeper examination of their algebraic and combinatorial properties, 
and a comprehensive study of the resistance of MTG-based cryptosystems against information-set decoding and structural attacks. 
Further exploration of their quasi-cyclic realizations may also yield valuable insights for achieving compact and secure post-quantum code-based cryptosystems.

\section*{Declarations}
\subsection*{Conflict of Interest}
All authors declare that they have no conflict of interest.

\subsection*{Acknowledgments}
The second author acknowledges the support of the Council of Scientific and Industrial Research (CSIR) India, under grant no. 09/0086(13310)/2022-EMR-I.

\bibliographystyle{IEEEtran}
\bibliography{biblio}

\end{document}